%% file: acfk.tex
\documentclass[a4paper,journal, 10pt, romanappendices]{IEEEtran}
\usepackage{balance}
\usepackage{graphicx}
\usepackage{bm}
\usepackage{amsmath}
\usepackage{amsfonts}
\usepackage{amssymb}
\usepackage{color}
\usepackage{mathrsfs}
\usepackage{upgreek}
\usepackage[utf8]{inputenc}
\usepackage{tikz}
\usetikzlibrary{arrows, calc, decorations.markings, positioning}
\usepackage{cite}
\usepackage{booktabs}
\usepackage{multirow}
\usepackage{algorithm}
\usepackage{algorithmic}
\usepackage{makecell}
\usepackage[flushleft]{threeparttable}
\usepackage{tabularx}
\usepackage{stfloats}
 \usepackage[caption=false,font=footnotesize]{subfig}
\input{SupportDocuments/YFPreamble.tex}

\usepackage{hyperref}					
\usepackage{winsnotation}
\usepackage{flushend}
\hypersetup{colorlinks,
	linkcolor=purple,%
	anchorcolor=blue,
    citecolor=purple}

\newtheorem{definition}{Definition}    
\newtheorem{lemma}{Lemma}     
\newtheorem{proposition}{Proposition} 
\newtheorem{corollary}{Corollary} 
\newtheorem{remark}{Remark}    

\newcounter{MYtempeqncnt}

\title{Auto-correlation Function Keying}
\author{Weijiang Zhao, Yifeng Xiong, \IEEEmembership{Member, IEEE}, Fan Liu, \IEEEmembership{Senior Member, IEEE}, Shuangyang Li, \IEEEmembership{Member, IEEE}, Shi Jin, \IEEEmembership{Fellow, IEEE}, and Jianhua Zhang, \IEEEmembership{Fellow, IEEE}
\thanks{This work was supported in part by the National Natural Science Foundation of China (NSFC) under Grant U25B2009, Grant 62522107, and Grant 62331023. (\textit{Corresponding authors: Yifeng Xiong; Fan Liu.})

W. Zhao and Y. Xiong are with the Department of Communication Engineering, Beijing University of Posts and Telecommunications, Beijing 100876, China (e-mail: zhaoweijiang@bupt.edu.cn, yifengxiong@bupt.edu.cn).

F. Liu and S. Jin are with the National Mobile Communications Research Laboratory, Southeast University, Nanjing 210096, China (email: fan.liu@seu.edu.cn, jinshi@seu.edu.cn).

Shuangyang Li is with the Faculty of Electrical Engineering and Computer Science, Technical University of Berlin, 10587 Berlin, Germany (e-mail: shuangyang.li@tu-berlin.de).

Jianhua Zhang is with the State Key Laboratory of Networking and Switching Technology, Beijing University of Posts and Telecommunications, Beijing 100876, China (e-mail: jhzhang@bupt.edu.cn).
}}

\begin{document}

\markboth{}{}

\maketitle

\begin{abstract}
We propose ACFK: Auto-correlation Function Keying, a new \ac{isac} waveform that carries random communication data while directly controlling the \ac{psl} of the \ac{pacf}. In contrast to existing works aiming at controlling the \ac{esl}, which fails to characterize realization-specific sidelobe behaviors, we formulate a mutual information maximization problem under PSL and power constraints, and show that a continuous ACF-domain uniform distribution is asymptotically optimal at high \ac{snr} over quasi-static frequency-flat channels.

Motivated by this principle, ACFK maps finite-constellation symbols onto \ac{acf}-domain sidelobes and uses independent phase symbols to exploit the remaining degrees of freedom. The resulting waveform enables exact control of the nominal \ac{pacf}, which coincides with the actual \ac{pacf} when the power spectral non-negativity condition is satisfied. We further analyze the non-negativity violation probability and bound the corresponding \ac{pslr} degradation. A reference \ac{isac} transceiver and its high-\ac{snr} approximate \ac{ber} analysis are also provided. Numerical results show that \acs{acfk} achieves stronger \ac{pslr} control, and improved weak-target detection performance, than a generalized \ac{pas} baseline at similar data rate and \ac{ber}.
\end{abstract}

\begin{IEEEkeywords}
\Acl{isac}, modulation scheme design, \acl{psl}, mutual information maximization,  \acl{pacf} design.
\end{IEEEkeywords}

%
\IEEEpeerreviewmaketitle

\section{Introduction}
\IEEEPARstart{E}{nvisioned} as a transformative paradigm, \acf{isac} has been formally designated by the International Telecommunication Union as one of the six core usage scenarios for \ac{6g} wireless networks \cite{ITU2023}. By coordinating wireless resources such as time, frequency, beam, and power across sensing and communication functionalities over a unified hardware platform, \ac{isac} effectively empowers numerous emerging applications, including autonomous driving, digital twins and low-altitude economy \cite{Chafii2023CST, saad2019vision, 9606831}. One of the fundamental challenges in \ac{isac} lies in designing high-performance dual-functional waveforms that are capable of supporting both accurate target sensing and reliable information delivery tasks in a resource-efficient manner \cite{9737357,9540344,10770016}. Due to low implementation costs and seamless compatibility with legacy cellular infrastructure, the communication-centric designs are viewed as a promising paradigm for future \ac{6g} \ac{isac} networks \cite{10012421}.

In current communication-centric \ac{isac} signaling frameworks, sensing functionality largely relies on reference signals \cite{9921271,10561589,9746355}, such as \ac{csirs} and \ac{dmrs}, which are well-designed deterministic sequences with desirable correlation properties. However, such reference signals typically account for only around $10\%$ of the total time-frequency resources \cite{3gpp_ts38211_2022}, which is insufficient to fulfill the increasing requirements for high-precision sensing. To improve the sensing performance and resource efficiency, a prospective solution is to reuse the communication data payloads, which occupy the remaining $90\%$ of time-frequency resources, for sensing tasks. Against this background, one of the fundamental performance-limiting mechanisms is identified as the \ac{drt} \cite{10147248,10471902,liu2023deterministic}. Specifically, \ac{drt} reveals that deterministic signals are beneficial to achieve reliable sensing performance, whereas the intrinsic randomness of data payloads causes degradation in sensing performance. Therefore, it is crucial to evaluate the performance of sensing with data payload signals and develop techniques for adjusting the \ac{drt}.

Motivated by the \ac{drt}, sensing with random communication payloads has recently attracted considerable attention \cite{11206742, 11087656, iceberg, du2024reshaping, zhao2026input, 11417988, 11204821, 11322556, zhang2025discrete}. From a theoretical perspective, the central problem is to characterize the joint achievable region of communication rate and a suitably defined sensing performance metric \cite{9785593, 9731514, 11203006, 10935713}. For the specific task of target detection and ranging, the sidelobes of the \acf{acf} provide a direct measure of range-domain inter-target interference: high sidelobes generated by a strong target may mask nearby weak targets and introduce ambiguity in delay estimation \cite{richards2005fundamentals}. Since payload-bearing waveforms are random, their \acfp{pacf} are also random. This has motivated most existing studies to adopt the \acf{esl}, an average measure of the random \ac{pacf} sidelobe energy, as the sensing metric. In this context, a closed-form expression of the average squared \ac{pacf} of random communication signals is derived in \cite{11087656}, which reveals that the \ac{esl} is related to the orthonormal modulation bases and the kurtosis of constellations. It was furthen proven that, among all orthogonal linear modulation schemes, \ac{ofdm} achieves the lowest average ranging sidelobe level for \ac{iid} \ac{qam} or \ac{psk} symbols. In terms of \ac{drt}-adjusting techniques, \cite{du2024reshaping} proposed a \ac{pcs} approach for \ac{ofdm} systems over \ac{awgn} channels to strike a scalable tradeoff between the \ac{esl} and achievable rate by adjusting the value of kurtosis of constellations. Overall, existing studies have mainly established theoretical analyses \cite{10147248, 11087656, zhang2025discrete} and \ac{drt}-adjusting techniques around \ac{esl} control \cite{11206742, 11087656, iceberg, du2024reshaping, zhao2026input}.

Although the \ac{esl} characterizes the average behavior of the random \ac{pacf}, it cannot fully reflect variability in the sidelobe level \cite{11087656}. In particular, even under identical \ac{esl} values different single \ac{pacf} realizations may exhibit severe sidelobe level fluctuations, which may markedly affect target detection stability. To overcome this limitation, one could alternatively control the \acf{psl} of random \acp{pacf}. To elaborate, the \ac{psl} is a vital metric for ensuring stable and robust weak-target detection performance, since it describes the worst-case sidelobe level and reflects the strongest sidelobe interference imposed on the weaker targets \cite{richards2005fundamentals}.

The precise control of the \ac{psl} of deterministic or pseudo-random signals has been extensively investigated in the sequence design literature \cite{1055219, 1056116, fan1999class, tang2000lower, 4609004, 6484164, 7362231, 7967829}. Sequences with desirable periodic auto-correlation and cross-correlation properties play an important role in synchronization \cite{11367063}, channel estimation \cite{11320243}, interference mitigation \cite{6503922} and radar sensing \cite{6237581}. An ideal sequence is expected to feature both impulse-like auto-correlation functions and zero cross-correlation functions in a sequence set. However, given the sequence period and the size of sequence sets, the sequence design involves a fundamental tradeoff between the maximal auto-correlation and cross-correlation magnitudes, as given by the Welch bound \cite{1055219} and the Sarwate bound \cite{1056116}. In light of this, \cite{fan1999class} proposed the concepts of \ac{zcz} and \ac{lcz}. \ac{zcz}$/$\ac{lcz} sequences achieve ``locally''  ideal or nearly ideal correlation properties only at correlation lags within a specified window. The cardinality of \ac{zcz}$/$\ac{lcz} sequence sets is strictly limited by the Tang-Fan-Matsufuji bound \cite{tang2000lower}.

Although a large body of sequence construction methods have been proposed in \cite{4609004, 6484164, 7362231, 7967829, 11367063, 11320243}, the quantity of pseudo-random sequences with favorable correlation properties is considerably limited compared with the communication codewords generated by conventional modulation schemes. Thus it is infeasible to directly use these sequence sets as communication codebooks in practical communication systems. This further highlights the compelling need for precise control of the \ac{psl} of the \ac{pacf} of random communication signals in \ac{isac} systems. In particular, this gives rise to a constrained information-bearing waveform design problem: the transmitted payload should preserve a high communication rate while enforcing stringent constraints on the random \ac{pacf} sidelobes.

Therefore, in this paper, rather than minimizing \ac{psl} for a deterministic sequence, we seek input distributions and modulation structures that maximize mutual information under \ac{psl} and power constraints. Motivated by the structure of high-\ac{snr} capacity-achieving input distribution, we propose a payload-bearing modulation architecture, termed as \emph{\acf{acfk}}. Specifically, by directly embedding data symbols onto the \ac{acf}-domain sidelobes, \ac{acfk} enables exact control of the nominal \ac{pacf} and high-probability control of the actual \ac{pacf}. For clarity, our main contributions are summarized as follows:
\begin{itemize}
\item We first formulate our signal design problem as a mutual information maximization problem under the power budget constraint as well as the \ac{psl} constraints. We then characterize the high-\ac{snr} asymptotically capacity-achieving input distributions under \ac{psl} constraints over quasi-static channels. In particular, the continuous \ac{acf}-domain uniform construction provides an asymptotically optimal high-\ac{snr} design principle under \ac{psl} constraints over quasi-static frequency-flat channels.
\item We propose \ac{acfk}, which can be viewed as a structured implementation of this asymptotically optimal high-\ac{snr} design principle in practical finite-constellation systems. From the sensing perspective, although unable to provide precise control of the \ac{pacf} due to potential violation of a power spectral non-negativity constraint, \ac{acfk} is capable of achieving a high-probability control of the \ac{pacf}. Regarding the impact of non-negativity violation, we derive bounds for the non-negativity violation probability and \acf{pslr}. In addition, we further extend our scheme to coherent processing over multiple channel uses for a controllable \ac{psl} reduction.
\item We provide a reference design for \ac{isac} transceivers employing \ac{acfk}. From a practical perspective, \ac{acfk} can be implemented by modifying and amalgamating the existing \ac{ofdm} and \ac{sc} modulation schemes. Under this transceiver design, we derive analytical expressions for the approximate \ac{ber} in the high-\ac{snr} regime over quasi-static channels.
\end{itemize}

The remainder of this paper is organized as follows. Sec.~\ref{sec:model} presents the system model and sensing metrics. Sec.~\ref{sec:highsnr_input} derives the high-\ac{snr} capacity-achieving input distributions under \ac{psl} and power budget constraints. Sec.~\ref{sec:acfk} introduces \ac{acfk} and analyzes its nominal and actual \ac{pacf}, with an \ac{afk} extension for multiple channel uses. Sec.~\ref{sec:architecture} presents the reference \ac{acfk} transceiver design. Sec.~\ref{sec:performance} analyzes the \ac{ber} of the reference receiver. Sec.~\ref{sec:simulations} provides numerical results, and Sec.~\ref{sec:conclusion} concludes the paper.

\textit{Notations:} Throughout this paper, $\rv{a}$, $\RV{a}$, $\RM{A}$ and $\RS{A}$ represent random variables (scalars), random vectors, random matrices and random sets, respectively; their realizations, or the corresponding deterministic quantities, are denoted by $a$, $\V{a}$, $\M{A}$, and $\Set{A}$, respectively. The $m$-dimensional vector of zeros (resp. ones) is denoted by $\V{0}_{m}$ (resp. $\V{1}_{m}$). The $m$-by-$m$ identity matrix is denoted by $\M{I}_{m}$. These subscripts are omitted if they are clear from the context. The notation $[\M{A}]_{i,j}$ denotes the $(i,j)$-th entry of matrix $\M{A}$. $[\V{a}]_{i}$ denotes the $i$-th entry of vector $\V{a}$. $[\V{x}; \V{y}]$ and $[\V{x}, \V{y}]$ denote vertical and horizontal concatenation of vectors, respectively. The Hadamard product between matrices (or vectors) $\M{A}$ and $\M{B}$ is denoted by $\M{A}\odot\M{B}$. We denote by $\V{a} \circledast \V{b}$  the circular convolution between vectors $\V{a}$ and $\V{b}$. The notation ${\rm circ}(\V{a})$ denotes the circulant matrix generated by $\V{a}$. The notation $|\V{x}|$ denotes the vector containing the entrywise magnitude of $\V{x}$, while $|\V{x}|^2$ and $\sqrt{\V{x}}$ denote the entrywise squared magnitudes and square roots of magnitudes of $\V{x}$, respectively. ${\rm{diag}}(\V{x})$ denotes the diagonal matrix constructed by placing $\V{x}$ on its main diagonal. $|\RS{A}|$ denotes the cardinality of the set $\RS{A}$. $\langle \cdot \rangle_N$ denotes the modulo $N$ operation. $\Set{C}(\Set{S})$ denotes the space of continuous functions over $\Set{S}$, and $\mathbb{I}_{\Set{S}}(\cdot)$ denotes the indicator function of set $\Set{S}$, which takes the value of $1$ over $\Set{S}$ while taking the value of zero outside $\Set{S}$. $\mathbb{T}^n$ denotes the $n$-dimensional torus.

\section{System Model} \label{sec:model}
In this paper, we consider target range estimation under a monostatic \ac{isac} system. The \ac{isac} \ac{tx} emits an  \ac{isac} signal modulated with random communication symbols, which is captured at a communication \ac{rx}, and is simultaneously reflected back to the sensing \ac{rx} by various targets at different ranges. The sensing \ac{rx} collocated with the \ac{isac} \ac{tx} performs \ac{mf} to detect these targets and estimate their delay parameters, based on the fully known random \ac{isac} signal.

\subsection{Channel Models}
We model both the communication and sensing channels as quasi-static multipath channels containing $I$ paths and $Q$ targets, respectively. Their time-domain impulse responses can be expressed as follows
\begin{equation}
h_{\rm c}(\tau) = \sum_{i=1}^I\alpha_i\delta(\tau-\tau_{c,i}),\quad h_{\rm s}(\tau) = \sum_{q=1}^Q\gamma_q\delta(\tau-\tau_{s,q}),
\end{equation}
where $\delta(\tau)$ is the Dirac delta function, $\alpha_i$ and $\gamma_q$ denote the complex amplitudes of the $i$-th communication path and the $q$-th sensing target, respectively, with $\tau_{c,i}$ and $\tau_{s,q}$ representing the corresponding time delay.

Then, we investigate an \ac{isac} system with $N$ subcarriers, and focus on the transmission of a single data stream. A \ac{cp} is added to the \ac{isac} signal for eliminating \ac{isi} caused by multipath effect. Without loss of generality, we assume that the length of \ac{cp} is larger than the maximum delay of the communication paths and sensing targets in this paper. After the \ac{cp} removal at the receiver, the frequency-domain representations of the communication and sensing channels can be expressed as
\begin{subequations}
\begin{align}
\RV{y}_{\rm{c}} &= \V{h}_{\rm{c}} \odot \RV{x} + \RV{n}_{\rm{c}}, \label{com} \\
\RV{y}_{\rm{s}} &= \V{h}_{\rm{s}} \odot \RV{x} + \RV{n}_{\rm{s}}, \label{sen}
\end{align}
\end{subequations}
where $\RV{x}\in\mathbb{C}^N $ denotes the transmitted dual-functional waveform emitted from the \ac{isac} \ac{tx} for performing both communication and sensing tasks. $\RV{y}_{\rm{c}}\in\mathbb{C}^N$ and $\RV{y}_{\rm{s}}\in\mathbb{C}^N$ denote the received communication and sensing signals, respectively. $\V{h}_{\rm{c}}\in\mathbb{C}^N$ and $\V{h}_{\rm{s}}\in\mathbb{C}^N$ denote the frequency-domain representations of the communication channel and the target response vector, which correspond to the \ac{dft} of the time-domain impulse responses $h_{\rm{c}}(\tau)$ and $h_{\rm{s}}(\tau)$, respectively. We consider a block fading model for both the communication channel and the target response vector. Specifically, it is assumed that $\V{h}_{\rm{c}}$ and $\V{h}_{\rm{s}}$ remain constant within a transmission block and vary in an \ac{iid} manner across different transmission blocks. We assume that the communication channel $\V{h}_{\rm{c}}$ is perfectly known due to channel estimation conducted prior to the transmission signal design. $\RV{n}_{\rm{c}}\in\mathbb{C}^N $ denotes the received communication noise, modelled as \ac{iid} circularly symmetric complex Gaussian distributed random variables with zero mean and variance $\sigma_c^2$, namely $\RV{n}_{\rm c}\sim\mathcal{CN}(\V{0},\sigma_c^2\M{I}_{N})$. Similarly, we also model the sensing noise $\RV{n}_{\rm s}\in\mathbb{C}^N$ as $\RV{n}_{\rm s}\sim\mathcal{CN}$($\V{0}, \sigma_s^2 \M{I}_{N}$).

\subsection{Sensing Performance Metrics}
As for the sensing performance metric, it is well-known that frequency-domain wireless resources are related to ranging accuracy \cite{levanon2004radar}. In light of this, we employ the normalized \ac{pacf} of the time-domain transmitted signal, which is also referred to as the zero-Doppler slice of \ac{af}, to characterize the sensing performance, in particular for the matched filtering process at the sensing \ac{rx}. By leveraging the Wiener-Khinchin theorem, we can express the \ac{pacf} as
\begin{equation}\label{pacf_def}
\RV{r}_{\RV{x}} = \frac{1}{\sqrt{N}}\M{F}_N^{\rm{H}}|\RV{x}|^2,
\end{equation}
where $\M{F}_N=\left[\V{f}_1,\V{f}_2,\cdots,\V{f}_N\right]$ denotes the $N$-point unitary \ac{dft} matrix.

Due to the intrinsic randomness of communication symbols, the \ac{pacf} is random, and hence it is insufficient to evaluate sensing performance only based on a single realization of the data sequence. In this context, \ac{esl} has been investigated in the literature \cite{11206742, 11087656,iceberg}. A lower \ac{esl} implies improved average resolvability of weak targets in the presence of interference from neighboring stronger targets \cite{11087656}. However, \ac{esl} itself cannot fully capture realization-specific sidelobe behaviors. Even under identical \ac{esl} values, different \ac{pacf} realizations may exhibit significantly distinct sidelobe level patterns, which may ultimately affect target detection stability.

To address this limitation, we propose to use \ac{psl} to evaluate sensing performance. \ac{psl} characterizes the worst-case sidelobe level, which is particularly informative for the widely applied class of \ac{cfar} detectors. To elaborate, lowering the \ac{psl} of the \ac{pacf} would reduce the occurrence probability of spurious peaks, thereby reducing the false alarm probability under the same detection probability.

Upon denoting the set of all the sidelobe bins of \ac{pacf} as $\Set{S}_{\rm ACF}=\{2,3,\dotsc,N\}$, the aforementioned sensing metrics are expressed as follows
\begin{subequations}\label{define_psl}
\begin{align}
[\ac{esl}]_i &= \mathbb{E}\left\{|[\RV{r}_{\RV{x}}]_i|^2\right\},\quad \forall i \in \Set{S}_{\rm ACF}, \\
\ac{psl} &= \max_{i \in \Set{S}_{\rm ACF}}\left\{|[\RV{r}_{\RV{x}}]_i|^2\right\}. \label{def_psl}
\end{align}
\end{subequations}
Since the mainlobe level is in general not constant, the \ac{pslr} given by
\begin{equation}
\ac{pslr} = \frac{\max_{i \in \Set{S}_{\rm ACF}}\left\{|[\RV{r}_{\RV{x}}]_i|^2\right\}}{|[\RV{r}_{\RV{x}}]_1|^2}
\end{equation}
could serve as a normalized version of \ac{psl}, and hence will also be applied in the following sections.

Before delving into detailed analyses, we first provide the intuition behind the proposed modulation scheme. According to the definition of \ac{psl} in \eqref{def_psl}, controlling the \ac{psl} essentially requires controlling the support of the \ac{pacf}. This observation motivates us to directly modulate communication symbols onto the \ac{pacf}. In principle, such a modulation scheme can be realized through the ``inverse mapping'' associated with \eqref{pacf_def}, given by
\begin{equation}
\RV{x}=\sqrt{\sqrt{N}\M{F}_N\RV{r}_{\RV{x}}}.
\end{equation}
However, this modulation strategy leaves several issues to be addressed. First, strictly speaking, \eqref{pacf_def} does not define an invertible mapping, since the magnitude operation discards the phase degrees of freedom. As a result, the above ``inverse mapping'' only exploits the magnitude of the signal representation in the frequency domain, while leaving the phase degrees of freedom unused. How this \ac{pacf}-domain modulation should be combined with modulation over the phase degrees of freedom remains unclear. Second, this strategy can only ensure that the \ac{psl} is controllable, but does not necessarily guarantee an optimal sensing-communication performance tradeoff. Finally, it is also unclear how a communication bit stream should be explicitly mapped onto the \ac{pacf} vector.

To address the above issues, in the next section we first investigate the capacity-achieving input distribution under the \ac{psl} constraint. We will show that, under certain conditions, when amalgamated with phase modulation, embedding communication information into the \ac{pacf} can indeed asymptotically achieve the optimal \ac{psl}-rate tradeoff in the high-\ac{snr} regime.


\section{High-\ac{snr} Capacity-Achieving Input Distributions}\label{sec:highsnr_input}
In this section, we aim to analyze and derive input distributions that asymptotically achieve the capacity in the high-\ac{snr} regime, under the \ac{psl} constraints as well as the mainlobe level constraint, formulated as
\begin{subequations}\label{acfkcapacity}
\begin{align}
\max_{p_{\RV{x}}(\V{x})}&~~{I}(\RV{x};\RV{y}_{\rm{c}})\\
{\rm s.t.}&~~{\left|\frac{1}{\sqrt{N}}\V{f}_i^{\rm H}|\RV{x}|^2\right|^2} \leq \frac{1}{N\zeta_{\rm ACF}}, \quad \forall i \in \Set{S}_{\rm ACF}. \\
&~~{\left|\frac{1}{\sqrt{N}}\V{f}_1^{\rm H}|\RV{x}|^2\right|^2} = 1, \label{mainlobe_constraint}
\end{align}
\end{subequations}
where the mainlobe level constraint \eqref{mainlobe_constraint} can also be viewed as an energy budget constraint for all legitimate codewords, and $\zeta_{\rm ACF}$ denotes a positive sidelobe attenuation factor.

In general, the optimal input distribution of problem \eqref{acfkcapacity} can be solved numerically using the \ac{ba} algorithm \cite{cover1999elements}. However, the computational complexity of the \ac{ba} algorithm can be prohibitive as the number of subcarriers $N$ grows. To this end, and in seeking of a more systematic construction of the input distribution, in this treatise we confine ourselves to the high-\ac{snr} regime, namely in the limit of $\sigma_{\rm c}^2\rightarrow 0$. Let us denote the power spectrum and the phases over the subcarriers as
\begin{equation}
\RV{p}=\begin{bmatrix}
\rv{p}_1,\dotsc,\rv{p}_N
\end{bmatrix}^{\rm T},~\RV{\theta}=\begin{bmatrix}
\rv{\theta}_1,\dotsc,\rv{\theta}_N
\end{bmatrix}^{\rm T},
\end{equation}
respectively, where
\begin{equation}\label{transform_coordinate}
\rv{p}_i=|\rv{x}_i|^2,~\rv{x}_i=\sqrt{\rv{p}_i} e^{j\rv{\theta}_i},
\end{equation}
we may rewrite the problem \eqref{acfkcapacity} under the transformed coordinate $\RV{u}=\begin{bmatrix}
(\RV{p}^\prime)^{\rm T},~\RV{\theta}^{\rm T}
\end{bmatrix}^{\rm T}\in\mathbb{R}^{2N-1}$, where $\RV{p}^\prime =[\rv{p}_1,\dotsc,\rv{p}_{N-1}]^{\rm T}$, through the transformation
\begin{equation}
\RV{x}=\Psi(\RV{u})
\end{equation}
characterized by \eqref{transform_coordinate}. The reason that we remove $\rv{p}_N$ from the transformed coordinate is that it is determined by other entries in $\RV{p}$ as follows
\begin{equation}
\rv{p}_N = {N} - \sum_{i=1}^{N-1} \rv{p}_i,
\end{equation}
according to the energy constraint \eqref{mainlobe_constraint}. The feasible region of $\RV{p}^\prime$ is given by
\begin{align}
\Set{P} &= \bigl\{\RV{p}^\prime\in\mathbb{R}^{N-1}:~\rv{p}_i\geq 0,~i=1\dotsc N, \nonumber \\
&\hspace{3mm}|\V{f}_k^{\rm H}\RV{p}|^2\leq \zeta_{\rm ACF}^{-1},~k\in\Set{S}_{\rm ACF} \bigr\}.
\end{align}
The global feasible region of $\V{u}$ is thus $\Set{M}=\Set{P}\times \mathbb{T}^N$. In the high-\ac{snr} regime we thus have the following result.
\begin{lemma}[Principal terms of mutual information]\label{lem:highsnr}
Assume that $[\V{h}_{\rm c}]_i\neq 0,~\forall i$. For continuous input density $p_{\RV{u}}(\V{u})$ having full support over $\Set{M}$ in the sense that
\begin{align}
p_{\RV{u}}(\V{u}) &= g(\V{u})\mathbb{I}_{\Set{M}}(\V{u}),~g_{\min}\leq g(\V{u})\leq g_{\max}, \nonumber \\
0&<g_{\min}\leq g_{\max}<\infty,~g(\V{u})\in C(\Set{M}), \label{conditions_input}
\end{align}
the high-\ac{snr} (i.e., $\sigma_{\rm c}\rightarrow 0$) mutual information can be expressed as
\begin{align}
I(\RV{x};\RV{y}_{\rm c}) &= h(\RV{u}) + \frac{1}{2}\mathbb{E}\{\log \det \M{G}(\RV{u})\} \nonumber \\
&\hspace{3mm}- \frac{2N-1}{2}\log(\pi e\sigma_{\rm c}^2)+o(1), \label{lem1_highsnr}
\end{align}
where $\M{G}(\RV{u})=\M{J}(\RV{u})^{\rm T}\M{J}(\RV{u})$ is the metric matrix of the constant-power manifold, with $\M{J}(\RV{u})=\frac{\partial (\V{h}_{\rm c}\odot \Psi(\RV{u}))}{\partial \RV{u}}$ being the Jacobian matrix from $\RV{u}$ to $\phi(\RV{u}):=[\mathrm{Re}(\V{h}_{\rm c}\odot \Psi(\RV{u}))^{\rm T},~\mathrm{Im}(\V{h}_{\rm c}\odot \Psi(\RV{u}))^{\rm T}]^{\rm T}$.
\begin{IEEEproof}
Please refer to Appendix \ref{sec:proof_highsnr}.
\end{IEEEproof}
\end{lemma}

With the aid of Lemma~\ref{lem:highsnr}, we are now capable of characterizing the asymptotically capacity-achieving input distributions in the high-\ac{snr} regime.
\begin{proposition}[Generic high-\ac{snr} capacity-achieving input distribution]\label{prop:opt_input}
In the high-\ac{snr} regime $\sigma_{\rm c}\rightarrow 0$, the asymptotically capacity-achieving input distribution $p_{\RV{u}}(\V{u})$ maximizing $h(\RV{u}) + \frac{1}{2}\mathbb{E}\{\log \det \M{G}(\RV{u})\}$ is characterized by
\begin{subequations}\label{high_snr_opt_input}
\begin{align}
p_{\RV{u}}(\V{u})&=p_{\RV{p}^\prime}(\V{p}^\prime)p_{\RV{\theta}}(\V{\theta}),\\
p_{\RV{p}^\prime}(\V{p}^\prime)&\propto\sqrt{\sum_{i=1}^N p_i|[\V{h}_{\rm c}]_i|^{-2}},~\V{p}^\prime\in\Set{P},\\
p_{\RV{\theta}}(\V{\theta})&=\frac{1}{(2\pi)^N},~\V{\theta}\in\mathbb{T}^N.
\end{align}
\end{subequations}
\begin{IEEEproof}
Please refer to Appendix \ref{sec:proof_opt_input}.
\end{IEEEproof}
\end{proposition}

Proposition~\ref{prop:opt_input} suggests that the asymptotic capacity-achieving strategy in the high-\ac{snr} regime is to let $\V{h}_{\rm c}\odot \RV{x}$ be uniformly distributed over $\V{h}_{\rm c}\odot \Psi(\Set{M})$. A direct corollary is that the asymptotically optimal mutual information is given by
\begin{align}
I(\RV{x};\RV{y}_{\rm c})&=\nonumber \frac{1}{2}\log \pi +\sum_{i=1}^N\log |[\V{h}_{\rm c}]_i|^2-\frac{2N-1}{2}\log(e\sigma_{\rm c}^2)\\
&\hspace{3mm}+\log\int_{\Set{P}} \sqrt{\sum_{i=1}^N p_i|[\V{h}_{\rm c}]_i|^{-2}}{\rm d}\V{p}^\prime \!+\!\log 2\!+\!o(1). \label{opt_mi}
\end{align}

Another corollary is that for constant magnitude channels, also known as frequency-flat channels, namely $|[\V{h}_{\rm c}]|_i^2=|\bar{h}|^2,~\forall i=1,\dotsc,N$ for some constant $|\bar{h}|^2$, the asymptotic capacity-achieving input distribution is the uniform distribution over $\Set{M}$:
\begin{equation}
p_{\RV{u}}^{\rm cm}(\V{u}) = \frac{1}{|\Set{M}|},~\V{u}\in\Set{M}.
\end{equation}
In this case, \eqref{opt_mi} can be further simplified as
\begin{align}
I(\RV{x};\RV{y}_{\rm c})&= \frac{2N-1}{2}\log\Bigl(\frac{|\bar{h}|^2}{e\sigma_{\rm c}^2}\Bigr)+\log |\Set{P}|\nonumber\\
&\hspace{3mm}+\log (2\sqrt{\pi})+\frac{1}{2}\log N+o(1), \label{capacity_withp}
\end{align}
for which a detailed estimate of $|\Set{P}|$ will be deferred to Sec.~\ref{sec:acfk}. At this point, we would like to highlight that the \ac{psl} control factor $\zeta_{\rm ACF}$ would indeed have an impact on the achievable rate, in the sense that a larger value of $\zeta_{\rm ACF}$ will result in a smaller $|\Set{P}|$.

From a practical perspective, a problem that follows is that how one could generate codebooks corresponding to such input distributions in an \ac{isac} system. In particular, if we are given a codebook $\Set{C}_p=\{\V{p}^\prime_\ell\}_{\ell=1}^L$ containing uniformly distributed codewords in $\Set{P}$, we could use a distribution matcher \cite{7322261}, as widely applied in probabilistic constellation shapers, to assign a probability of
\begin{equation}
\mathbb{P}\{\RV{p}^\prime=\V{p}_\ell^{\prime}\}=\frac{\sqrt{\sum_{i=1}^N [\V{p}_{\ell}]_i|[\V{h}_{\rm c}]_i|^{-2}}}{\sum_{j=1}^L\sqrt{\sum_{i=1}^N [\V{p}_{j}]_i|[\V{h}_{\rm c}]_i|^{-2}}}
\end{equation}
to each codeword $\V{p}_\ell^{\prime}$, such that the transformed codebook matches the input distribution of \eqref{high_snr_opt_input}. Therefore, the essential complexity of the problem lies in the task of generating the codebook $\Set{C}_p$ and encoding/decoding using that codebook.

In light of this, let us take a closer look at the structure of the feasible region $\Set{M}$. To elaborate, under the $\RV{u}$-coordinate, it consists of the following constraints
\begin{subequations}
\begin{align}
&\RV{\theta}\in\mathbb{T}^N,~\RV{p}\in\mathbb{R}^N,\\
&\rv{p}_i\geq 0,~\forall i=1,\dotsc,N,\\
&|\V{f}_i^{\rm H}\RV{p}|^2\leq \zeta_{\rm ACF}^{-1},~\forall i\in\Set{S}_{\rm ACF}, \label{psd_upper}\\
&\sum_{i=1}^N \rv{p}_i={N}. \label{psd_normalize}
\end{align}
\end{subequations}
We observe that \eqref{psd_upper} and \eqref{psd_normalize} are not entrywise separable constraints, which constitute the main challenge. Indeed, if all constraints were entrywise separable, the feasible region would admit a Cartesian-product representation. In that case, the uniform distribution over the feasible region would factorize into a product of one-dimensional uniform distributions, thereby significantly simplifying the codebook. To this end, we consider the inverse \ac{dft} that maps the power spectrum $\V{p}$ to the \ac{pacf}, $\frac{1}{\sqrt{N}}\M{F}^{\rm H}\RV{p}=\RV{\alpha}$, under which the independent variables are $(\RV{a},\RV{\theta})$, satisfying
\begin{equation}
\RV{a}=\left\{
\begin{array}{ll}
\begin{aligned}
&\bigl[{\rm Re}([\RV{\alpha}]_2),{\rm Im}([\RV{\alpha}]_2), \dotsc,\\ &\hspace{3mm}{\rm Re}([\RV{\alpha}]_{\frac{N}{2}}),{\rm Im}([\RV{\alpha}]_{\frac{N}{2}}),[\RV{\alpha}]_{\frac{N}{2}+1}\bigr]^{\rm T}
\end{aligned}, &\hbox{$N$ even;}\\
\begin{aligned}
&\bigl[{\rm Re}([\RV{\alpha}]_2),{\rm Im}([\RV{\alpha}]_2), \dotsc,\\ &\hspace{3mm}{\rm Re}([\RV{\alpha}]_{\frac{N+1}{2}}),{\rm Im}([\RV{\alpha}]_{\frac{N+1}{2}})\bigr]^{\rm T}
\end{aligned}, &\hbox{$N$ odd,}
\end{array}
\right.
\end{equation}
which follows from the symmetric property of \ac{dft} applied on real-valued vectors. The constraints can therefore be rewritten as
\begin{subequations}
\begin{align}
&\RV{\theta}\in\mathbb{T}^N,\\
&\sqrt{[\RV{a}]_{2i-1}^2+[\RV{a}]_{2i}^2}\leq {\frac{1}{\sqrt{N\zeta_{\rm ACF}}}},~\forall i=1,\dotsc,\Bigl\lfloor \frac{N-1}{2}\Bigr\rfloor,\label{disks}\\
&|[\RV{a}]_{N-1}|\leq {\frac{1}{\sqrt{N\zeta_{\rm ACF}}}},~\mathrm{if~}N~\mathrm{even}, \\
&[\M{F}_N]_{1:N-1,2:N}\M{A}\RV{a}+[\V{f}_1]_{1:N-1}\succeq \V{0},\label{nonneg}
\end{align}
\end{subequations}
where
\begin{align}
\M{A} &=\left\{\begin{array}{ll}
\begin{bmatrix}
\M{I}_{\frac{N-1}{2}}\otimes [1,~i]\\
\M{P}_{\frac{N-1}{2}}\otimes [1,~-i]
\end{bmatrix}, & \hbox{if $N$ odd;} \\
\begin{bmatrix}
\M{I}_{\frac{N}{2}-1}\otimes [1,~i] & 0\\
\V{0}_{1\times (N-2)} & 1\\
\M{P}_{\frac{N}{2}-1}\otimes [1,~-i] & 0
\end{bmatrix}, &\hbox{if $N$ even,}
\end{array}
\right.
\end{align}
and $\M{P}_m\in\mathbb{R}^{m\times m}$ denotes an $m$-order anti-diagonal matrix whose anti-diagonal entries are all equal to unity, with the subscript $m$ being omitted when it is clear from the context. In this coordinate system, we see that only the non-negativity constraint \eqref{nonneg} is not entrywise separable. Next, we show that the coordinate transformation between $\RV{a}$ and $\RV{p}^\prime$ preserves the uniform distribution.
\begin{proposition}
The mapping from $\RV{a}$ to $\RV{p}^\prime$ is a bijective real-valued affine mapping taking the form of
\begin{equation}\label{affine_trans}
\RV{p}^\prime = \sqrt{N}(\M{B}\RV{a}+[\V{f}_1]_{1:N-1}),
\end{equation}
where $\M{B} := [\M{F}_N]_{1:N-1,2:N}\M{A}$, and hence preserves uniform distribution, since $\det \M{B}$ is constant with respect to $\RV{p}^\prime$.
\begin{IEEEproof}
Equation \eqref{affine_trans} follows from previous discussions. It thus suffices to show that $\M{B}$ is non-singular. Note that $[\M{F}_N]_{1:N-1,2:N}$ is a Vandermonde matrix with mutually different nodes, which is invertible. When $N$ is odd, $\M{A}$ is equivalent to $\overline{\M{A}}:=\M{I}\otimes \begin{bmatrix}
1 & i\\
1 & -i
\end{bmatrix}$ up to a row permutation, and hence is also invertible. Similarly, when $N$ is even, $\M{A}$ is equivalent to $\begin{bmatrix}
\overline{\M{A}} & \V{0}\\
\V{0} & 1
\end{bmatrix}$, therefore is also invertible. Thus the product $\M{B}$ is also invertible, which completes the proof.
\end{IEEEproof}
\end{proposition}

We may now conclude that one can obtain a uniform distribution under the $\RV{u}$-coordinate via a uniform distribution under the $(\RV{a},\V{\theta})$-coordinate. Moreover, if the non-negativity constraint \eqref{nonneg} is inactive, sampling from the uniform distribution under the $(\RV{a},\RV{\theta})$-coordinate is substantially simplified. This observation suggests exploiting the two sets of degrees of freedom separately, by modulating \ac{iid} symbols onto $\RV{a}$ and \ac{iid} \ac{psk} symbols onto $\RV{\theta}$, which motivates us to propose the modulation scheme detailed in Sec.~\ref{sec:acfk}.

\section{Auto-correlation Function Keying and Its Sensing Performance}\label{sec:acfk}
In this section, we elaborate on the technical design of the \ac{acfk} and discuss its sensing performance. Simply put, \ac{acfk} modulates symbols drawn from finite constellations under the $(\RV{a},\RV{\theta})$ coordinate, which constitutes a discretized approximation of the continuous uniform distribution over the \ac{acf} domain.
\subsection{Structure of \ac{acfk} and Its Nominal Sensing Performance}
The basic structure of the \ac{acfk} is defined as follows.
\begin{definition}
The \ac{acfk} is defined as
\begin{equation}
\RV{x}=\RV{x}_{\rm{p}}\odot\RV{x}_{\rm{a}}=\RV{x}_{\rm{p}}\odot\sqrt{\sqrt{N}\M{F}_N\RV{x}_{\rm{ACF}}},
\end{equation}
where $\RV{x}_{\rm{p}}\in\mathbb{C}^N$ denotes the phase component of the transmitted communication symbol vector $\RV{x}$. These phases are drawn from a \ac{psk} constellation $\Set{S}_{\rm{p}}$ in an \ac{iid} manner. $\RV{x}_{\rm{a}} \in \mathbb{R}^N$ denotes the amplitude component of the communication symbol vector $\RV{x}$. $\RV{x}_{\rm{ACF}} \in \mathbb{C}^N$ denotes a vector with the special structure, given by
\begin{equation}
\begin{aligned}\label{acf}
\RV{x}_{\rm{ACF}} =
\begin{cases}
\left[1;\widetilde{\RV{x}}_{\rm{s}};\M{P}{\widetilde{\RV{x}}_{\rm{s}}}^*\right], & N \text{ is odd} \\[3pt]
\left[1;\widetilde{\RV{x}}_{\rm{s}};0;\M{P}{\widetilde{\RV{x}}_{\rm{s}}}^*\right], & N \text{ is even}
\end{cases}
\end{aligned}
\end{equation}
with
\begin{equation} \label{normalization}
\widetilde{\RV{x}}_{\rm{s}} = \frac{\RV{x}_{\rm{s}}}{\sqrt{N\beta_{\rm ACF}\zeta_{\rm ACF}}} \in \mathbb{C}^{\lfloor\frac{N-1}{2}\rfloor},
\end{equation}
where $\RV{x}_{\rm{s}} \in \mathbb{C}^{\lfloor\frac{N-1}{2}\rfloor}$ denotes communication symbol vector in the \ac{acf} domain, which is drawn from a constellation $\Set{S}_{\rm{s}}$ in an \ac{iid} manner. $\beta_{\rm ACF} = \max_{\rv{x}_{\rm{s}}\in {\Set{S}_{\rm{s}}}}|\rv{x}_{\rm{s}}|^2 $ denotes the peak symbol power of the constellation $\Set{S}_{\rm{s}}$.
\end{definition}

Following the definition of \ac{acfk}, we observe that the vector \eqref{acf} satisfies the conjugate symmetric property, given by
\begin{equation}
[\RV{x}_{\rm{ACF}}]_{i}=[\RV{x}_{\rm{ACF}}]^*_{1+\langle N+1-i \rangle_N},\quad i = 1,\cdots,N.
\end{equation}
It is also related to the independent variable $\RV{a}$ discussed in Sec.~\ref{sec:highsnr_input}, in the following manner
\begin{equation}
\RV{x}_{\rm ACF}=\begin{bmatrix}
1;~\M{A}\RV{a}
\end{bmatrix},
\end{equation}
which takes a similar form as the \ac{pacf} of $\M{F}_N^{\rm H}\RV{x}$. Due to these reasons, we refer to $\RV{x}_{\rm{ACF}}$ and $\sqrt{N}\M{F}_N\RV{x}_{\rm{ACF}}$ as the \emph{nominal \ac{pacf}} and the \emph{nominal power spectrum}, respectively. Indeed, by using the conjugate symmetry of the Fourier transform, we can observe that $\sqrt{N}\M{F}_N\RV{x}_{\rm{ACF}}$ is in fact a real-valued sequence. However, the definition of \ac{acfk} itself does not guarantee that $\M{F}_N\RV{x}_{\rm ACF}$ satisfies the non-negativity constraint, thereby the nominal \ac{pacf} (resp. power spectrum) may not actually be a valid \ac{pacf}. When this constraint is satisfied, according to the discussion in Sec.~\ref{sec:highsnr_input}, \ac{acfk} with $\RV{x}_{\rm s}$ having uniformly distributed entries within the disks
\begin{equation}\label{disks_tilde}
|[\widetilde{\RV{x}}_{\rm s}]_i| \leq {\frac{1}{\sqrt{N\zeta_{\rm ACF}}}},~\forall i=1,\dotsc, \Bigl\lfloor \frac{N-1}{2}\Bigr\rfloor
\end{equation}
would be asymptotically capacity-achieving in the high-\ac{snr} regime for frequency-flat channels\footnote{Note that the $\lfloor(N+1)/2\rfloor$-th entry of the even-$N$ $\RV{x}_{\rm ACF}$ in \eqref{acf} is set to $0$ only for the simplicity of implementation. One could replace this entry with a uniformly distributed real-valued random variable in order to achieve a better approximation of the asymptotically optimal input distribution.}.

Fortunately, the probability that the nominal \ac{pacf} of an \ac{acfk} sequence violates the power spectral non-negativity constraint is negligible, as long as the sidelobe attenuation factor $\zeta_{\rm ACF}$ is sufficiently large, as depicted by the following proposition.

\begin{proposition}[Non-negativity Violation Probability Bound]\label{prop:violation_prob}
The probability that the nominal \ac{pacf} of an \ac{acfk} sequence violates the power spectral non-negativity constraint $\M{F}_N\RV{x}_{\rm{ACF}}\succeq\V{0}$ at a specific frequency bin $k$ is bounded by
\begin{equation}\label{violation_prob}
\mathbb{P}\left\{\V{f}_k^{\rm T}\RV{x}_{\rm{ACF}}< 0\right\} \leq \exp\left(-\alpha\zeta_{\rm ACF}\right),
\end{equation}
where $\alpha$ takes different values for $\RV{x}_{\rm s}$ following different constellations/distributions, as follows
\begin{equation}\label{alpha_branches}
\alpha \!=\! \left\{
\begin{array}{ll}
1,&\hbox{Uniform distribution;}\\[3pt]
\frac{3(M_q\!+\!L)[L^2+\!(2M_q\!-\!L\!-\!1)^2]}{2[M_q(4M_q^2\!-\!1)\!+\!L(4(M_q\!-\!L)^2\!-\!1)]}, & \hbox{$4(M_q^2\!\!-\!\!L^2)$-\ac{qam};}\\[3pt]
\frac{1}{2},&\hbox{8-\ac{qam}, $M_p$-\ac{psk};}\\[3pt]
\frac{1}{4},&\hbox{Others,}
\end{array}
\right.
\end{equation}
where $M_p\geq 4$, with $M_q\in\mathbb{Z}_+$, and $L\leq M_q/2\in\mathbb{N}$ denoting the size of the ``corner cut'' of \ac{qam} constellations, as exemplified by Fig.~\ref{fig:qam32}. ``Uniform distribution'' corresponds to the uniform distribution whose support satisfies \eqref{disks_tilde}.
\begin{IEEEproof}
Please refer to Appendix \ref{sec:proof_violation_prob}.
\end{IEEEproof}
\end{proposition}

Proposition \ref{prop:violation_prob} suggests that the ratio between the volume $|\Set{P}|$ and the relaxed feasible region
\begin{equation}
\Set{P}_{\rm relax} = \{\RV{p}^{\prime}\in\mathbb{R}^{N-1}:~|\V{f}_k^{\rm H}\RV{p}|^2\leq \zeta_{\rm ACF}^{-1},~k\in\Set{S}_{\rm ACF}\}
\end{equation}
can be bounded by constants. In particular, we have
\begin{equation}
\frac{|\Set{P}|}{|\Set{P}_{\rm relax}|} = 1-\mathbb{P}\{\exists k(\V{f}_k^{\rm T}\RV{x}_{\rm ACF}<0)\},
\end{equation}
where the $\RV{x}_{\rm ACF}$ corresponds to the \ac{acfk} sequence drawn from the uniform distribution over $\Set{P}_{\rm relax}$. In light of this, by application of Proposition~\ref{prop:violation_prob} and the union bound, we see that
\begin{subequations}
\begin{align}
\log |\Set{P}_{\rm relax}| &= \log |\Set{P}| \!-\! \log (1\!-\!\mathbb{P}\{\exists k(\V{f}_k^{\rm T}\RV{x}_{\rm ACF}\!<\!0)\})   \\
&\leq \log |\Set{P}| - \log (1-Ne^{-\zeta_{\rm ACF}}) \label{unit_alpha}\\
&\leq \log |\Set{P}| + \log\Bigl(1+\frac{1}{\zeta_{\rm ACF}-\log N}\Bigr),
\end{align}
\end{subequations}
where \eqref{unit_alpha} follows from the fact that $\alpha=1$ for the uniform distribution. Therefore, by choosing $\zeta_{\rm ACF}\in \Omega(\log N)$, and note that $|\Set{P}|\leq |\Set{P}_{\rm relax}|$, we have
$$
\log |\Set{P}|= \log |\Set{P}_{\rm relax}| + O(1).
$$
The volume $|\Set{P}_{\rm relax}|$ can be readily computed as
\begin{equation}
|\Set{P}_{\rm relax}|=\left\{\begin{array}{ll}
\frac{(2\pi\zeta_{\rm ACF}^{-1})^{\frac{N-1}{2}}}{\sqrt{N}}, & \hbox{$N$ odd;}\\
\frac{(2\pi\zeta_{\rm ACF}^{-1})^{\frac{N}{2}-1}}{\sqrt{N}}\cdot 2\sqrt{\zeta_{\rm ACF}^{-1}}, & \hbox{$N$ even;}
\end{array}
\right.
\end{equation}
In either case, we have $\log|\Set{P}_{\rm relax}|=-\frac{N-1}{2}\log (\zeta_{\rm ACF})+\lfloor\frac{N}{2}\rfloor\log 2+\lfloor\frac{N-1}{2}\rfloor \log \pi-\frac{1}{2}\log N$.
Now, using \eqref{capacity_withp}, we see that for frequency-flat channels, the capacity in the high-\ac{snr} regime may be expressed as
\begin{align}
I(\RV{x};\RV{y}_{\rm c})&\!=\!\frac{2N-1}{2}\log\Bigl(\frac{|\bar{h}|^2}{e\sigma_{\rm c}^2\sqrt{\zeta_{\rm ACF}}}\Bigr)\!+\!\frac{1}{4}\log\zeta_{\rm ACF}\nonumber \\
&\hspace{3mm}\!+\!\Bigl(\Bigl\lfloor\frac{N}{2}\Bigr\rfloor+1\Bigr)\log 2\!+\!\Bigl(\Bigl\lfloor\frac{N\!-\!1}{2}\Bigr\rfloor\!+\!\frac{1}{2}\Bigr) \log \pi \!+\! O(1).
\end{align}
This implies that in the high-\ac{snr} regime, from the perspective of capacity computation, the effective \ac{snr} is attenuated by a factor of $\sqrt{\zeta_{\rm ACF}}$ due to the \ac{psl} constraint, compared to the conventional Gaussian input distribution that is capacity-achieving in the absence of the \ac{psl} constraint.

Furthermore, when $N$ is sufficiently large, according to the central limit theorem, we may obtain a better estimate of the non-negativity violation probability based on Gaussian approximation. The approximated non-negativity violation probability is derived in the following result.
\begin{remark}
[Approximated Non-negativity Violation Probability]\label{rem:approx_violation_prob}
When $N$ is sufficiently large, the probability that the nominal \ac{pacf} of an \ac{acfk} sequence violates the power spectral non-negativity constraint $\M{F}_N\RV{x}_{\rm{ACF}}\succeq \V{0}$ at a specific frequency bin $k$ can be approximated as
\begin{equation}
\mathbb{P}\left\{[\M{F}_N{\RV{x}}_{\rm{ACF}}]_{k}<0\right\} \approx \Phi\left(-\frac{1}{\sigma}\right)
\end{equation}
where the function $\Phi(x)=\frac{1}{\sqrt{2\pi}}\int_{-\infty}^{x}e^{-\frac{t^2}{2}}{\rm d}x$ denotes the \ac{cdf} of the standard Gaussian distribution $\mathcal{N}(0,1)$, and $\sigma$ is a fitted standard deviation given in \eqref{sigma_def}. Detailed discussion of this Gaussian approximation is deferred to Sec.~\ref{ssec:ber_phase}.
\end{remark}

In this paper, we first focus on the nominal \ac{pacf} satisfying the power spectral non-negativity constraint, and then evaluate the implications of the non-negativity constraint violation on the sensing and communication performance. At this point, it is worthwhile to demonstrate the advantage of the proposed data modulation scheme in terms of its sensing performance. To elaborate, we start by analyzing the \ac{psl} and \ac{esl} of the random nominal \ac{pacf}, as presented in the following result.


\begin{figure}[t]
     \centering
     \includegraphics[width=0.6\linewidth]{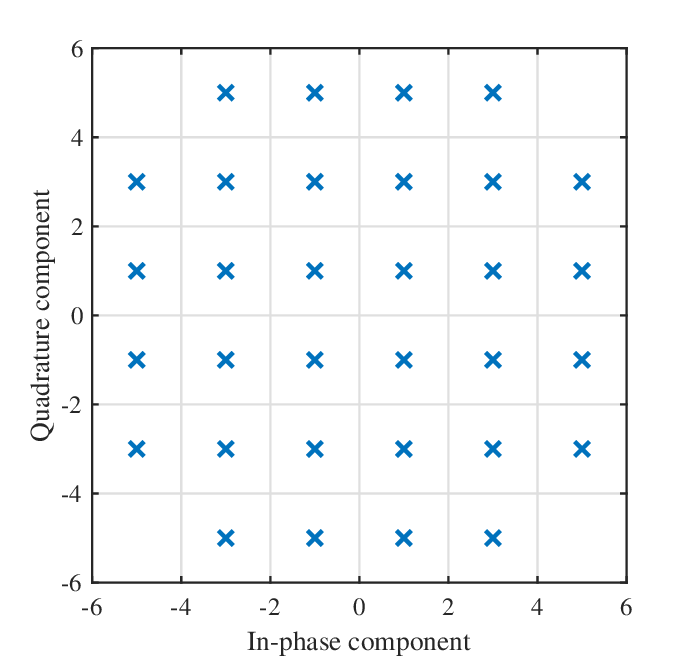}
     \caption{The (unnormalized) 32-\ac{qam} constellation, a ``corner cut'' $4(M_q^2-L^2)$-\ac{qam} constellation with $M_q=3$ and $L=1$.}
     \label{fig:qam32}
\end{figure}

\begin{proposition} \label{prop:nominal_pacf}
The squared nominal \ac{pacf} is
\begin{equation} \label{squared_pacf}
\begin{aligned}
\left|\RV{x}_{\rm{ACF}}\right|^2=
\begin{cases}
\left[1;\frac{|\RV{x}_{\rm{s}}|^2}{{N\beta_{\rm ACF}\zeta_{\rm ACF}}};\M{P}\frac{|\RV{x}_{\rm{s}}|^2}{{N\beta_{\rm ACF}\zeta_{\rm ACF}}}\right], & N \text{ is odd}, \\[7pt]
\left[1;\frac{|\RV{x}_{\rm{s}}|^2}{{N\beta_{\rm ACF}\zeta_{\rm ACF}}}; 0; \M{P}\frac{|\RV{x}_{\rm{s}}|^2}{{N\beta_{\rm ACF}\zeta_{\rm ACF}}}\right], & N \text{ is even}.
\end{cases}
\end{aligned}
\end{equation}
The \ac{psl} and \ac{esl} of the nominal \ac{pacf} at a specific non-zero-valued sidelobe bin $k$ respectively are
\begin{subequations}
\begin{align}
\ac{psl} &= \frac{1}{N\zeta_{\rm ACF}}\\
[\ac{esl}]_k &= \frac{1}{N\zeta_{\rm ACF}} \cdot \frac{\mathbb{E}\left\{|{\rv{x}_{\rm{s}}}|^2\right\} }{\max_{\rv{x}_{\rm{s}}\in\Set{S}_{\rm{s}}}|\rv{x}_{\rm{s}}|^2}
\end{align}
\end{subequations}
\end{proposition}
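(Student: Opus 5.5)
The proof is a direct computation from the definition of the nominal \ac{pacf} in \eqref{acf} together with the normalization \eqref{normalization}.

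\emph{Squared nominal \ac{pacf}.} First I take entrywise squared magnitudes of $\RV{x}_{\rm ACF}$, block by block. The leading entry is $1$. The middle block gives
\[
|\widetilde{\RV{x}}_{\rm s}|^2=|\RV{x}_{\rm s}|^2/(N\beta_{\rm ACF}\zeta_{\rm ACF}),
\]
since the scaling factor in \eqref{normalization} is real and positive. For the tail block, $\M{P}$ is a permutation matrix and conjugation preserves magnitude, so
\[
|\M{P}\widetilde{\RV{x}}_{\rm s}^*|^2=\M{P}|\widetilde{\RV{x}}_{\rm s}|^2.
\]
For even $N$, the inserted $0$ stays $0$. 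This yields \eqref{squared_pacf} in both parity cases.

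\emph{\ac{psl}.} By \eqref{def_psl}, applied to the nominal \ac{pacf}, the \ac{psl} is the maximum of the sidelobe entries of \eqref{squared_pacf} over $\Set{S}_{\rm ACF}$. Every sidelobe entry is either $0$ (the even-$N$ middle bin) or of the form
\[
|[\RV{x}_{\rm s}]_i|^2/(N\beta_{\rm ACF}\zeta_{\rm ACF}),
\]
and this is at most $1/(N\zeta_{\rm ACF})$ because $\beta_{\rm ACF}=\max_{\rv{x}_{\rm s}\in\Set{S}_{\rm s}}|\rv{x}_{\rm s}|^2$.

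The statement gives the \ac{psl} as exactly $1/(N\zeta_{\rm ACF})$. That equality is the only step needing care, because it requires the bound to be attained. I would read the \ac{psl} as the worst case over the realizations admitted by the constellation: the symbol attaining $\beta_{\rm ACF}$ is in $\Set{S}_{\rm s}$, so the supremum of the per-realization \ac{psl} equals $1/(N\zeta_{\rm ACF})$. The per-realization \ac{psl} equals this value whenever at least one entry of $\RV{x}_{\rm s}$ is a peak-power symbol, and always equals it for constant-modulus constellations such as \ac{psk}. I will state this interpretation explicitly.

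\emph{\ac{esl}.} At a non-zero-valued sidelobe bin $k$, the entry is $|[\RV{x}_{\rm s}]_i|^2/(N\beta_{\rm ACF}\zeta_{\rm ACF})$ for the index $i$ corresponding to $k$, either directly or via the reflection $\M{P}$. The entries of $\RV{x}_{\rm s}$ are \ac{iid} from $\Set{S}_{\rm s}$, so taking expectation by linearity gives
\[
\mathbb{E}\{|\rv{x}_{\rm s}|^2\}/(N\beta_{\rm ACF}\zeta_{\rm ACF}).
\]
Substituting the definition of $\beta_{\rm ACF}$ gives the stated form, and this value is the same at every such bin.
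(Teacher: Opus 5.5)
Your proposal is correct and follows essentially the same route as the paper. Substitute \eqref{normalization} into \eqref{acf} and take entrywise squared magnitudes. The \ac{psl} then follows from the definition of $\beta_{\rm ACF}$, and the \ac{esl} from taking expectations of the i.i.d.\ symbol powers.

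Your explicit caveat on the \ac{psl} is a point the paper's proof skips. The value $1/(N\zeta_{\rm ACF})$ is the worst case over realizations, and it is attained in a given realization only when some entry of $\RV{x}_{\rm s}$ has peak power (always, e.g., for constant-modulus constellations). The paper instead directly equates the maximum over sidelobe bins with the maximum over $\Set{S}_{\rm s}$.
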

\begin{IEEEproof}
Please refer to Appendix \ref{sec:proof_nominal_pacf}.
\end{IEEEproof}

From Proposition~\ref{prop:nominal_pacf}, it is clear that both the \ac{psl} and \ac{esl} of the nominal \ac{pacf} are precisely controllable, which is achieved by directly modulating normalized and attenuated communication symbols, along with their conjugates, onto the \ac{acf} domain. These symbols collectively constitute the sidelobes of the nominal \ac{pacf}.

\begin{remark}[Sidelobe attenuation factor and nominal \ac{psl}]
Note that the non-negativity violation probability bound in Proposition~\ref{prop:violation_prob} exhibits an exponential decay with respect to the sidelobe attenuation factor $\zeta_{\rm{ACF}}$. According to Proposition \ref{prop:nominal_pacf}, a larger sidelobe attenuation factor yields lower nominal \ac{psl} and hence improved sensing performance. Therefore, in practical systems, the sidelobe attenuation factor may be set sufficiently large to achieve reliable sensing performance, implying that the non-negativity constraint is often satisfied with high probability. For example, by configuring the constellation $\Set{S}_{\rm{s}}$ as the 32-\ac{qam} constellation shown in Fig.~\ref{fig:qam32} and setting the sidelobe attenuation factor to $\zeta_{\rm{ACF}}=10~{\rm{dB}}$, the non-negativity violation probability bound can be calculated as $e^{-8.5}\approx2.035\times 10^{-4}$.
\end{remark}

Furthermore, it is straightforward to verify that when the nominal \ac{pacf} satisfies the non-negativity constraint, it coincides with the actual \ac{pacf}. Indeed, by using the Wiener-Khinchine theorem, we can express the actual \ac{pacf} as
\begin{subequations}
\begin{align}
\RV{r}_{\RV{x}} &= \frac{1}{\sqrt{N}}\M{F}_N^{\rm{H}}|\RV{x}|^2\\
& = \frac{1}{\sqrt{N}}\M{F}_N^{\rm{H}}\left|\RV{x}_{\rm{p}}\odot\sqrt{\sqrt{N}\M{F}_N\RV{x}_{\rm{ACF}}}\right|^2\\
& = \frac{1}{\sqrt{N}}\M{F}_N^{\rm{H}}\left(|\RV{x}_{\rm{p}}|^2\odot\left|\sqrt{N}\M{F}_N\RV{x}_{\rm{ACF}}\right|\right).
\end{align}
\end{subequations}
By exploiting the constant modulus property of \ac{psk} constellations, we have
\begin{equation} \label{pacf}
\RV{r}_{\RV{x}}=\frac{1}{\sqrt{N}}\M{F}_N^{\rm{H}}\left|\sqrt{N}\M{F}_N\RV{x}_{\rm{ACF}}\right|.
\end{equation}
Based on the assumption that \ac{acfk} sequences satisfy the power spectral non-negativity constraint, we can safely ignore the magnitude operation in \eqref{pacf}. Therefore, when the spectral non-negativity constraint is satisfied, the entire \ac{pacf} can be precisely controlled. This constitutes the key feature of \ac{acfk} in terms of sensing performance.


 \begin{figure}[t]
     \centering
     \includegraphics[width=1\linewidth]{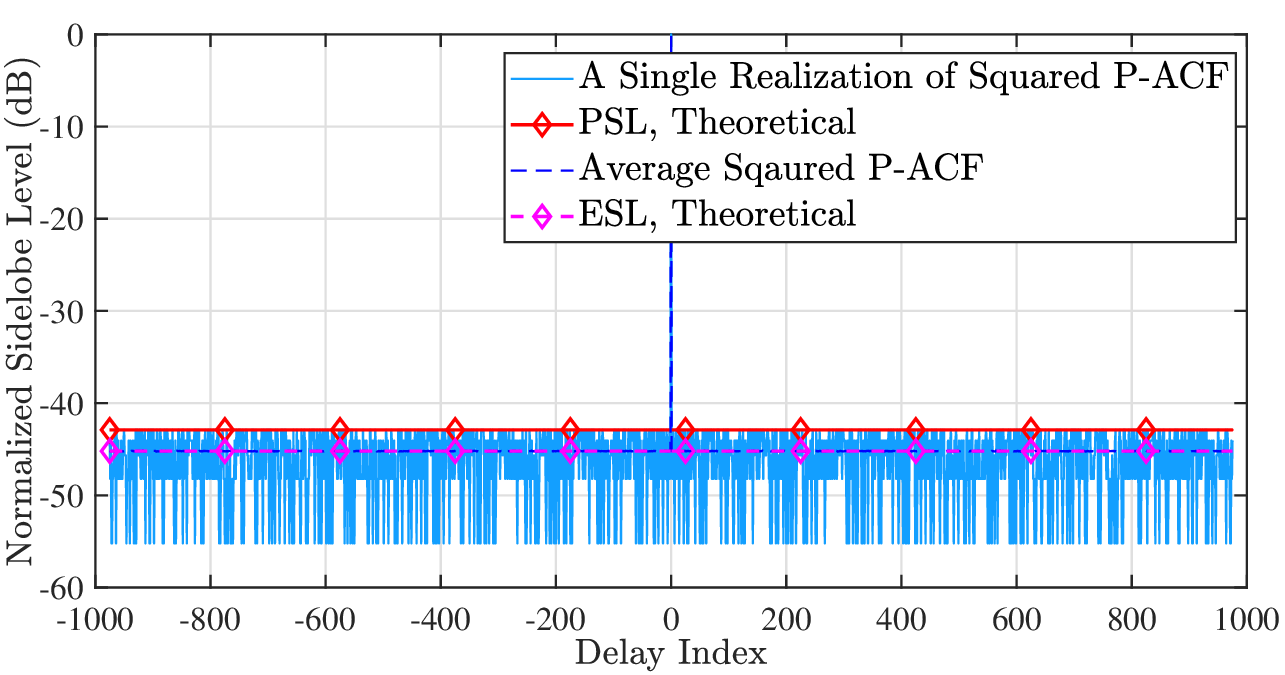}
     \caption{A single realization and the average squared nominal \ac{pacf} of \ac{acfk}, with the sidelobe attenuation factor $\zeta_{\rm{ACF}}=10~{\rm{dB}}$, $N=1951$, and the 32-\ac{qam} constellation.}
     \label{fig:pacf}
 \end{figure}

To facilitate an intuitive understanding of the precise \ac{psl} control capability, we perform numerical simulations to visualize the squared nominal \ac{pacf} of \ac{acfk} in Fig.~\ref{fig:pacf}. We consider \ac{acfk} with $N=1951$ subcarriers. The sidelobe attenuation factor is set to $\zeta_{\rm{ACF}}=10~{\rm{dB}}$. The constellation $\Set{S}_{\rm{s}}$ is configured as the ``corner cut'' of the 32-\ac{qam} constellation, as shown in Fig.~\ref{fig:qam32}. The average squared \ac{pacf} is obtained by averaging over $10000$ independent random realizations. It can be observed that the theoretical values match very well with the simulation result. Each non-zero-valued sidelobe of the structured \ac{pacf} corresponds to a specific constellation point of the 32-\ac{qam}. Both the \ac{psl} and \ac{esl} are precisely controlled as designed.

\subsection{Impact of Non-negativity Constraint Violation}\label{subsec:impact}
We would like to highlight that the discussion so far focuses on the nominal quantities, e.g. nominal \ac{pacf}. In this subsection, we consider the impact of nominal \ac{pacf} violating the power spectral non-negativity constraint. Specifically, the non-negativity violation implies that there must exist at least one frequency bin $k$ such that the corresponding entry $[\M{F}_N{\RV{x}}_{\rm{ACF}}]_{k}< 0$. To facilitate subsequent analysis, let us denote the set of all frequency bins $k$ that violate the power spectral non-negativity constraint, namely $[\M{F}_N{\RV{x}}_{\rm{ACF}}]_{k}< 0$ as $\RS{K}$, which is naturally a random set. We present the structure of the actual \ac{pacf} of \ac{acfk} in the following proposition.
\begin{proposition}[Actual \ac{pacf} under potential violation of non-negativity]\label{prop:pacf_real}
The \ac{pacf} of \ac{acfk} is
\begin{equation}
\RV{r}_{\RV{x}}=\RV{x}_{\rm{ACF}}+\sqrt{N}\M{F}_N^{\rm{H}}\RV{e},
\end{equation}
where $\RV{e}$ denotes the error vector contributed by non-negativity violated entries at frequency bins $k\in\RS{K}$, taking the values of $\frac{2\left|[\M{F}_N{\RV{x}}_{\rm{ACF}}]_{k}\right|}{\sqrt{N}}$, while those entries satisfying the non-negativity constraint are zero.
\end{proposition}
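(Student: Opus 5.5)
Starting from the expression \eqref{pacf}, which was obtained using only the constant-modulus property of the \ac{psk} phase component and therefore holds whether or not the non-negativity constraint is satisfied, the plan is to decompose the entrywise magnitude of the nominal power spectrum. Write $\RV{s}:=\sqrt{N}\M{F}_N\RV{x}_{\rm ACF}$, which is real-valued by the conjugate symmetry of $\RV{x}_{\rm ACF}$ noted after the definition of \ac{acfk}. For any real number $t$ we have $|t| = t + 2\max\{-t,0\}$. Applying this entrywise gives
\begin{equation*}
|\RV{s}| = \RV{s} + 2\,\RV{d},\qquad [\RV{d}]_k = \left\{\begin{array}{ll} |[\RV{s}]_k|, & k\in\RS{K},\\ 0, & k\notin\RS{K}, \end{array}\right.
\end{equation*}
where $\RS{K}$ is exactly the random set of violated bins, since $[\RV{s}]_k<0$ if and only if $[\M{F}_N\RV{x}_{\rm ACF}]_k<0$.

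Substituting this decomposition into \eqref{pacf} and using linearity yields
\begin{equation*}
\RV{r}_{\RV{x}} = \tfrac{1}{\sqrt{N}}\M{F}_N^{\rm H}\sqrt{N}\M{F}_N\RV{x}_{\rm ACF} + \tfrac{2}{\sqrt{N}}\M{F}_N^{\rm H}\RV{d}.
\end{equation*}
The first term reduces to $\RV{x}_{\rm ACF}$ because $\M{F}_N$ is unitary. For the second term, we match it with the claimed form $\sqrt{N}\M{F}_N^{\rm H}\RV{e}$ by setting $\RV{e}:=\frac{2}{N}\RV{d}$. For $k\in\RS{K}$ this gives $[\RV{e}]_k = \frac{2}{N}\sqrt{N}|[\M{F}_N\RV{x}_{\rm ACF}]_k| = \frac{2|[\M{F}_N\RV{x}_{\rm ACF}]_k|}{\sqrt{N}}$, and $[\RV{e}]_k=0$ otherwise, which is the statement of Proposition~\ref{prop:pacf_real}.

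There is no real obstacle here; the only point that needs care is bookkeeping of the $\sqrt{N}$ normalization factors between the unitary \ac{dft}, the nominal power spectrum, and the definition of $\RV{e}$. I would also note explicitly that the realness of $\RV{s}$ is what makes the identity $|t|=t+2\max\{-t,0\}$ applicable, so I would cite the conjugate-symmetry property before using it.
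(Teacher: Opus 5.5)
Your proof is correct and follows essentially the same route as the paper: both start from \eqref{pacf}, write the entrywise magnitude of the real nominal power spectrum as the signed spectrum plus a correction supported on $\RS{K}$, and then use linearity and unitarity of $\M{F}_N$. Your normalization $\RV{e}=\frac{2}{N}\RV{d}$ reproduces the paper's $N\RV{e}$ correction term exactly. You also state explicitly that $\sqrt{N}\M{F}_N\RV{x}_{\rm ACF}$ is real by conjugate symmetry, which is what makes the decomposition valid; the paper leaves this implicit.
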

\begin{IEEEproof}
Please refer to Appendix \ref{sec:proof_pacf_real}.
\end{IEEEproof}

It is clear that the \ac{pacf} coincides with its nominal \ac{pacf}, when the nominal \ac{pacf} satisfies the non-negativity constraint, as proven in Proposition~\ref{prop:nominal_pacf}. The structure of the actual \ac{pacf} can be regarded as the superposition of the nominal \ac{pacf} $\RV{x}_{\rm{ACF}}$ and the random fluctuation term $\sqrt{N}\M{F}_N^{\rm{H}}\RV{e}$.

Now note that the non-negativity constraint violation induces random fluctuations in the mainlobe of \ac{pacf}, thereby we use \ac{pslr} as the sensing metric for a fair comparison. For notational simplicity in subsequent analysis, let us denote the $k$-th entry of $\RV{e}$ as $\rv{e}_k$. According to the definition of \ac{dft}, the \ac{pacf} of \ac{acfk} can be expressed as
\begin{equation}
[\RV{r}_{\RV{x}}]_{n}=[\RV{x}_{\rm{ACF}}]_{n}+\sum_{k\in\RS{K}}\rv{e}_ke^{j\theta_{k,n}}
\end{equation}
Using the triangle inequality, we bound the \ac{pslr} as follows
\begin{subequations}
\begin{align}
\ac{pslr}&=\max_{n\in \Set{S}_{\rm ACF}}\left\{\left|\frac{\left[\RV{r}_{\RV{x}}\right]_n}{\left[\RV{r}_{\RV{x}}\right]_1}\right|^2\right\}\\
&=\max_{n\in \Set{S}_{\rm ACF}}\left\{\frac{\left|[\RV{x}_{\rm{ACF}}]_n+\sum_{k\in\RS{K}}\rv{e}_ke^{j\theta_{k,n}}\right|^2}{\left|[\RV{x}_{\rm{ACF}}]_1+\sum_{k\in\RS{K}}\rv{e}_ke^{j\theta_{k,1}}\right|^2}\right\}\\
&\leq \frac{\max_{n\in \Set{S}_{\rm ACF}}\left(\left|[\RV{x}_{\rm{ACF}}]_n\right|+\sum_{k\in\RS{K}}\rv{e}_k\right)^2}{\left(1+\sum_{k\in\RS{K}}\rv{e}_k\right)^2}\\
&=\frac{\left(\max_{n\in \Set{S}_{\rm ACF}}\left|[\RV{x}_{\rm{ACF}}]_n\right|+\sum_{k\in\RS{K}}\rv{e}_k\right)^2}{\left(1+\sum_{k\in\RS{K}}\rv{e}_k\right)^2}\\
&=\frac{\left(\frac{1}{\sqrt{N\zeta_{\rm{ACF}}}}+\sum_{k\in\RS{K}}\rv{e}_k\right)^2}{\left(1+\sum_{k\in\RS{K}}\rv{e}_k\right)^2}\triangleq \overline{g}(\{\rv{e}_k\}_{k\in\RS{K}}). \label{upper_bound_pslr}
\end{align}
\end{subequations}
Clearly, for a given set $\RS{K}$, the upper bound on the \ac{pslr} is an increasing function of each $\rv{e}_k$ with $k\in \RS{K}$. This property indicates that larger error amplitudes $\rv{e}_k$ would cause a higher \ac{pslr}. Since these amplitudes are determined by $\left|[\M{F}_N{\RV{x}}_{\rm{ACF}}]_k\right|$, smaller values of $[\M{F}_N{\RV{x}}_{\rm{ACF}}]_k$ may severely degrade sensing performance. In brief, the non-negativity constraint violation introduces the random fluctuation term $\sqrt{N}\M{F}_N^{\rm{H}}\RV{e}$ into the \ac{pacf}, thus weakening the capability of controlling the \ac{pslr} of its \ac{pacf}. This motivates the following analysis on the distribution of the \ac{pslr}.


In particular, in the presence of potential non-negativity violation, the \ac{pslr} can no longer precisely be controlled by $\frac{1}{N\zeta_{\rm ACF}}$, but is instead upper-bounded by the quantity $\overline{g}(\{\rv{e}_k\}_{k\in\RS{K}})$. Ideally, with this upper bound, we may derive a lower bound on the probability
$\mathbb{P}\left\{\ac{pslr}\leq \eta\right\}$ for some $\eta>0$, which characterizes the controllability of the \ac{pslr} in the worst-case scenario. Nevertheless, the probability is generally intractable, since the number of possible error patterns, marked by the set $\RS{K}$, is enormous.

In light of this, we focus on the most likely violation event, namely at most one entry (with random index) of the vector $\M{F}_N\RV{x}_{\rm{ACF}}$ is negative, and hence $|\RS{K}|\leq 1$. Such event would be the dominant violation event for practical \ac{acfk} systems having a sufficiently large sidelobe attenuation factor $\zeta_{\rm{ACF}}$. Under this assumption, the \ac{pacf} can be simplified as
\begin{equation}
[\RV{r}_{\RV{x}}]_n=[\RV{x}_{\rm{ACF}}]_n+\rv{e}_ke^{j\rv{\theta}_{\rv{k},n}}.
\end{equation}
and the corresponding \ac{pslr} can be bounded by
\begin{equation}
{\ac{pslr}}\leq \frac{\left(\frac{1}{\sqrt{N\zeta_{\rm{ACF}}}}+\rv{e}_k\right)^2}{\left(1+\rv{e}_k\right)^2}  \triangleq g(\rv{e}_k),
\end{equation}
where ${k}$ denotes the only entry that violates the non-negativity constraint. Intuitively, the probability $\mathbb{P}\left\{\ac{pslr}\leq g(\gamma)\right\}$ is related to the outage probability $\mathbb{P}\left\{\rv{e}_k>\delta\right\}$ for some $\delta>0$. Specifically, a sufficient condition for $\ac{pslr}\leq g(\gamma)$ is that $g(\rv{e}_k)\leq g(\gamma)$ holds for all $k$, due to the fact that $\ac{pslr}\leq g(\rv{e}_k)$. Note that $g(\rv{e}_k)$ is an increasing function of $\rv{e}_k$, and thus the sufficient condition is equivalent to $\rv{e}_k\leq\gamma$ for all $\rv{k}$.

Using a similar approach as employed in the proof of Proposition \ref{prop:violation_prob}, we derive an upper bound on the outage probability $\mathbb{P}\left\{\rv{e}_k>\delta\right\}$.
\begin{corollary} \label{prop:violation_prob_corollary}
The outage probability that the amplitude $\rv{e}_k$ at a specific frequency bin $k$ exceeds the positive value $\delta>0$ is bounded by
\begin{equation}
\mathbb{P}\left\{\rv{e}_k>\delta\right\}\leq \exp\left(-\alpha\zeta_{\rm ACF}{\left(1+\frac{N\delta}{2}\right)^2}\right)
\end{equation}
\begin{IEEEproof}
Please refer to Appendix \ref{sec:proof_violation_prob_corollary}.
\end{IEEEproof}
\end{corollary}

Next, by using the propositional logic reasoning and the union bound, we obtain the following theoretical lower bound.
\begin{proposition}[\ac{pslr} bound]\label{prop:psl_prop}
The probability that the \ac{pslr} is no greater than the positive value $g(\gamma)$ is bounded by
\begin{equation}
\mathbb{P}\left\{\ac{pslr}\leq g (\gamma)\right\}\geq 1- \mathbb{P}\{|\RS{K}|\geq 2\}-N e^{-\alpha\zeta_{\rm ACF}{\left(1+\frac{N\gamma}{2}\right)^2}}.
\end{equation}
In particular, when at most one entry of the vector $\M{F}_N\RV{x}_{\rm{ACF}}$ is negative, namely $|\RS{K}|\leq 1$, we have
$$
\mathbb{P}\left\{\ac{pslr}\leq g (\gamma)\right\}\geq 1-N e^{-\alpha\zeta_{\rm ACF}{\left(1+\frac{N\gamma}{2}\right)^2}}.
$$
\begin{IEEEproof}
Please refer to Appendix \ref{sec:proof_psl_prop}.
\end{IEEEproof}
\end{proposition}

Similar to the discussion in Remark~\ref{rem:approx_violation_prob}, when $N$ is sufficiently large, the central limit theorem-based approximation can also be employed. Therefore, we can readily obtain the following result.
\begin{remark}\label{prop:approx_psl}
When $N$ is sufficiently large and at most one entry of the vector $\M{F}_N\RV{x}_{\rm{ACF}}$ is negative, the probability that the \ac{pslr} is no greater than the positive value $g(\gamma)$ is approximated as
\begin{equation}
\mathbb{P}\left\{\ac{pslr} \leq g (\gamma)\right\} \approx 1-N
\Phi\left(-\frac{1+\frac{N\gamma}{2}}{\sigma}\right).
\end{equation}

To be specific, relying on Gaussian approximation, we have
\begin{subequations}
\begin{align*}
\mathbb{P}\left\{\rv{e}_k>\delta\right\}
&=\mathbb{P}\left\{[\M{F}_N\RV{x}_{\rm{ACF}}]_{k}<-\frac{\delta\sqrt{N}}{2}\right\} \\
&\approx \mathbb{P}\left\{\frac{\sqrt{N}[\M{F}_N{\RV{x}}_{\rm{ACF}}]_{k}-1}{\sigma}<\frac{-\frac{N\delta}{2}-1}{\sigma}\right\}\\
&=\Phi\left(-\frac{1+\frac{N\delta}{2}}{\sigma}\right), \label{approx_psl} \tag{\theequation}
\end{align*}
\end{subequations}
where the standard deviation $\sigma$ regarding this central-limit approximation is defined in \eqref{sigma_def}, with detailed discussion in Sec.~\ref{ssec:ber_phase}. With the aid of the derivation in Appendix \ref{sec:proof_psl_prop}, we obtain this remark.
\end{remark}

\begin{figure}[t]
	\centering
	\includegraphics[width=0.9\linewidth]{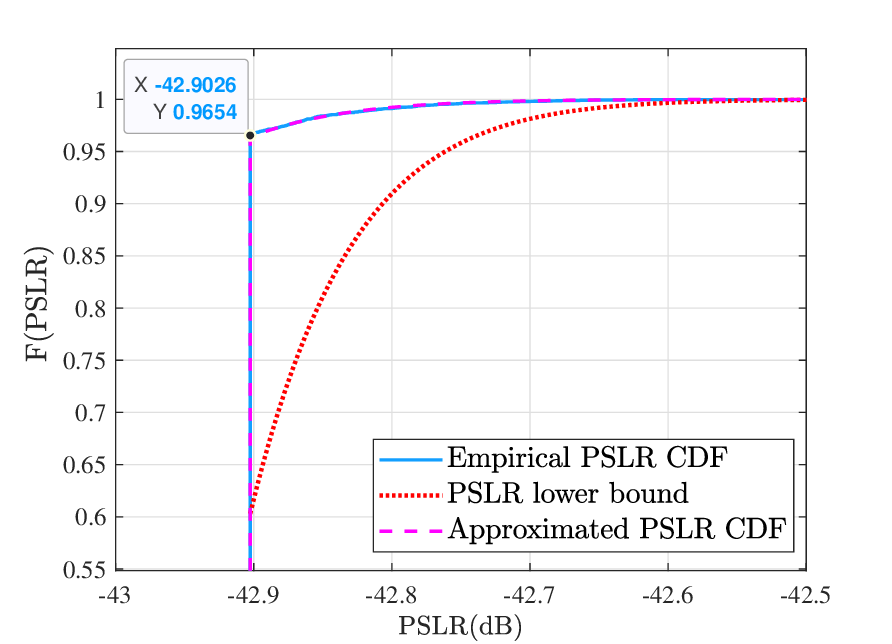}
	\caption{The empirical \ac{cdf}, theoretical lower bound and approximated \ac{cdf} of the \ac{pslr}, for \ac{acfk} with sidelobe attenuation factor $\zeta_{\rm{ACF}}=10~{\rm{dB}}$, $N=1951$, and 32-\ac{qam} constellation.}
	\label{fig:psl_bound}
\end{figure}

To provide an intuitive insight into the impacts of the power spectral non-negativity constraint violation on the sensing performance, we carry out numerical simulations to visually demonstrate the distribution of the $\ac{pslr}$. We consider \ac{acfk} with $N=1951$ subcarriers. The constellation $\Set{S}_{\rm{s}}$ is configured as a 32-\ac{qam} constellation, while the sidelobe attenuation factor is set to $\zeta_{\rm{ACF}}=10~{\rm{dB}}$. We plot the empirical \ac{cdf} of the \ac{pslr} of its \ac{pacf} in Fig.~\ref{fig:psl_bound} by performing $10000$ independent random realizations. With Proposition \ref{prop:nominal_pacf}, the value of the \ac{pslr} of its nominal \ac{pacf} is calculated as ${\ac{pslr}}=-42.9026~{\rm{dB}}$. As shown in Fig.~\ref{fig:psl_bound}, the probability that the nominal \ac{pacf} satisfies the power spectral non-negativity constraint reaches $0.9654$. The empirical \ac{cdf} converges to unity when the \ac{pslr} is approximately $-42.6~{\rm{dB}}$, which implies that the effect of the random fluctuation term $\sqrt{N}\M{F}_N^{\rm{H}}\RV{e}$ is often negligible. The theoretical lower bound also validates this phenomenon. Besides, it can be observed that the approximated \ac{cdf} closely coincides with the empirical \ac{cdf}, which verifies that the scenario with only one negative entry in the vector $\M{F}_N\RV{x}_{\rm{ACF}}$ is dominantly prevalent in practical \ac{acfk} systems, when the sidelobe attenuation factor $\zeta_{\rm{ACF}}$ is sufficiently large.


\subsection{Coherent Processing Over Multiple Channel Uses: Ambiguity Function Keying}
In the previous subsection we have considered the \ac{acfk} under a single channel use. In a different context, coherent integration across multiple channel uses (also known as ``\ac{ofdm} symbols'') has been shown to improve the \ac{esl} of \ac{ofdm} \cite{iceberg}. To elaborate, assuming that the targets remain stationary across $M$ channel uses, the \ac{isac} \ac{tx} randomly generates $M$ \ac{iid} symbol sequences from a given constellation, and averages over their corresponding \ac{mf} output at the sensing \ac{rx}. Through the coherent integration, the \ac{esl} can be reduced by a factor of $M$ \cite{iceberg}. Moreover, for non-stationary targets, under the assumption that the Doppler shift is negligible compared to the subcarrier spacing, coherent integration can be generalized to the \ac{2dfft} processing, under which the coherent processing over $M$ channel uses would reduce the \ac{esl} of the \ac{fstaf} by a factor of $M$ \cite{zhang2025discrete}.

In contrast to \ac{esl}, since averaging across large samples would yield near-Gaussian sidelobe distributions, according to the central limit theorem, coherent integration does not guarantee a controllable \ac{psl} reduction. To address this problem, we propose a generalization of the \ac{acfk}, termed as the \acf{afk}, to achieve deterministic \ac{psl} reductions when multiple channel uses are available, as characterized by the following definition.
\begin{definition}
The \ac{afk} over $N$ subcarriers and $M$ channel uses (also termed as ``transmission slots'') is defined as
\begin{equation}\label{afk}
\RM{X}=\RM{X}_{\rm{p}}\odot\RM{X}_{\rm{a}}=\RM{X}_{\rm{p}}\odot\sqrt{\sqrt{MN}\M{F}_N\RM{X}_{\rm{AF}}\M{F}_M^{\rm H}}
\end{equation}
where $\RM{X}_{\rm{p}} \in \mathbb{C}^{N \times M}$ denotes the phase component of the transmitted communication symbol matrix $\RM{X}$. These phases are drawn from a \ac{psk} constellation $\Set{S}_{\rm{p}}$ in an \ac{iid} manner. $\RM{X}_{\rm{a}} \in \mathbb{C}^{N \times M}$ denotes the amplitude component of the communication symbol matrix $\RM{X}$. $\RM{X}_{\rm{AF}} \in \mathbb{C}^{N \times M}$ denotes a matrix in the delay-Doppler domain, satisfying the conjugate symmetric property about the origin, given by
\begin{equation}
[\RM{X}_{\rm{AF}}]_{i,j} = [\RM{X}_{\rm{AF}}]^*_{1+\langle N+1-i \rangle _N, 1+\langle M+1-j \rangle _M},
\end{equation}
where $i = 1,\cdots,N$ and $j = 1,\cdots,M$.
\end{definition}

We refer to $\RM{X}_{\rm{AF}}$ and $\sqrt{MN}\M{F}_N\RM{X}_{\rm{AF}}\M{F}_M^{\rm H}$ as the \emph{nominal \ac{fstaf}} and the \emph{nominal two-dimensional power spectrum}, respectively. These names follow from the fact that the two-dimensional power spectral non-negativity constraint violation may still occur. The nominal \ac{fstaf} design becomes more sophisticated in the delay-Doppler domain. Hence, we classify such designs into four categories and enumerate the corresponding structural configurations of the nominal \ac{fstaf} as follows.
\begin{definition} \label{dd_symbols} The structures of the nominal \ac{fstaf} include four categories,
\begin{enumerate}
\item $N$ is odd and $M$ is odd
\begin{equation}
\RM{X}_{\rm{AF}} = \left[
{\begin{array}{*{20}{c}}
1 & \RV{x}_a^{\rm T} & (\M{P}_{n_c}\RV{x}_a)^{\rm H}\\
\RV{x}_b & \RM{X}_A & \RM{X}_B\\
\M{P}_{n_r}\RV{x}_b^* & \M{P}_{n_r}\RM{X}_B^*\M{P}_{n_c} & \M{P}_{n_r}\RM{X}_A^*\M{P}_{n_c}
\end{array}}
\right]
\end{equation}
\item $N$ is odd and $M$ is even
\begin{equation}
\RM{X}_{\rm{AF}} \!=\! \left[
{\begin{array}{*{20}{c}}
1 & \RV{x}_a^{\rm T}  & 0 & (\M{P}_{n_c}\RV{x}_a)^{\rm H} \\
\RV{x}_b & \RM{X}_A  & \RV{x}_d & \RM{X}_B\\
\M{P}_{n_r}\RV{x}_b^* & \M{P}_{n_r}\RM{X}_B^*\M{P}_{n_c} & \M{P}_{n_r}\RV{x}_d^* & \M{P}_{n_r}\RM{X}_A^*\M{P}_{n_c}
\end{array}}
\right]
\end{equation}
\item $N$ is even and $M$ is odd
\begin{equation}
\RM{X}_{\rm{AF}} = \left[
{\begin{array}{*{20}{c}}
1 & \RV{x}_a^{\rm T} & (\M{P}_{n_c}\RV{x}_a)^{\rm H}\\
\RV{x}_b & \RM{X}_A & \RM{X}_B\\
0 & \RV{x}_c^{\rm T} & (\M{P}_{n_c}\RV{x}_c)^{\rm H}\\
\M{P}_{n_r}\RV{x}_b^* & \M{P}_{n_r}\RM{X}_B^*\M{P}_{n_c} & \M{P}_{n_r}\RM{X}_A^*\M{P}_{n_c}
\end{array}}
\right]
\end{equation}
\item $N$ is even and $M$ is even
\begin{equation}
\RM{X}_{\rm{AF}} \!=\! \left[
{\begin{array}{*{20}{c}}
1 & \RV{x}_a^{\rm T}  & 0 & (\M{P}_{n_c}\RV{x}_a)^{\rm H} \\
\RV{x}_b& \RM{X}_A  & \RV{x}_d & \RM{X}_B\\
0 & \RV{x}_c^{\rm T} & 0 & (\M{P}_{n_c}\RV{x}_c)^{\rm H}\\
\M{P}_{n_r}\RV{x}_b^* & \M{P}_{n_r}\RM{X}_B^*\M{P}_{n_c} & \M{P}_{n_r}\RV{x}_d^* & \M{P}_{n_r}\RM{X}_A^*\M{P}_{n_c}
\end{array}}
\right]
\end{equation}
\end{enumerate}
where $\RV{x}_a,\RV{x}_c \in \mathbb{C}^{n_c}$ and $\RV{x}_b,\RV{x}_d \in \mathbb{C}^{n_r} $ denote the symbol vectors, while $\RM{X}_A, \RM{X}_B \in \mathbb{C}^{n_r\times n_c}$ denote the symbol matrices, with $n_r=\lfloor\frac{N-1}{2}\rfloor$ and $n_c=\lfloor\frac{M-1}{2}\rfloor$. We denote each entry of the vectors and matrices as $\widetilde{\rv{x}}_{\rm{s}}$, which can be expressed as
\begin{equation}\label{afksymbols}
\widetilde{\rv{x}}_{\rm{s}} = \frac{\rv{x}_{\rm{s}}}{\sqrt{MN\beta_{\rm AF}\zeta_{\rm AF}}} \in \mathbb{C}
\end{equation}
where $\rv{x}_{\rm{s}}$ denotes a communication symbol in the delay-Doppler domain, which is drawn from a constellation $\Set{S}_{\rm{s}}$ in an \ac{iid} manner. $\zeta_{\rm AF}$ denotes a positive sidelobe attenuation factor, $\beta_{\rm AF} = \max_{\rv{x}_{\rm{s}}\in{\Set{S}_{\rm{s}}}}|\rv{x}_{\rm{s}}|^2 $ denotes the peak symbol power of the constellation $\Set{S}_{\rm{s}}$.
\end{definition}

We proceed to evaluate the nominal sensing performance by analyzing the \ac{psl} and \ac{esl} of the random nominal \ac{fstaf}, as presented in the following proposition.
\begin{proposition}\label{prop:afkpsl}
The \ac{psl} and \ac{esl} of the nominal \ac{fstaf} at a specific non-zero-valued sidelobe bin $(k,l)$ are respectively given by
\begin{subequations}
\begin{align}
\ac{psl} &= \frac{1}{MN\zeta_{\rm AF}},\\
[\ac{esl}]_{k,l} &= \frac{1}{MN\zeta_{\rm AF}} \cdot \frac{\mathbb{E}\left\{|{\rv{x}_{\rm{s}}}|^2\right\} }{\max_{\rv{x}_{\rm{s}}\in\Set{S}_{\rm{s}}}|\rv{x}_{\rm{s}}|^2}.
\end{align}
\end{subequations}
When the nominal \ac{fstaf} satisfies the two-dimensional power spectral non-negativity constraint, the \ac{fstaf} of \ac{afk} coincides with its nominal \ac{fstaf}, as given by $\RM{R}_{\RM{X}} = \RM{X}_{\rm{AF}}$.
\end{proposition}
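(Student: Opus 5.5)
The proof will mirror that of Proposition~\ref{prop:nominal_pacf}, lifted to two dimensions. First I fix the two-dimensional analogue of \eqref{pacf_def}. By the two-dimensional Wiener--Khinchin theorem, the \ac{fstaf} of a frequency--slot matrix $\RM{X}$ is
\begin{equation*}
\RM{R}_{\RM{X}}=\frac{1}{\sqrt{MN}}\M{F}_N^{\rm H}|\RM{X}|^2\M{F}_M,
\end{equation*}
normalized so that its $(1,1)$ entry is the total energy divided by $MN$. This is the inverse of the map $\RM{X}_{\rm AF}\mapsto\sqrt{MN}\M{F}_N\RM{X}_{\rm AF}\M{F}_M^{\rm H}$ used in \eqref{afk}. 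I then plug in \eqref{afk}. Since every entry of $\RM{X}_{\rm p}$ is a \ac{psk} symbol of unit modulus,
\begin{equation*}
|\RM{X}|^2=\bigl|\sqrt{MN}\M{F}_N\RM{X}_{\rm AF}\M{F}_M^{\rm H}\bigr|,
\end{equation*}
exactly as in \eqref{pacf}. This gives $\RM{R}_{\RM{X}}=\frac{1}{\sqrt{MN}}\M{F}_N^{\rm H}\bigl|\sqrt{MN}\M{F}_N\RM{X}_{\rm AF}\M{F}_M^{\rm H}\bigr|\M{F}_M$.

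Next I check that the nominal two-dimensional power spectrum is real. I write its $(k,l)$ entry as a double sum $\sum_{i,j}[\RM{X}_{\rm AF}]_{i,j}e^{-j2\pi (k-1)(i-1)/N}e^{j2\pi(l-1)(j-1)/M}$, up to the normalization. I pair each index $(i,j)$ with $(1+\langle N+1-i\rangle_N,\,1+\langle M+1-j\rangle_M)$. The conjugate-symmetry condition in the \ac{afk} definition then shows that each pair contributes a term plus its complex conjugate. Self-paired indices, namely the origin and the half-bins for even $N$ or $M$, are real by construction: either $1$ or $0$, as in Definition~\ref{dd_symbols}. The only thing to verify is that the four block layouts of Definition~\ref{dd_symbols}, built from the $\M{P}_{n_r}$ and $\M{P}_{n_c}$ flips and conjugations, really realize this index reflection. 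That amounts to checking that $\M{P}_{n_r}\RM{X}_A^*\M{P}_{n_c}$ sits at the reflected position of $\RM{X}_A$, and similarly for the other blocks.

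When the non-negativity constraint $\sqrt{MN}\M{F}_N\RM{X}_{\rm AF}\M{F}_M^{\rm H}\succeq\M{0}$ holds, the magnitude can be dropped. Unitarity of $\M{F}_N$ and $\M{F}_M$ then gives $\RM{R}_{\RM{X}}=\M{F}_N^{\rm H}\M{F}_N\RM{X}_{\rm AF}\M{F}_M^{\rm H}\M{F}_M=\RM{X}_{\rm AF}$, which is the last claim.

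For the \ac{psl} and \ac{esl} I read the sidelobe values directly off $\RM{X}_{\rm AF}$. Every non-zero sidelobe entry is either some $\widetilde{\rv{x}}_{\rm s}$ or its conjugate, and by \eqref{afksymbols} its squared modulus is $|\rv{x}_{\rm s}|^2/(MN\beta_{\rm AF}\zeta_{\rm AF})$. Taking the maximum over the constellation gives $\beta_{\rm AF}/(MN\beta_{\rm AF}\zeta_{\rm AF})=1/(MN\zeta_{\rm AF})$. This maximum is attained whenever a peak-power symbol occurs, and it is in any case the tight deterministic bound, matching the convention of Proposition~\ref{prop:nominal_pacf}. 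Taking expectations under the \ac{iid} symbol model gives $[\ac{esl}]_{k,l}=\mathbb{E}\{|\rv{x}_{\rm s}|^2\}/(MN\zeta_{\rm AF}\beta_{\rm AF})$. The only real obstacle is the bookkeeping in the symmetry check for the four parity cases; everything else is a direct transcription of the one-dimensional argument.
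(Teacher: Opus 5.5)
Your proposal is correct and follows essentially the same route as the paper's proof. You read the PSL and ESL directly off the entries of $\RM{X}_{\rm AF}$ via \eqref{afksymbols}, write $\RM{R}_{\RM{X}}=\frac{1}{\sqrt{MN}}\M{F}_N^{\rm H}|\RM{X}|^2\M{F}_M$, use the unit modulus of the PSK entries to reduce $|\RM{X}|^2$ to the magnitude of the nominal two-dimensional power spectrum, drop the magnitude under non-negativity, and cancel the unitary DFTs. Two points you add go slightly beyond what the paper writes out, and both are sound: you check that the block layouts of Definition~\ref{dd_symbols} realize the conjugate-symmetric index reflection, so the nominal spectrum is real, and you note that the PSL value $1/(MN\zeta_{\rm AF})$ is attained only when some peak-power symbol occurs.
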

\begin{IEEEproof}
Please refer to Appendix \ref{sec:proof_afkpsl}.
\end{IEEEproof}

\begin{figure}[t]
	\centering
	\includegraphics[width=1\linewidth]{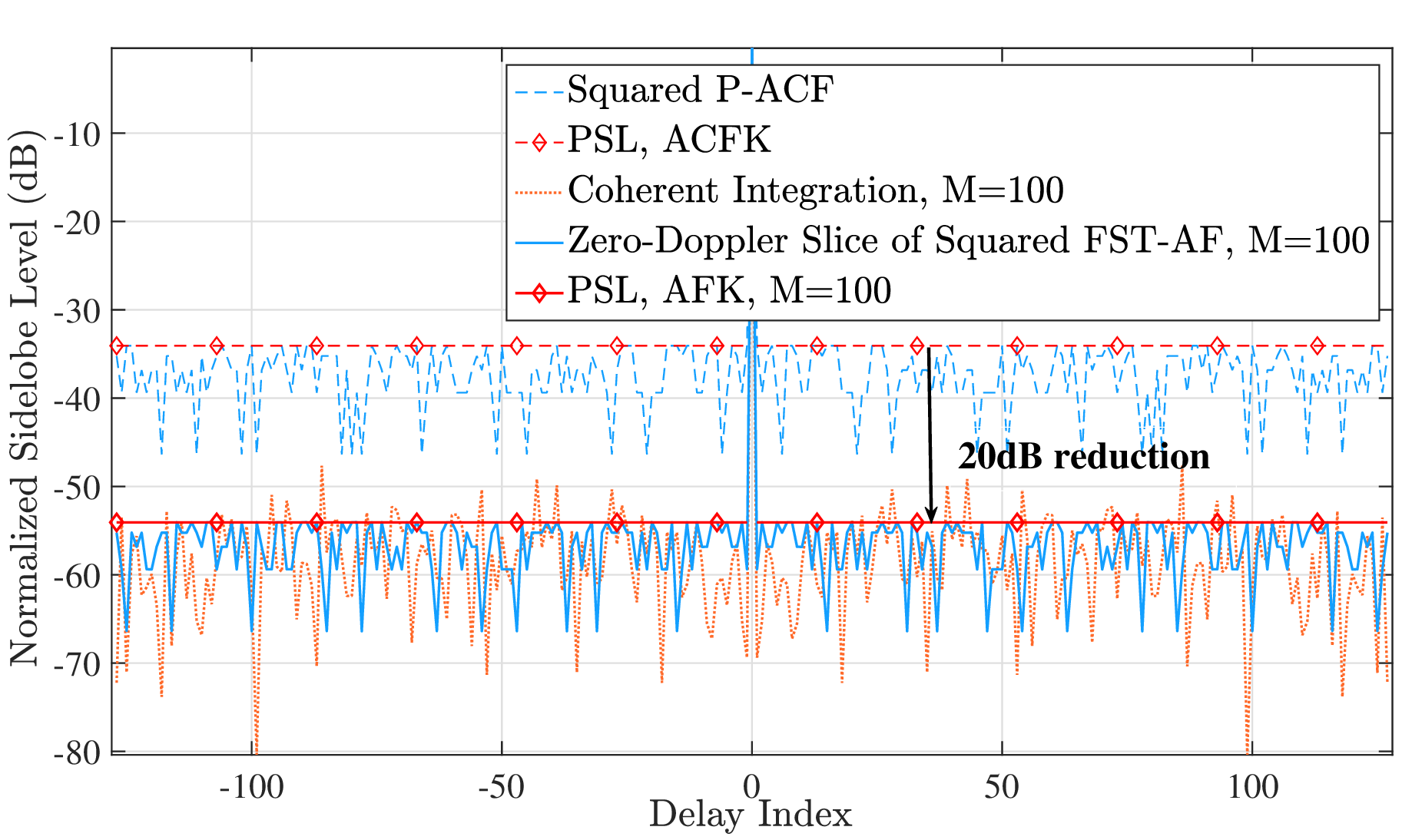}
	\caption{A single realization of the squared nominal \ac{pacf} and the zero-Doppler slice of a single realization of the squared nominal \ac{fstaf}, for \ac{acfk} and \ac{afk}, with the sidelobe attenuation factors $\zeta_{\rm AF}=\zeta_{\rm{ACF}}=10~{\rm{dB}}$, $N=257$, under 32-\ac{qam} constellation. \ac{acfk} corresponds to $M=1$, whereas \ac{afk} uses $M=100$.}
	\label{fig:fstaf}
\end{figure}

Clearly, by exploiting $M$ transmission slots, the \ac{afk} can suppress the \ac{psl} of the nominal \ac{fstaf} by a factor of $M$. To facilitate an intuitive understanding of \ac{afk}, the zero-Doppler slice of the squared nominal \ac{fstaf} is plotted in Fig.~\ref{fig:fstaf}. In particular, we consider \ac{acfk} and \ac{afk} with $N=257$ subcarriers, and especially for \ac{afk} we use $M=100$ transmission slots. The sidelobe attenuation factor is set to $\zeta_{\rm{AF}}=10~{\rm{dB}}$. It can be seen that the coherent integration fails to achieve precise \ac{psl} control, whereas \ac{afk} enables precise control of the \ac{psl} of the nominal \ac{fstaf}. As indicated by Proposition \ref{prop:afkpsl}, compared with its \ac{acfk} counterpart, a $20~{\rm{dB}}$ \ac{psl} reduction can be observed. When the sidelobe attenuation factor $\zeta_{\rm{AF}}$ is sufficiently large, the \ac{fstaf} coincides with its nominal \ac{fstaf} with high probability, and hence a controllable \ac{psl} reduction of its \ac{fstaf} can be obtained.

\section{Transceiver Design for \ac{acfk}} \label{sec:architecture}
In this section, we present a reference design for monostatic \ac{isac} transceivers employing \ac{acfk} over quasi-static multipath channels. We establish the signal processing pipeline for both sensing and communication tasks, as illustrated in Fig.~\ref{fig:system_design}.

\begin{figure*}[t]
    \centering
    \includegraphics[width=0.95\linewidth]{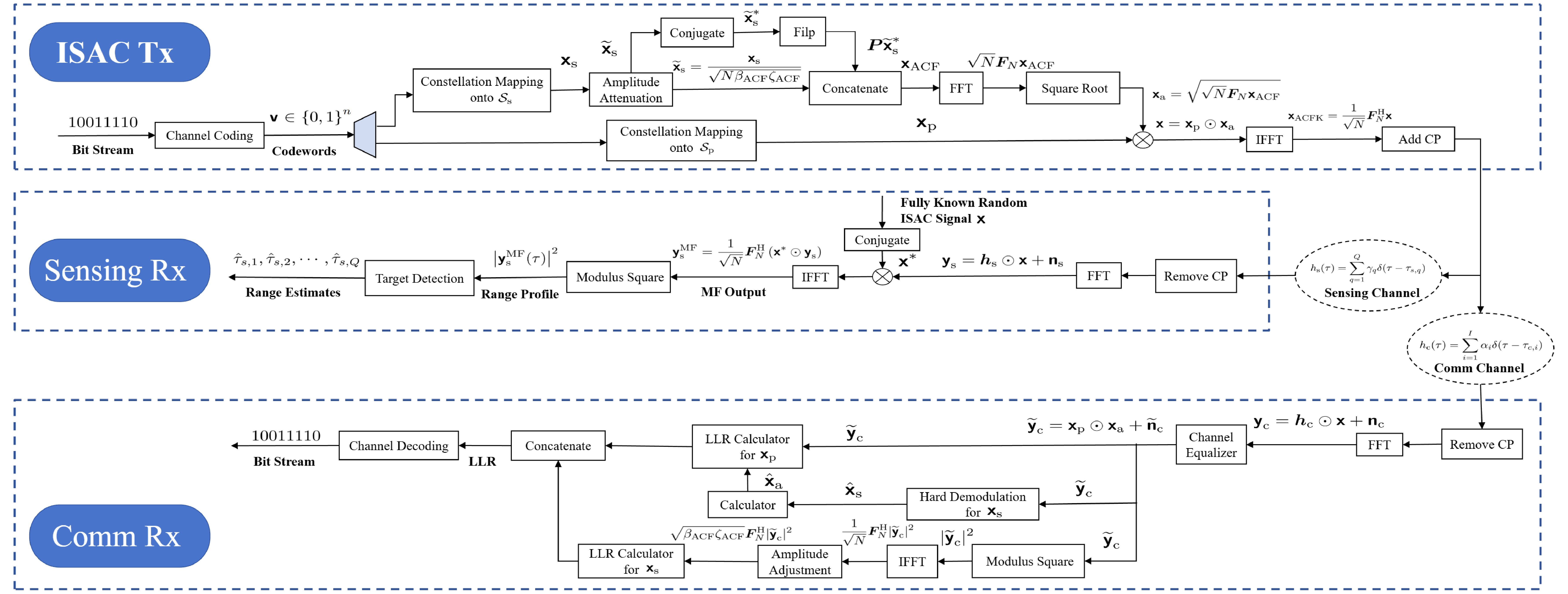}

    \caption{A signal processing pipeline for a monostatic \ac{isac} system employing \ac{acfk} over quasi-static multipath channels.}
    \label{fig:system_design}
\end{figure*}

\subsection{\ac{isac} Transmitter Design}
As can be observed from the top row of Fig.~\ref{fig:system_design}, at the transmitter side, a random bit stream is first generated from the source. Each block of $k$ information bits is then encoded into a codeword $\RV{v}\in\left\{0,1\right\}^n$ with the rate given by $R_{\rm c}=\frac{k}{n}<1$, where $n$ denotes the codeword length. The codeword is then split into two separate streams, corresponding to amplitude and phase components of the spectrum, respectively. Specifically, one of the data streams is mapped onto the \ac{psk} constellation $\Set{S}_{\rm{p}}$, generating the phase component $\RV{x}_{\rm{p}}$. The other stream is utilized to construct the amplitude component $\RV{x}_{\rm{a}}$ through a series of transformations, as prescribed in the design of \ac{acfk}, including constellation mapping onto $\Set{S}_{\rm{s}}$, amplitude attenuation, the conjugate operation, \ac{dft} and square-root operation. After the two streams are generated, they are combined by an element-wise multiplication, which generates the frequency-domain communication symbol vector $\RV{x}$. Finally, these symbols are modulated onto subcarriers. The time-domain representation of the \ac{isac} signal $\RV{x}_{\rm {ACFK}} \in \mathbb{C}^{N}$ is thus given by
\begin{equation}
\RV{x}_{\rm{ACFK}} = \frac{1}{\sqrt{N}}\M{F}_N^{\rm H}\RV{x} = \frac{1}{\sqrt{N}}\M{F}_N^{\rm H}\left(\RV{x}_{\rm{p}}\odot\sqrt{\sqrt{N}\M{F}_N\RV{x}_{\rm{ACF}}}\right).
\end{equation}

From a practical perspective, \ac{acfk} can be implemented by modifying and amalgamating the existing \ac{ofdm} and \ac{sc} modulation schemes. As shown in Fig.~\ref{fig:system_design}, the generation of the phase component $\RV{x}_{\rm{p}}$ completely aligns with the \ac{ofdm} modulation scheme, while the generation of the amplitude component $\RV{x}_{\rm{a}}$ is a variant of the \ac{sc} modulation scheme, with the modification that a square-root operation is introduced at the \ac{isac} \ac{tx}, along with the corresponding magnitude-squared operation at the communication \ac{rx}.

For the channel coding, we employ a joint coding scheme, in which the information bits modulated on amplitude and phase are jointly encoded using a single codebook. Naturally, an alternative approach is to modulate information bits separately on amplitude and phase, referred to as the ``separate coding scheme'' hereafter. As we shall see in Sec.~\ref{sec:simulations}, the joint coding scheme enables encoded bits under favorable channel conditions to assist the decoding of bits suffering from poor channel quality, thereby yielding a lower total \ac{ber}.

\begin{remark}[Generalized transmitter design for \ac{afk}]
The proposed reference design can be readily extended to \ac{afk}. To elaborate, the amplitude component $\RM{X}_{\rm{a}}$ is generated by mapping communication symbols in a structured manner onto the delay-Doppler domain in accordance with Definition \ref{dd_symbols}, followed by \ac{2dfft}. After performing an element-wise multiplication between $\RM{X}_{\rm{a}}$ and $\RM{X}_{\rm{p}}$, the time-frequency-domain communication symbol matrix $\RM{X}$ is obtained. Finally, these symbols are modulated onto subcarriers by taking inverse \ac{dft} on each column of $\RM{X}$. To conclude, \ac{afk} can be implemented by modifying and amalgamating the existing \ac{ofdm} and \ac{otfs} modulation schemes. The generation of the phase component $\RM{X}_{\rm{p}}$ completely aligns with the \ac{ofdm} modulation scheme, while the generation of the amplitude component $\RM{X}_{\rm{a}}$ is a variant of the \ac{otfs} modulation scheme.
\end{remark}

\subsection{Communication Receiver Design}\label{ssec:comm_rx}
The architecture of the communication \ac{rx} is shown in the bottom row of Fig.~\ref{fig:system_design}. The frequency-domain representation of the received communication signal is expressed as
\begin{equation}
\RV{y}_{\rm{c}} = \V{h}_{\rm {c}} \odot \RV{x}_{\rm{p}}\odot\RV{x}_{\rm{a}} + \RV{n}_{\rm c} = \V{h}_{\rm {c}} \odot \RV{x}_{\rm{p}}\odot\sqrt{\sqrt{N}\M{F}_N\RV{x}_{\rm{ACF}}} + \RV{n}_{\rm c}.
\end{equation}
Here, we propose a minimalist communication \ac{rx} design for \ac{acfk}, serving as a proof of concept as well as a benchmark for the \ac{ber} performance analysis. We assume that $\V{h}_{\rm {c}}$ is perfectly known at the communication \ac{rx} by means of channel estimation prior to the transmission. To recover the transmitted communication symbols, the \ac{isi} needs to be eliminated through channel equalization. Without loss of generality, we employ the frequency-domain \ac{zf} equalizer, which is the most basic linear channel equalization approach. After the frequency-domain \ac{zf} equalization, the received signal is represented as
\begin{equation}\label{afawgn}
\widetilde{\RV{y}}_{\rm{c}} = \RV{x}_{\rm{p}}\odot\RV{x}_{\rm{a}} + \widetilde{\RV{n}}_{\rm{c}} =\RV{x}_{\rm{p}}\odot\sqrt{\sqrt{N}\M{F}_N\RV{x}_{\rm{ACF}}} + \widetilde{\RV{n}}_{\rm{c}}
\end{equation}
where $\widetilde{\RV{y}}_{\rm{c}} = \frac{\RV{y}_{\rm{c}}}{\V{h}_{\rm{c}}}$ and $\widetilde{\RV{n}}_{\rm{c}} = \frac{\RV{n}_{\rm c}}{\V{h}_{\rm{c}}}$.
Since linear transformations preserve the complex Gaussian distribution, the noise $\widetilde{\RV{n}}_{\rm{c}}$ remains a complex Gaussian distributed random vector, namely $\widetilde{\RV{n}}_{\rm{c}}\sim\mathcal{CN}$($\V{0}, \rm{diag}(\widetilde{\V{\sigma}}^2$)), where $\widetilde{\V{\sigma}}^2=\left[\widetilde{\sigma}_1^2;\cdots;\widetilde{\sigma}_{N}^2\right] \in \mathbb{R}^{N}$ and $\widetilde{\sigma}_{k}^2=\sigma_{\rm c}^2|[\V{h}_{\rm c}]_k|^{-2}$, $k=1,2,\cdots,N$.

The communication \ac{rx} aims to accurately decode all the communication symbols from the received signal. Since \ac{acfk} comprises two separated streams of modulated communication signals, it is required to reconstruct these two communication symbol vectors $\RV{x}_{\rm{p}}$ and $\RV{x}_{\rm{s}}$, which convey $N$ \ac{iid} \ac{psk} symbols in the frequency domain and $\lfloor\frac{N-1}{2}\rfloor$ \ac{iid} symbols in the \ac{acf} domain, respectively.

\begin{figure*}[!b]
\vspace{-4mm}
\hrule
\setcounter{MYtempeqncnt}{\value{equation}}
\setcounter{equation}{63}
\begin{subequations} \label{qamdemoifft}
\begin{align*}
\frac{1}{\sqrt{N}}\M{F}_N^{\rm H}|\widetilde{\RV{y}}_{\rm{c}}|^2 &= \frac{1}{\sqrt{N}}\M{F}_N^{\rm{H}} |\sqrt{N}\M{F}_N\RV{x}_{\rm{ACF}}| +\frac{1}{\sqrt{N}}\M{F}_N^{\rm H}|\widetilde{\RV{n}}_{\rm{c}}|^2+\frac{1}{\sqrt{N}}\M{F}_N^{\rm H}\left(2\RV{x}_{\rm{a}}\odot\widetilde{\RV{n}}_{\rm{x}}\right)\\
&= \RV{x}_{\rm{ACF}} + \sqrt{N}\M{F}_N^{\rm{H}}\RV{e} +\frac{1}{\sqrt{N}}\M{F}_N^{\rm H}|\widetilde{\RV{n}}_{\rm{c}}|^2+\frac{1}{\sqrt{N}}\M{F}_N^{\rm H}\left(2\RV{x}_{\rm{a}}\odot\widetilde{\RV{n}}_{\rm{x}}\right). \tag{\theequation}
\end{align*}
\end{subequations}
\setcounter{equation}{\value{MYtempeqncnt}}
\end{figure*}

\begin{figure*}[!b]
\vspace{-4mm}
\hrule
\setcounter{MYtempeqncnt}{\value{equation}}
\setcounter{equation}{64}
\begin{equation}\label{qamdemodraw}
\sqrt{\beta_{\rm ACF}\zeta_{\rm ACF}}\M{F}_N^{\rm H}|{\widetilde{\RV{y}}_{\rm{c}}}|^2\!=\! \sqrt{N\beta_{\rm ACF}\zeta_{\rm ACF}}\RV{x}_{\rm{ACF}} \!+\! N\sqrt{\beta_{\rm ACF}\zeta_{\rm ACF}}\M{F}^{\rm{H}}\RV{e} \!+\!\sqrt{\beta_{\rm ACF}\zeta_{\rm ACF}}\M{F}_N^{\rm H}|{\widetilde{\RV{n}}_{\rm c}}|^2\!+\!\sqrt{\beta_{\rm ACF}\zeta_{\rm ACF}}\M{F}_N^{\rm H}\left(2\RV{x}_{\rm{a}}\odot\widetilde{\RV{n}}_{\rm{x}}\right).
\end{equation}
\setcounter{equation}{\value{MYtempeqncnt}}
\end{figure*}

\subsubsection{The recovery of the \ac{acf} symbol vector}
The sidelobes of the nominal \ac{pacf} $\RV{x}_{\rm{ACF}}$ consist of $\lfloor\frac{N-1}{2}\rfloor$ \ac{iid} communication symbols to be demodulated and their conjugates. We recover these symbols simply through the reversed process of the modulation scheme adopted at the \ac{isac} \ac{tx}. Specifically, by taking the squared magnitude of the received signal and using the constant-modulus property of \ac{psk} constellations, we have
\begin{subequations}
\begin{align*}
|\widetilde{\RV{y}}_{\rm{c}}|^2 &= \left(\RV{x}_{\rm{p}}\odot\RV{x}_{\rm{a}} + \widetilde{\RV{n}}_{\rm{c}}\right)^*\odot\left(\RV{x}_{\rm{p}}\odot\RV{x}_{\rm{a}} + \widetilde{\RV{n}}_{\rm{c}}\right)\\
& = |\RV{x}_{\rm{p}}|^2|\RV{x}_{\rm{a}}|^2 + |\widetilde{\RV{n}}_{\rm{c}}|^2+2\RV{x}_{\rm{a}}\odot{\rm{Re}}\left(\widetilde{\RV{n}}_{\rm{c}}\odot\RV{x}_{\rm{p}}^*\right)\\
& = |\sqrt{N}\M{F}_N\RV{x}_{\rm{ACF}}| + |\widetilde{\RV{n}}_{\rm{c}}|^2 + 2\RV{x}_{\rm{a}}\odot{\rm{Re}}\left(\widetilde{\RV{n}}_{\rm{c}}\odot\RV{x}_{\rm{p}}^*\right).\tag{\theequation} \label{demod_vector}
\end{align*}
\end{subequations}
To facilitate further analysis, let us denote ${\rm{Re}}\left({\widetilde{\RV{n}}_{\rm {c}}}\odot\RV{x}_{\rm{p}}^*\right)$ as $\widetilde{\RV{n}}_{\rm{x}}\in\mathbb{R}^{N}$. By performing the inverse \ac{dft} on the processed signal, we obtain the demodulation model, represented as \eqref{qamdemoifft}. Upon multiplying both sides by the normalized and attenuated coefficient $\sqrt{N\beta_{\rm ACF}\zeta_{\rm ACF}}$, we have \eqref{qamdemodraw}.


Now, according to \eqref{acf} and \eqref{normalization}, we may extract the \ac{acf} symbols $\RV{x}_{\rm s}$ as follows\setcounter{equation}{65}
\begin{align}
\widetilde{\RV{y}}_{\rm demod}:&=\sqrt{\beta_{\rm ACF}\zeta_{\rm ACF}}[\M{F}_N^{\rm H}|\widetilde{\RV{y}}_{\rm c}|^2]_{2:\lfloor(N+1)/2\rfloor} \nonumber \\
&=\RV{x}_{\rm s}+\widetilde{\RV{n}}_1+\widetilde{\RV{n}}_2+\widetilde{\RV{n}}_3, \label{demod_extract}
\end{align}
where
\begin{align*}
\widetilde{\RV{n}}_1&=\sqrt{\beta_{\rm ACF}\zeta_{\rm ACF}}[\M{F}_N^{\rm H}(2\RV{x}_{\rm a}\odot \widetilde{\RV{n}}_{\rm x})]_{2:\lfloor(N+1)/2\rfloor},\\
\widetilde{\RV{n}}_2&=\sqrt{\beta_{\rm ACF}\zeta_{\rm ACF}}[\M{F}_N^{\rm H}|\widetilde{\RV{n}}_{\rm c}|^2]_{2:\lfloor(N+1)/2\rfloor},\\
\widetilde{\RV{n}}_3&=N\sqrt{\beta_{\rm ACF}\zeta_{\rm ACF}}[\M{F}_N^{\rm H}\RV{e}]_{2:\lfloor(N+1)/2\rfloor}.
\end{align*}
The exact a posteriori \ac{llr} can be computed as
\begin{align}
\rv{\lambda}_{k,i}^{\rm a} &=\ln \frac{\mathbb{P}\{\rv{b}_{k,i}=1|[\widetilde{\RV{y}}_{\rm demod}]_k\}}{\mathbb{P}\{\rv{b}_{k,i}=0|[\widetilde{\RV{y}}_{\rm demod}]_k\}}\nonumber \\
&=\ln \frac{\sum_{s \in \Set{S}_{{\rm s},i,1}} \exp(\Lambda_k(s))}{\sum_{s \in \Set{S}_{{\rm s},i,0}} \exp(\Lambda_k(s))},\label{exact_llr_acf}
\end{align}
where $\RV{b}_k=[\rv{b}_{k,1},\dotsc,\rv{b}_{k,n_{\rm c}}]^{\rm T}$ is the bit vector corresponding to the \ac{acf} symbol $[\RV{x}_{\rm s}]_k$, $n_{\rm c}$ denotes the number of bits per \ac{acf} symbol, $\Set{S}_{{\rm s},i,1}$ and $\Set{S}_{{\rm s},i,0} $ denote the subsets of constellation $\Set{S}_{\rm{s}}$ with communication symbols satisfying $\rv{b}_i=1$ and $\rv{b}_i=0$, respectively, and $\Lambda_k(s)$ represents the log-likelihood corresponding to the observation $\widetilde{\RV{y}}_{\rm demod}|[\RV{x}_{\rm s}]_k=s$.

Now, observe that the first term on the right hand side of \eqref{demod_extract}, $\RV{x}_{\rm s}$, is exactly the desired term for demodulation, while the remaining are error terms. In general, these error terms are non-Gaussian, and are mutually dependent across entries. Therefore, the log-likelihood $\Lambda_k(s)$ would be intractable. Despite this, following the common practice in the literature of practical receiver design \cite{1006557,1268000,5779722}, we compute the approximate \emph{a posteriori} \acp{llr} used in soft demodulation by treating these terms as Gaussian, neglecting the inter-symbol dependence and omitting small error terms. In particular, the term $\widetilde{\RV{n}}_3$ is negligible when the violation probability of the non-negativity constraint is small, while the term $\widetilde{\RV{n}}_2$ is negligible in the high-\ac{snr} regime since it is on the order of $O(\sigma_{\rm c}^2)$. Thus the dominant error term is $\widetilde{\RV{n}}_1$. The corresponding covariance matrix (conditioned on $\RV{x}_{\rm a}$) is given by
\begin{subequations}
\begin{align*}
\M{R}&=\beta_{\rm ACF}\zeta_{\rm ACF}\M{S}\M{F}_N^{\rm H}\mathbb{E}_{\RV{n}_{\rm x}}\left\{\left(2\RV{x}_{\rm{a}}\odot\RV{n}_{\rm{x}}\right)(\left(2\RV{x}_{\rm{a}}\odot\RV{n}_{\rm{x}}\right))^{\rm T}\right\}\M{F}_{N}\M{S}^{\rm H}\\
&= 4\beta_{\rm ACF}\zeta_{\rm ACF}\M{S}\M{F}_N^{\rm H}\mathbb{E}_{\RV{n}_{\rm x}}\left\{{{\rm {diag}}(\RV{x}_{\rm{a}})}{\RV{n}_{\rm{x}}}{\RV{n}_{\rm{x}}}^{\rm T}{{\rm {diag}}(\RV{x}_{\rm{a}})}\right\}\M{F}_{N}\M{S}^{\rm H}\\
&= 4\beta_{\rm ACF}\zeta_{\rm ACF}\M{S}\M{F}_N^{\rm H}{{\rm {diag}}(\RV{x}_{\rm{a}})}\mathbb{E}_{\RV{n}_{\rm x}}\left\{\RV{n}_{\rm{x}}\RV{n}_{\rm{x}}^{\rm T}\right\}{{\rm {diag}}(\RV{x}_{\rm{a}})}\M{F}_{N}\M{S}^{\rm H}\\
&= 2\sigma_{\rm{c}}^2\beta_{\rm ACF}\zeta_{\rm ACF}\M{S}\M{F}_N^{\rm H}{{\rm {diag}}(|\RV{x}_{\rm{a}}|^2\odot|\V{h}_{\rm c}|^{-2})}\M{F}_{N}\M{S}^{\rm H}\\
&=\frac{2}{\sqrt{N}}\sigma_{\rm{c}}^2\beta_{\rm ACF}\zeta_{\rm ACF}\M{S}{\rm circ}\left(\RV{r}_{\RV{x}}\circledast (\M{F}_N^{\rm H} |\V{h}_{\rm c}|^{-2})\right)\M{S}^{\rm H} \tag{\theequation}
\end{align*}
\end{subequations}
where $\M{S}$ denotes the selection matrix satisfying $\M{S}\V{z}=[\V{z}]_{2:\lfloor(N+1)/2\rfloor}$.
We observe that the $(k,k)$-th diagonal entry of $\M{R}$ reads
\begin{equation}\label{actual_var}
[\M{R}]_{k,k} = \frac{2\sigma_{\rm{c}}^2\beta_{\rm ACF}\zeta_{\rm ACF}}{N}\V{1}^{\rm T}(|\V{x}_{\rm{a}}|^2\odot|\V{h}_{\rm c}|^{-2}),
\end{equation}
which is identical for all $k$. Similarly, the pseudo-covariance matrix of $\widetilde{\RV{n}}_1$, conditioned on $\RV{x}_{\rm a}$, is given by
\begin{equation}
\M{C} \!=\! \frac{2}{\sqrt{N}}\sigma_{\rm{c}}^2\beta_{\rm ACF}\zeta_{\rm ACF}\M{S}{\rm circ}\left(\RV{r}_{\RV{x}}\circledast (\M{F}_N^{\rm H} |\V{h}_{\rm c}|^{-2})\right)\M{F}_N^2\M{S}^{\rm H},
\end{equation}
where $\M{F}_N^2$ is involutory, and hence $\M{F}_N^2\M{S}^{\rm H}$ selects the conjugate-symmetric positions of a vector corresponding to those selected by $\M{S}$, in the sense that $\V{z} = [1;\M{S}\V{z};\M{S}\M{F}_N^2\V{z}]$. The diagonal entries of $\M{C}$ read
\begin{equation}\label{pseudo_var}
[\M{C}]_{k,k} = \frac{2\sigma_{\rm{c}}^2\beta_{\rm ACF}\zeta_{\rm ACF}}{N}\sum_{n=1}^N\frac{|[\RV{x}_{\rm a}]_n|^2}{|[\V{h}_{\rm c}]_n|^2} e^{j\frac{4\pi}{N}(k-1)(n-1)}.
\end{equation}

According to \eqref{actual_var} and \eqref{pseudo_var}, using the Gaussian approximation and neglecting inter-symbol dependence, the log-likelihood can be approximated by

\begin{align}
\Lambda_k(s)& \!\approx\! -\frac{[\M{R}]_{k,k}|[\widetilde{\RV{y}}_{\rm demod}]_k\!-\!s|^2\!-\!{\rm Re}[[\M{C}]_{k,k}^*([\widetilde{\RV{y}}_{\rm demod}]_k\!-\!s)^2]}{[\M{R}]_{k,k}^2-|[\M{C}]_{k,k}|^2} \nonumber  \\
&\hspace{3mm}-\frac{1}{2}\ln([\M{R}]_{k,k}^2-|[\M{C}]_{k,k}|^2)-\ln\pi. \label{lambda_ks}
\end{align}

In practice, the expression \eqref{lambda_ks} remains not directly computable, since it relies on the \ac{acf} symbols $\RV{x}_{\rm s}$ through $\RV{x}_{\rm a}$. In light of this, we use the further approximation of $\RV{x}_{\rm a}\propto \V{1}$, which is reasonable when the sidelobe attenuation factor $\zeta_{\rm ACF}$ is large. When this is not the case, $\RV{x}_{\rm a}\propto \V{1}$ could serve as the initialization of an iterative symbol detection process.

\begin{remark}[Redundancy of the conjugate \ac{acf}-domain entries]
The proposed receiver demodulates only the non-redundant \ac{acf}-domain symbol vector $\RV{x}_{\rm s}$. Since $|\widetilde{\RV{y}}_{\rm c}|^2$ is real-valued, its inverse \ac{dft} is conjugate symmetric. Hence, the noisy entries corresponding to the conjugate counterpart of $\RV{x}_{\rm s}$ are completely determined by the noisy entries corresponding to $\RV{x}_{\rm s}$, rather than being independent observations. Exploiting these conjugate entries would therefore only duplicate the same soft information and does not provide any diversity gain.
\end{remark}

\subsubsection{The recovery of the phase symbol vector}
We next consider the demodulation of the phase symbol vector $\RV{x}_{\rm{p}}$, which comprises $N$ \ac{iid} \ac{psk} symbols. After frequency-domain \ac{zf} equalization, the $k$-th subcarrier can be written as
\begin{equation}\label{pskdemod}
\widetilde{\rv{y}}_k = \rv{a}_k \rv{x}_{{\rm p},k}+\widetilde{\rv{n}}_k,\qquad k=1,\ldots,N,
\end{equation}
where
\begin{align*}
\rv{x}_{{\rm p},k} = \left[\RV{x}_{\rm{p}}\right]_{k}, \widetilde{\rv{y}}_k=\left[\widetilde{\RV{y}}_{\rm{c}}\right]_{k},\widetilde{\rv{n}}_k=\left[\widetilde{\RV{n}}_{\rm{c}}\right]_{k}, \rv{a}_k\!=\! \left[\RV{x}_{\rm{a}}\right]_{k},
\end{align*}
with $\widetilde{\rv{n}}_k\sim\mathcal{CN}(0,\widetilde{\sigma}_{k}^2)$ and $\widetilde{\sigma}_{k}^2=\sigma_{\rm c}^2|[\V{h}_{\rm c}]_k|^{-2}$. If $\rv{a}_k$ were perfectly known, the conditional likelihood is given by
\begin{equation}
p_{\widetilde{\rv{y}}_k|\rv{x}_{{\rm p},k},\rv{a}_k}(y_k|x,a_k)=\frac{1}{\pi \widetilde{\sigma}_{k}^2}\exp\Bigl(-\frac{|y_k-a_k x|^2}{\widetilde{\sigma}_{k}^2}\Bigr).
\end{equation}
For soft demodulation, the conditional a posteriori \ac{llr} of the $i$-th bit carried by $\rv{x}_{{\rm p},k}$ can be written as
\begin{equation}
\rv{\lambda}_{k,i}^{\rm p} = \ln \frac{\sum_{s\in\Set{S}_{{\rm p},i,1}}\exp\Bigl(-\frac{|\widetilde{\rv{y}}_k-\rv{a}_k s|^2}{\widetilde{\sigma}_{k}^2}\Bigr)}{\sum_{s\in\Set{S}_{{\rm p},i,0}}\exp\Bigl(-\frac{|\widetilde{\rv{y}}_k-\rv{a}_k s|^2}{\widetilde{\sigma}_{k}^2}\Bigr)},
\end{equation}
where $\Set{S}_{{\rm p},i,1}$ and $\Set{S}_{{\rm p},i,0}$ denote the subsets of \ac{psk} symbols whose corresponding $i$-th bit equals $1$ and $0$, respectively.


Similar to the derivation in the \ac{acf} symbol case, the exact a posteriori \ac{llr} is not directly computable, since $\rv{a}_k$'s are determined by the \ac{acf} symbols through $\RV{x}_{\rm a}=\sqrt{\sqrt{N}\M{F}_N\RV{x}_{\rm ACF}}$, which is unknown at the communication \ac{rx}. Therefore, we use a two-step estimator as follows. First, hard demodulation based on the approximate \ac{acf} symbol log-likelihood \eqref{lambda_ks} is adopted to estimate $\RV{x}_{\rm{s}}$. Then, using these estimated communication symbols, an approximation of the coefficient $\rv{a}_k$ can be computed and denoted as $\widehat{\rv{a}}_k$. Finally, soft demodulation on the model \eqref{pskdemod} is conducted, and the approximate a posteriori \ac{llr} may be expressed as
\begin{equation}\label{pskllrapp}
\rv{\lambda}_{k,i}^{\rm p} \approx \ln \frac{\sum_{s\in\Set{S}_{{\rm p},i,1}}\exp\Bigl(-\frac{|\widetilde{\rv{y}}_k-\widehat{\rv{a}}_k s|^2}{\widetilde{\sigma}_{k}^2}\Bigr)}{\sum_{s\in\Set{S}_{{\rm p},i,0}}\exp\Bigl(-\frac{|\widetilde{\rv{y}}_k-\widehat{\rv{a}}_k s|^2}{\widetilde{\sigma}_{k}^2}\Bigr)},
\end{equation}

Based on the approximate \acp{llr} given by \eqref{exact_llr_acf}, \eqref{lambda_ks}, and \eqref{pskllrapp}, the communication \ac{rx} can carry out the joint decoding scheme to recover all transmitted information bits.

\subsection{Sensing Receiver Design}
The sensing receiver design is portrayed in the middle row of Fig.~\ref{fig:system_design}. The frequency-domain representation of the received echo signal is expressed as
\begin{equation}\label{senchannel}
\RV{y}_{\rm s} = \V{h}_{\rm s} \odot \RV{x}_{\rm{p}}\odot\RV{x}_{\rm{a}} + \RV{n}_{\rm s} = \V{h}_{\rm s} \odot \RV{x}_{\rm{p}}\odot\sqrt{\sqrt{N}\M{F}_N\RV{x}_{\rm{ACF}}} + \RV{n}_{\rm s}.
\end{equation}

To extract the target range parameters corresponding to path delays, we employ the standard \ac{mf} approach. In the frequency domain, the \ac{mf} operation can be expressed as follows
\begin{equation}\label{mf}
\begin{aligned}
\RV{y}_{\rm s}^{\rm MF} &= \frac{1}{\sqrt{N}}\M{F}_N^{\rm H}\left(\RV{x}^*\odot\RV{y}_{\rm s}\right).
\end{aligned}
\end{equation}
Substituting \eqref{senchannel} into \eqref{mf}, the output signal after \ac{mf} can be represented as
\begin{subequations}
\begin{align*}
\RV{y}_{\rm s}^{\rm MF} &= \frac{1}{\sqrt{N}}\M{F}_N^{\rm H}\left(\RV{x}^*\odot\RV{y}_{\rm s}\right)\\
&= \frac{1}{\sqrt{N}}\M{F}_N^{\rm H}\left[\RV{x}_{\rm{p}}^*\odot\RV{x}_{\rm{a}}\odot\left(\V{h}_{\rm s} \odot \RV{x}_{\rm{p}}\odot\RV{x}_{\rm{a}} + \RV{n}_{\rm s}\right)\right]\\
&= \frac{1}{\sqrt{N}}\M{F}_N^{\rm H}\left(\V{h}_{\rm s}\odot|\RV{x}_{\rm{a}}|^2+\RV{x}_{\rm{p}}^*\odot\RV{x}_{\rm{a}}\odot\RV{n}_{\rm s}\right)\\
&= \frac{1}{\sqrt{N}}\M{F}_N^{\rm H}\left(\V{h}_{\rm s}\odot|\sqrt{N}\M{F}_N\RV{x}_{\rm{ACF}}|\right)+\RV{n}_{\rm s}^{\rm MF}, \tag{\theequation}
\end{align*}
\end{subequations}
where $\RV{n}_{\rm s}^{\rm MF}$ denotes the output Gaussian noise. For a specific sensing scenario involving $Q$ point-like targets, we can further obtain
\begin{subequations}
\begin{align*}
\RV{y}_{\rm s}^{\rm MF}(\tau)
&= \V{h}_{\rm s}(\tau)\circledast\RV{r}_{\RV{x}}(\tau) +\RV{n}_{\rm s}^{\rm MF}(\tau)\\
&= \sum_{q=1}^Q\gamma_{q}\RV{r}_{\RV{x}}\left(\tau-\tau_{s,q}\right)+\RV{n}_{\rm s}^{\rm MF}(\tau).\tag{\theequation}
\end{align*}
\end{subequations}

The output signal can be interpreted as a linear combination of $Q$ time-shifted versions of the \ac{pacf} $\RV{r}_{\RV{x}}(\tau)$, with additive noise, which is typically referred to as the \emph{range profile} in the radar literature \cite{richards2005fundamentals}. To detect targets of interest, it is typical to identify $Q$ peaks in the squared \ac{mf} output signal $\left|\RV{y}_{\rm s}^{\rm MF}(\tau)\right|^2$. The multi-target detection can be achieved by using adaptive thresholding algorithms such as \ac{cfar} detectors. Finally, the range estimates of these targets are obtained.


\section{Approximate Uncoded \ac{ber} Analysis Over Frequency-Flat Channels} \label{sec:performance}
In this section, we provide an approximate uncoded \ac{ber} analysis for \ac{acfk} over frequency-flat channels, namely for $|\V{h}_{\rm c}|^2\propto \V{1}$. Without loss of generality, we consider the constant channel gain $|\V{h}_{\rm c}|^2=\V{1}$, which is equivalent to incorporating $|\bar{h}|^2$ into the noise variance $\sigma_{\rm{c}}^2$. The purpose of this section is to provide a tractable benchmark, as well as an intuitive understanding, for the basic communication receiver design proposed in Sec.~\ref{ssec:comm_rx}, rather than an exact characterization of its performance, or that of the optimal receiver. The \ac{ber} performance of the \ac{acfk} can be characterized by the \ac{ber} of the symbol vectors $\RV{x}_{\rm{p}}$ and $\RV{x}_{\rm{s}}$. The overall \ac{ber} of \ac{acfk} can be expressed as a linear combination of these two components, while the corresponding combination coefficients depend on the modulation order in practical systems.


\subsection{The \ac{ber} of the \ac{acf} Symbol Vector}
When the non-negativity violation probability is small, the actual \ac{pacf} $\RV{r}_{\RV{x}}$ can be approximated as the nominal \ac{pacf} $\RV{x}_{\rm ACF}$. Since the channel is frequency-flat, we now have
\begin{subequations}
\begin{align}
\M{R} &\approx 2\sigma_{\rm{c}}^2\beta_{\rm ACF}\zeta_{\rm ACF}\M{S}{\rm circ}\left(\RV{x}_{\rm ACF}\right)\M{S}^{\rm H},\\
\M{C} &\approx 2\sigma_{\rm{c}}^2\beta_{\rm ACF}\zeta_{\rm ACF}\M{S}{\rm circ}\left(\RV{x}_{\rm ACF}\right)\M{F}_N^2\M{S}^{\rm H}.
\end{align}
\end{subequations}
Note that the diagonal entries of ${\rm circ}\left(\RV{x}_{\rm ACF}\right)$ are equal to one, while its off-diagonal entries are determined by the attenuated \ac{acf} sidelobes. Under the assumption that $\zeta_{\rm ACF}$ is sufficiently large, we may apply the low-sidelobe approximation of
\begin{equation}
\M{S}{\rm circ}\left(\RV{x}_{\rm ACF}\right)\M{S}^{\rm H}\approx \M{I},~\M{S}{\rm circ}\left(\RV{x}_{\rm ACF}\right)\M{F}_N^2\M{S}^{\rm H}\approx \M{0}.
\end{equation}
Thus, the \ac{acf}-domain demodulation model can be approximated as
\begin{equation}
\widetilde{\RV{y}}_{\rm demod} \approx \RV{x}_{\rm s} +\widetilde{\RV{n}}_1,
\end{equation}
where $\widetilde{\RV{n}}_1\sim \mathcal{CN}(\V{0},2\sigma_{\rm{c}}^2\beta_{\rm{ACF}}\zeta_{\rm{ACF}}\M{I})$.

Observe that this approximated model can be interpreted as a classical \ac{awgn} channel model \cite{proakis2001digital}. Using this interpretation, the \ac{snr} of this channel can be written as
\begin{equation}\label{effective_snr}
\ac{snr} = \frac{\mathbb{E}\left\{|{\rv{x}_{\rm{s}}}|^2\right\}}{2\sigma_{\rm{c}}^2\beta_{\rm ACF}\zeta_{\rm ACF}}.
\end{equation}
The corresponding uncoded \ac{ber} can then be evaluated using the standard detector under \ac{awgn} channels. In light of this, we refer to the \ac{snr} in \eqref{effective_snr} as the \emph{effective \ac{snr}} of the \ac{acf} symbols $\RV{x}_{\rm s}$. It is worth noting that the effective \ac{snr} is relatively low, since the communication symbols are modulated onto the sidelobes of $\RV{x}_{\rm{ACF}}$, whereas the mainlobe does not convey any information. Furthermore, the result reveals an inherent tradeoff between communication and sensing performance under the \ac{acfk} framework. Specifically, a larger sidelobe attenuation factor $\zeta_{\rm{ACF}}$ improves sensing performance, but degrades communication performance.

\subsection{The \ac{ber} of the Phase Symbol Vector}\label{ssec:ber_phase}
Next, we consider the phase symbol vector $\RV{x}_{\rm p}$. Note that the received signal on the $k$-th subcarrier can still be expressed in the form of \eqref{pskdemod}. In the practical receiver design described in Sec.~\ref{ssec:comm_rx}, $\rv{a}_k$ is unknown before the demodulation of $\RV{x}_{\rm s}$, thereby an estimate of $\RV{x}_{\rm s}$ is employed. To obtain a tractable high-\ac{snr} benchmark, in this subsection, we assume that $\RV{x}_{\rm s}$ has been perfectly recovered. Under this idealized assumption, we see that estimation of \ac{psk} symbols $\rv{x}_{{\rm p},k}$ under the model \eqref{pskdemod} can be interpreted as \ac{psk} demodulation over a fading channel with channel coefficients given by $\RV{x}_{\rm{a}}$.


Using the aforementioned interpretation, the power gain of the channel can be represented as
\begin{equation}
\RV{r}_{\rm h}=|\RV{x}_{\rm{a}}|^2 =\left|\sqrt{\sqrt{N}\M{F}_N\RV{x}_{\rm{ACF}}}\right|^2=\sqrt{N}\left|\M{F}_N\RV{x}_{\rm{ACF}}\right|.
\end{equation}
Note that for all $k=1,2,\cdots,N$, we have
$$
[\V{f}_k]_1[\RV{x}_{\rm{ACF}}]_1 = \frac{1}{\sqrt{N}}.
$$
According to the definition of \ac{dft}, we can obtain
\begin{equation}
\widetilde{\V{f}}_k^{\rm T}\widetilde{\RV{x}}_{\rm{ACF}} = \frac{1}{\sqrt{N}}\sum_{n=2}^N [\widetilde{\RV{x}}_{\rm{ACF}}]_n e^{-j\theta_{k,n}}.
\end{equation}
where we denote $\theta_{k,n}=\frac{2\pi(k-1)(n-1)}{N}$ and
$$
\begin{aligned}
\widetilde{\V{f}}_k  &= [[\V{f}_k]_2,\dotsc, [\V{f}_k]_N]^{\rm T}, \\
\widetilde{\RV{x}}_{\rm{ACF}}  &=[[\RV{x}_{\rm{ACF}}]_2,\dotsc,[\RV{x}_{\rm{ACF}}]_N]^{\rm T}.
\end{aligned}
$$
Using the conjugate symmetry of $\RV{x}_{\rm{ACF}}$, we further obtain
\begin{equation}
\widetilde{\V{f}}_k^{\rm T}\widetilde{\RV{x}}_{\rm{ACF}} =\frac{2}{\sqrt{N}}\sum_{n=2}^{1+\lfloor\frac{N-1}{2}\rfloor} \rv{z}_n,
\end{equation}
where
\begin{equation}
\rv{z}_n={\rm Re}\left\{[\widetilde{\RV{x}}_{\rm{ACF}}]_n e^{-j\theta_{k,n}}\right\}=\rv{r}_n\cos\left(\theta_{k,n}\right)+\rv{i}_n\sin\left(\theta_{k,n}\right),
\end{equation}
with $\rv{r}_n = {\rm Re}\{[\widetilde{\RV{x}}_{\rm{ACF}}]_n\}$ and $\rv{i}_n = {\rm Im}\{[\widetilde{\RV{x}}_{\rm{ACF}}]_n\}$. Thus, we have
\begin{subequations}\label{power_spec}
\begin{align*}
\left[\M{F}_N\RV{x}_{\rm{ACF}}\right]_k&=\V{f}_k^{\rm T}\RV{x}_{\rm{ACF}}=[\V{f}_k]_1[\RV{x}_{\rm{ACF}}]_1+\widetilde{\V{f}}_k^{\rm T}\widetilde{\RV{x}}_{\rm{ACF}}\\
&=\frac{1}{\sqrt{N}}+\frac{2}{\sqrt{N}}\sum_{n=2}^{1+\lfloor\frac{N-1}{2}\rfloor}\rv{z}_n. \tag{\theequation}
\end{align*}
\end{subequations}
The exact distribution of $\left[\M{F}_N\RV{x}_{\rm{ACF}}\right]_k$ is clearly intractable. We observe that $\rv{z}_n$ are independent, non-identically distributed and uniformly bounded real random variables. By leveraging the Lindeberg-Feller central limit theorem, the sum of such random variables can be approximated by a Gaussian distribution for large $N$. To proceed, we first introduce the definition of rotational symmetry.
\begin{definition}[Rotational Symmetry]
The expectation and pseudo-variance of the constellation are zero, namely
\begin{equation}
\mathbb{E}\{\rv{s}\}=0,\quad \mathbb{E}\{\rv{s}^2\}=0.
\end{equation}
\end{definition}
It is worth noting that most practical constellations satisfy this property, including all the \ac{psk} and \ac{qam} constellations except for \ac{bpsk} and 8-\ac{qam}.

Now, from \eqref{power_spec}, we have
\begin{equation}
\sqrt{N}\left[\M{F}_N\RV{x}_{\rm{ACF}}\right]_k=1+2\sum_{n=2}^{1+\lfloor\frac{N-1}{2}\rfloor}\rv{z}_n.
\end{equation}
Assuming that the constellation $\Set{S}_{\rm{s}}$ is rotationally symmetric, we can obtain
\begin{equation}
\mu_{\rv{z}}=\mathbb{E}\left\{\rv{z}_n\right\}=\mathbb{E}\{\rv{r}_n\}\cos\left(\theta_{k,n}\right)+\mathbb{E}\{\rv{i}_n\}\sin\left(\theta_{k,n}\right)=0, \label{mu}
\end{equation}
and
\begin{subequations}
\begin{align}
\sigma^2_{\rv{z}}&=\mathbb{E}\left\{|\rv{z}_n|^2\right\}\\
&= \mathbb{E}\{\rv{r}_n^2\}\cos^2\left(\theta_{k,n}\right)+\mathbb{E}\{\rv{i}_n^2\}\sin^2\left(\theta_{k,n}\right)\\
&\quad+2\mathbb{E}\{\rv{r}_n\rv{i}_n\}\cos\left(\theta_{k,n}\right)\sin\left(\theta_{k,n}\right)\\
&=\frac{1}{2}\mathbb{E}\left\{|[\widetilde{\RV{x}}_{\rm{ACF}}]_n|^2\right\} = \frac{\mathbb{E}\{|\rv{x}_{\rm{s}}|^2\}}{2N\beta_{\rm{ACF}}\zeta_{\rm{ACF}}},\label{sigma}
\end{align}
\end{subequations}
where \eqref{mu} follows from the zero-mean property of the constellation, namely $\mathbb{E}\{[\widetilde{\RV{x}}_{\rm{ACF}}]_n\}=\mathbb{E}\{\rv{r}_n\}+j\mathbb{E}\{\rv{i}_n\}=0$, while \eqref{sigma} follows from the zero pseudo-variance property of the constellation, namely $\mathbb{E}\{[\widetilde{\RV{x}}_{\rm{ACF}}]_n^2\}=\mathbb{E}\{\rv{r}^2_n\}-\mathbb{E}\{\rv{i}^2_n\}+2j\mathbb{E}\{\rv{r}_n\rv{i}_n\}=0$. It is worth noting that for all $k=1,2,\cdots,N$, the random variables $\rv{z}_n$ have identical means and variances.

Since $\rv{z}_n$ are uniformly bounded real random variables, these random variables satisfy the Lindeberg condition, given by
\begin{equation}
\forall \epsilon>0, \quad \lim_{N\to\infty} \frac{1}{\sigma_N^2} \sum_{n=2}^{1+\lfloor\frac{N-1}{2}\rfloor} \int_{|\rv{z}_n| \ge \epsilon \sigma_N} \rv{z}_n^2 {\rm{d}}\mathbb{P} = 0
\end{equation}
where
\begin{equation}
\sigma_N^2=\sum_{n=2}^{1+\lfloor\frac{N-1}{2}\rfloor}\sigma_{\rv{z}}^2=
\begin{cases}
\frac{(N-1)\mathbb{E}\left\{|\rv{x}_{\rm{s}}|^2\right\}}{4N\beta_{\rm{ACF}}\zeta_{\rm ACF}}, & N \text{ is odd},\\[7pt]
\frac{(N-2)\mathbb{E}\left\{|\rv{x}_{\rm{s}}|^2\right\}}{4N\beta_{\rm{ACF}}\zeta_{\rm ACF}}, & N \text{ is even}.
\end{cases}
\end{equation}
Therefore, according to the Lindeberg-Feller central limit theorem, the sum of $\rv{z}_n$'s is asymptotically distributed as a real Gaussian distribution with zero mean and variance $\sigma_N^2$. Neglecting inter-symbol dependence, we further approximate the nominal power spectrum $\sqrt{N}\M{F}_N\RV{x}_{\rm{ACF}}$ by a real Gaussian distribution, namely $\sqrt{N}\M{F}_N{\RV{x}}_{\rm{ACF}} \sim \mathcal{N}\left(\V{1}_{N},\sigma^2\M{I}_N\right)$, with
\begin{equation}\label{sigma_def}
\sigma^2=
\begin{cases}
\frac{(N-1)\mathbb{E}\left\{|\rv{x}_{\rm{s}}|^2\right\}}{N\beta_{\rm{ACF}}\zeta_{\rm ACF}}, & N \text{ is odd},\\[7pt]
\frac{(N-2)\mathbb{E}\left\{|\rv{x}_{\rm{s}}|^2\right\}}{N\beta_{\rm{ACF}}\zeta_{\rm ACF}}, & N \text{ is even}.
\end{cases}
\end{equation}
when $N$ is sufficiently large.

We may now conclude that, for rotationally symmetric constellations, each entry of $\RV{r}_{\rm h}$ approximately follows a folded normal distribution, with its probability density function given by
\begin{equation}\label{folded_normal}
p(\rv{r}_{\rm h})=\frac{1}{\sqrt{2\pi\sigma^2}}e^{-\frac{(\rv{r}_{\rm h}-1)^2}{2\sigma^2}}+\frac{1}{\sqrt{2\pi\sigma^2}}e^{-\frac{(\rv{r}_{\rm h}+1)^2}{2\sigma^2}}
\end{equation}
Then, the \ac{mgf} of this distribution can be derived as
\begin{subequations}
\begin{align*}
M_{\rv{r}_{\rm h}}(t) &= \mathbb{E}_{\rv{r}_{\rm h}}\{e^{t\rv{r}_{\rm h}}\}
=\int_0^{+\infty}e^{t\rv{r}_{\rm h}}p(\rv{r}_{\rm h}){\rm d}\rv{r}_{\rm h} \\
&= \frac{1}{2} e^{-\frac{1}{2\sigma^2}} \!\left\{ {\rm{erfcx}}\left(\frac{-1-\sigma^2 t}{\sigma\sqrt{2}}\right) \!+\!{\rm{erfcx}}\left(\frac{1-\sigma^2 t}{\sigma\sqrt{2}}\right) \!\right\}, \tag{\theequation}
\end{align*}
\end{subequations}
where
\begin{equation}
{\rm{erfcx}}(x) = e^{x^2} {\rm{erfc}}(x)=e^{x^2} \cdot \frac{2}{\sqrt{\pi}}\int_{x}^{\infty} e^{-t^2} {\rm d}t,
\end{equation}
with ${\rm{erfc}}(x)$ denoting complementary error function.

Next, we derive a closed-form approximation of the \ac{ber}. The \ac{ser} of $M_p$-\ac{psk} over an \ac{awgn} channel can be expressed as \cite{proakis2001digital}
\begin{equation}
P_{\rm {a}}(\rho) = \frac{1}{\pi} \int_{0}^{\frac{(M_p-1)\pi}{M_p}} \exp\left(-\rho \cdot \frac{\sin^2\left(\frac{\pi}{M_p}\right)}{\sin^2 \theta} \right) {\rm d}\theta,
\end{equation}
where $\rho=\frac{1}{\sigma^2_c}$ denotes the \ac{snr}, and $P_{\rm a}(\rho)$ denotes the \ac{ser} of \ac{psk} over an \ac{awgn} channel. $M_p\in\mathbb{Z}_+$ denotes the size of \ac{psk} constellations. By the assumption that the channel coefficient $\RV{x}_{\rm{a}}$ is perfectly known, the instantaneous \ac{snr} over the effective fading channel can be expressed as $\rho_{\rm h}=\rv{r}_{\rm h}\rho=\frac{\rv{r}_{\rm h}}{\sigma^2_c}$. Then, the \ac{ser} over the effective fading channel can be computed as
\begin{subequations}
\begin{align*}
P_{\rm h}(\rho)&= \mathbb{E}_{\rv{r}_{\rm h}}\{P_{\rm a}(\rho_{\rm h})\}\\
&=\frac{1}{\pi} \int_{0}^{\frac{(M_p-1)\pi}{M_p}} \mathbb{E}\left\{\exp\left(-\rv{r}_{\rm h}\rho \cdot \frac{\sin^2\left(\frac{\pi}{M_p}\right)}{\sin^2 \theta} \right)\right\} {\rm d}\theta\\
&=\frac{1}{\pi} \int_{0}^{\frac{(M_p-1)\pi}{M_p}} M_{\rv{r}_{\rm h}}\left(-\rho \cdot \frac{\sin^2\left(\frac{\pi}{M_p}\right)}{\sin^2 \theta}\right) {\rm d}\theta. \tag{\theequation}
\end{align*}
\end{subequations}
A simple one-dimensional numerical integration can thus be employed to compute $P_{\rm h}(\rho)$. When the Gray code mapping is adopted to label $M_p$-\ac{psk} symbols, the \ac{ber} can further be approximately computed from the \ac{ser} in the high-\ac{snr} regime, given by
\begin{equation}
\ac{ber} \approx \frac{P_{\rm {h}}(\rho)}{\log_2 M_p}.
\end{equation}

\begin{remark} \label{prop:deep_fading}
Finally, we would like to discuss the relationship between the non-negativity violation and \ac{acfk}-induced deep fading. As shown in \eqref{folded_normal}, the effective fading channel, caused by taking the magnitude of the Fourier transform of the \ac{acf} symbol vector, follows a folded normal distribution $P(\rv{r}_h)$. Since the probability mass of the normal distribution over $(-\infty,0]$ is folded onto $[0,\infty)$ , the non-negativity violation raises the occurrence probability of deep fading. In fact, the ``folding effect"  would eventually result in \ac{ber} deterioration of \ac{psk} symbols, especially in the high-\ac{snr} regime. To elaborate, when the communication noise becomes considerably weak, the conditional \ac{ber} in the case of high instantaneous channel gain is negligible, and thus the low instantaneous channel gain case becomes dominant in terms of \ac{ber}. This behavior corresponds exactly to the folding effect producing substantial probability mass in the low channel gain regime, marked by the non-negativity violation probability. Therefore, the non-negativity violation probability can be viewed as a quality indicator of the \ac{psk} sub-channel.
\end{remark}

\begin{figure*}[t]
	\centering
	\includegraphics[width=0.95\linewidth]{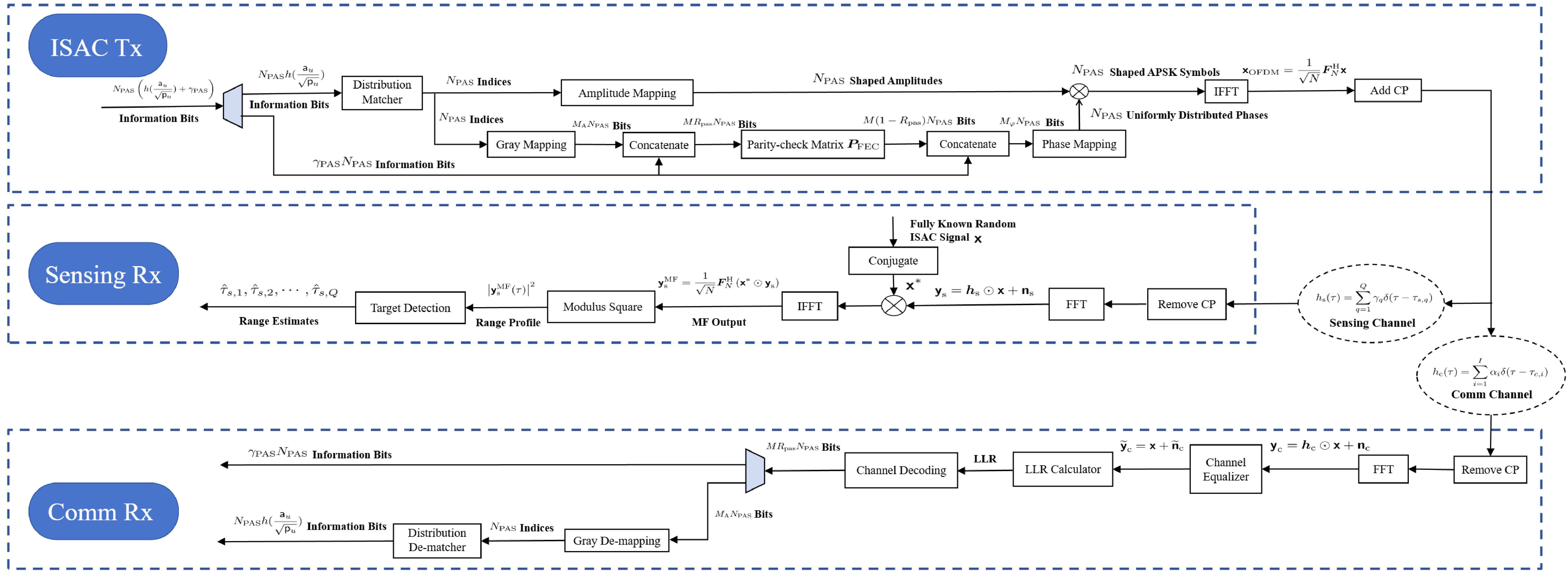}
	\caption{The signal processing pipeline for the generalized \ac{pas} system over the quasi-static multipath channels.}
	\label{fig:pas_system}
\end{figure*}

\section{Numerical Results} \label{sec:simulations}
In this section, we present numerical results to validate our theoretical analysis. Unless otherwise specified, we consider \ac{acfk} with $N=1951$ subcarriers. Both uncoded and coded transmission schemes are evaluated in the simulations. For the channel coding, we employ the \ac{ldpc} code with codeword length $n=16200$ and rate $R_{\rm c}^{\rm ACF}=\frac{2}{3}$. For the implementation of \ac{acfk}, the constellations $\Set{S}_{\rm{p}}$ and $\Set{S}_{\rm{s}}$ are configured as 64-\ac{psk} and 32-\ac{qam}, respectively. The sidelobe attenuation factor is set to $\zeta_{\rm{ACF}}=10~{\rm{dB}}$. Simulations are carried out over both the \ac{awgn} channel and a $3$-path Rician fading channel with Rician factor $K_{r}=10~{\rm{dB}}$. Under these configurations, the \ac{acfk} system is capable of carrying
\begin{equation}
R_{\rm{total}}^{\rm ACF}=\frac{1}{1951}(1951\cdot\log_2 64+\frac{1950}{2}\cdot\log_2 32) = 8.4987
\end{equation}
bits per subcarrier, while the actual \ac{ldpc} code requires
\begin{equation}
R_{\rm{used}}^{\rm ACF}=\frac{16200}{1951} = 8.3034
\end{equation}
bits per subcarrier. The adaptation between these two rates is implemented by zero-padding. Finally, all the simulation results are obtained by averaging over $10000$ random realizations.

\subsection{Baseline: Generalized Probabilistic Amplitude Shaping}
To facilitate understanding of numerical results, we first elaborate on the baseline scheme before presenting simulations. To achieve a beneficial tradeoff between sensing and communication performance, recent research has explored \ac{pcs} schemes to optimize the input distribution of \ac{ofdm} signals \cite{du2024reshaping, zhao2026input, 11417988}. In this paper, we adopt the generalized \ac{pas} scheme proposed in \cite{11417988} as the baseline scheme, which will also be referred to as the ``\ac{ofdm}-\ac{pcs}'' scheme hereafter. The generalized \ac{pas} extends the conventional \ac{pas} framework presented in \cite{7307154} from \ac{ask} to \ac{apsk} constellations, thus yielding circularly symmetric constellations that are more suitable for sensing task. Readers are referred to \cite{11417988} for more technical details. It is noteworthy that \cite{11417988} only provides conceptual design of the generalized \ac{pas} which is insufficient for \ac{ber} evaluation. For the completeness of presentation, we consider a specific implementation detailed as follows.

The signal processing pipeline for the generalized \ac{pas} system is illustrated in Fig.~\ref{fig:pas_system}. We employ the \ac{ldpc} code with codeword length $n=16200$ and rate $R_{\rm c} ^{\rm PAS}>\frac{2}{5}$ for channel coding. In view of the transmitted symbol vector length determined by codeword length, we configure an \ac{ofdm}-\ac{pcs} system with $N_{\rm{PAS}}=1620$ subcarriers for simulations. Note that this is different from the configuration in the \ac{acfk} case where $N=1951$. In order to ensure fairness of comparison, we consider the rescaled \ac{esl}, defined as $N \cdot\ac{esl}$, as the sensing performance evaluation metric, since the \ac{esl} decreases at a rate of $O(\frac{1}{N})$ as $N$ grows large \cite{11087656}. For the constellation design, we first design a 1024-\ac{apsk} constellation $\Set{S}_3$ with $M_{\rm A}=4$ amplitude bits and $M_{\rm \varphi}=6$ phase bits, as shown in Fig.~\ref{fig:apsk1024}. These \ac{apsk} symbols can be expressed by the amplitude-phase factorization $\rv{x}_u=\frac{\rv{a}_u}{\sqrt{\rv{p}_u}}\cdot \rv{s}_u$, where the amplitude $\rv{a}_u\in[1,2,\cdots,16]$ can be shaped independently, while the phase $\rv{s}_u$ is uniformly distributed on a unit circle. $\rv{p}_u=\mathbb{E}\left\{|\rv{x}_u|^2\right\}, \rv{x}_u\in\Set{S}_3$ denotes the mean power of the constellation.

Next, since the \ac{esl} of \ac{ofdm} modulation is related to the kurtosis of constellations \cite{11087656, du2024reshaping}, we optimize the input distribution by solving the following entropy maximization problem, formulated as
\begin{subequations}\label{pcs_islr}
\begin{align}
\max_{t\in[t_{\min},t_{\max}]}~\max_{p_{\rv{x}_u}(x_u)}&~~H\left(\rv{x}_u\right),\\
{\rm s.t.}&~~ \mathbb{E}\{|\rv{x}_u|^4\}\leq \zeta_{\rm PAS} t^2(\mu_4-1)+t^2,\\
&~~\mathbb{E}\{|\rv{x}_u|^2\}=t,\\
&~~\rv{x}_u\in\Set{S}_3,
\end{align}
\end{subequations}
where $t_{\min}= \min_{\rv{x}_u\in \Set{S}_3}|\rv{x}_u|^2$ and $t_{\max}= \max_{\rv{x}_u\in \Set{S}_3}|\rv{x}_u|^2$. $\mu_4$ denotes the kurtosis of the constellation, defined as
\begin{equation}
\mu_4 \triangleq \frac{\mathbb{E}\left\{ \left| \rv{s} - \mathbb{E}(\rv{s}) \right|^4 \right\}}{\mathbb{E}\left\{ \left| \rv{s} - \mathbb{E}(\rv{s}) \right|^2 \right\}^2}.
\end{equation}
$\zeta_{\rm pas}$ denotes the sidelobe reduction ratio, which serves to ensure identical rescaled \ac{esl} for both the \ac{acfk} and generalized \ac{pas} systems, thus facilitating fair communication performance evaluation. Clearly, the inner subproblem is a convex optimization problem, which may be efficiently solved by off-the-shelf convex optimization solvers \cite{grant2014cvx}, while the outer subproblem may be addressed through a one-dimensional line search. Finally, the optimal amplitude distribution $p(\frac{\rv{a}_u}{\sqrt{\rv{p}_u}})$, derived from the optimal input distribution $p(\rv{x}_u)$, can be implemented by \ac{ccdm} \cite{7322261}.

\begin{figure}[t]
    \vspace{-3mm}
	\centering
    \subfloat[The original normalized 1024-\ac{apsk} constellation]{
	\includegraphics[width=0.42\linewidth]{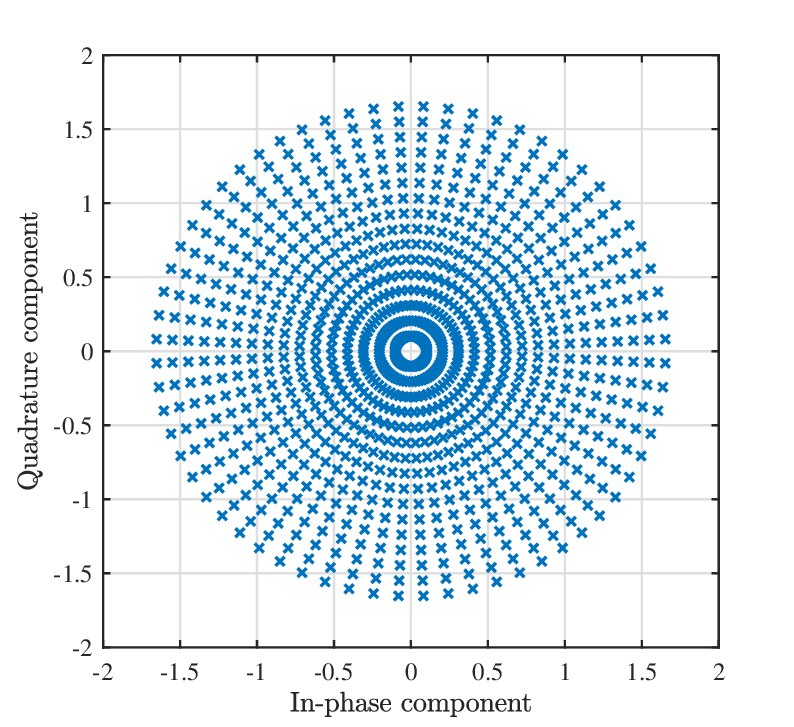}
    \label{fig:apsk1024}
    }\hspace{0.5mm}
    \subfloat[The maximal-entropy input distribution $p(\rv{x}_u)$]{
	\includegraphics[width=0.48\linewidth]{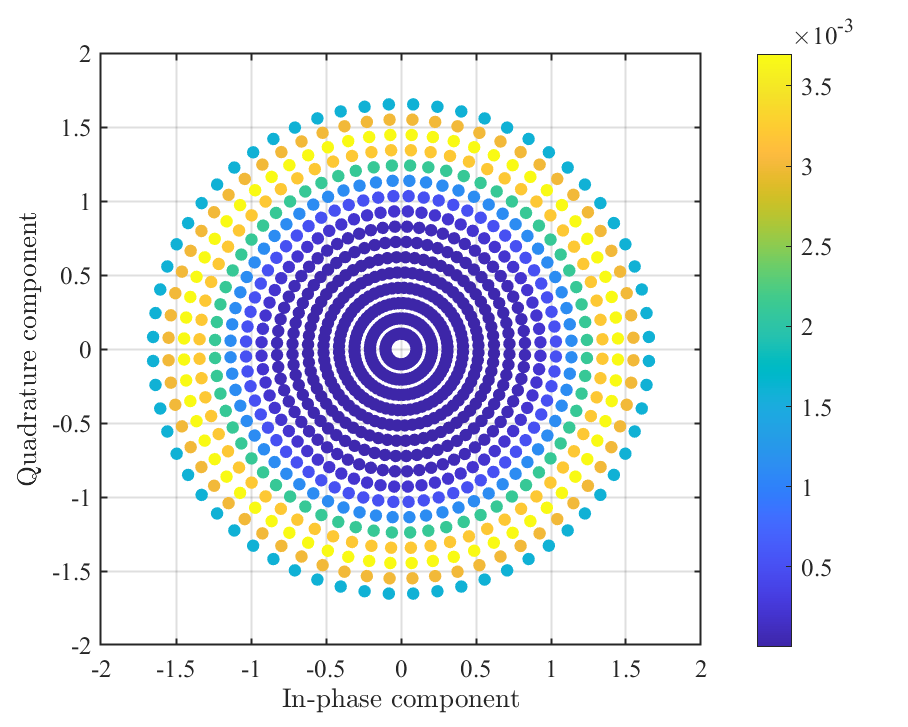}
    \label{fig:pcs_apsk1024}
    }
	\caption{The original normalized 1024-\ac{apsk} constellation and the probabilistically weighted constellation obtained by generalized \ac{pas} solving \eqref{pcs_islr}.}
\end{figure}

Under these configurations, the transmission rate per subcarrier is expressed as
\begin{equation} \label{pas_rate}
R^{\rm{PAS}}=\frac{H(\frac{\rv{a}_u}{\sqrt{\rv{p}_u}})+\gamma_{\rm{PAS}}}{R_{\rm c}^{\rm{PAS}}}=\frac{H(\frac{\rv{a}_u}{\sqrt{\rv{p}_u}})+10 R_{\rm c}^{\rm{PAS}}-4}{R_{\rm c}^{\rm{PAS}}}~{\rm{bits}},
\end{equation}
where $0<H(\frac{\rv{a}_u}{\sqrt{\rv{p}_u}})\leq M_{\rm{A}}=4$ denotes the entropy of the normalized amplitude. The parameter $0<\gamma_{\rm{PAS}}=10 R_{\rm c}^{\rm{PAS}}-4\leq 6$ governs the number of information bits that bypass the \ac{ccdm}. For a fixed optimal input distribution $p(\rv{x}_u)$, the transmission rate $R^{\rm{PAS}}$ is an increasing function of the code rate $R_{\rm{c}}^{\rm PAS}$. In addition, the entropy $H(\frac{\rv{a}_u}{\sqrt{\rv{p}_u}})$ implemented by \ac{ccdm} is typically lower than the theoretical optimum, due to the finite \ac{ccdm} block length $N_{\rm{PAS}}=1620$.

We next show the optimal input distribution $p(\rv{x}_u)$ derived from the optimization problem \eqref{pcs_islr}, as depicted in Fig.~\ref{fig:pcs_apsk1024}. The colors of these constellation points represent the corresponding symbol probabilities. The constellation exhibits a ring-like shape with a kurtosis of $1.0588$, which implies favorable sensing performance. The theoretical entropies of the optimal amplitude and input distributions are calculated as $H(\frac{\rv{a}_u}{\sqrt{\rv{p}_u}})=2.7482$ bits and $H(\rv{x}_u)=8.7482$ bits, respectively, while the achievable amplitude entropy $H(\frac{\rv{a}_u}{\sqrt{\rv{p}_u}})$ by \ac{ccdm} is $2.7222$ bits.

\subsection{\texorpdfstring{\ac{ber}}{BER} Performance}
We first demonstrate the \ac{ber} of the uncoded \ac{acfk} system over \ac{awgn} and Rician fading channels in Fig.~\ref{fig:ber_awgn_un} and Fig.~\ref{fig:ber_rician_un}, respectively.  In Fig.~\ref{fig:ber_awgn_un}, it can be observed that the analytical values agree well with the simulation results in the high-\ac{snr} regime, which validates the correctness of our derivation in Sec. \ref{sec:performance}. The $P_{\rm{neg}}$ curve in the legend corresponds to the non-negativity violation probability. It is worth noting that the \ac{ber} curve of 64-\ac{psk} in the \ac{acfk} system exhibits a prominent deviation from its standard 64-\ac{psk} counterpart over the same \ac{awgn} channel, particularly in the region below the $P_{\rm{neg}}$ curve. This observation confirms that the non-negativity violation probability indeed serves as a valid quality indicator of the \ac{psk} sub-channel. As discussed in Sec.~\ref{ssec:ber_phase}, the \ac{ber} of \ac{psk} symbols in the high-\ac{snr} region is dominated by the low instantaneous channel gain case, which is closely related to the non-negativity violation probability $P_{\rm{neg}}$, due to the ``folding effect" induced by taking the magnitude of the symbols. In addition, the total \ac{ber} over the Rician channel is much higher than that over the \ac{awgn} channel.

 \begin{figure}[t]
     \centering
     \includegraphics[width=0.8\linewidth]{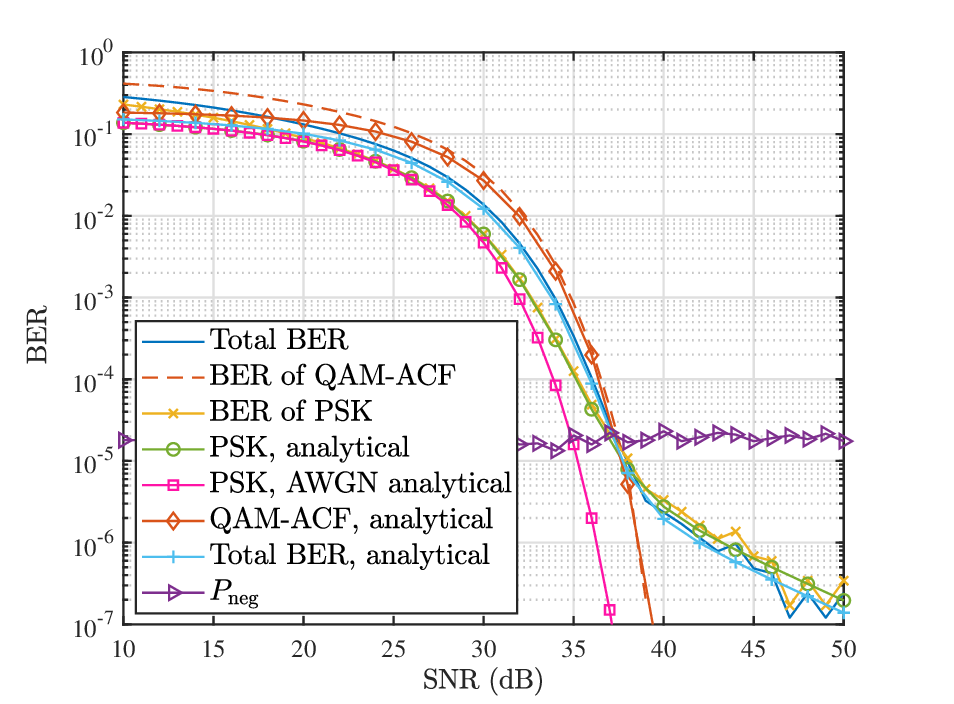}
     \caption{The \ac{ber} performance of the uncoded \ac{acfk} system over \ac{awgn} channels, with analytical and simulation results.}
 \label{fig:ber_awgn_un}
 \end{figure}

 \begin{figure}[t]
     \centering
     \includegraphics[width=0.8\linewidth]{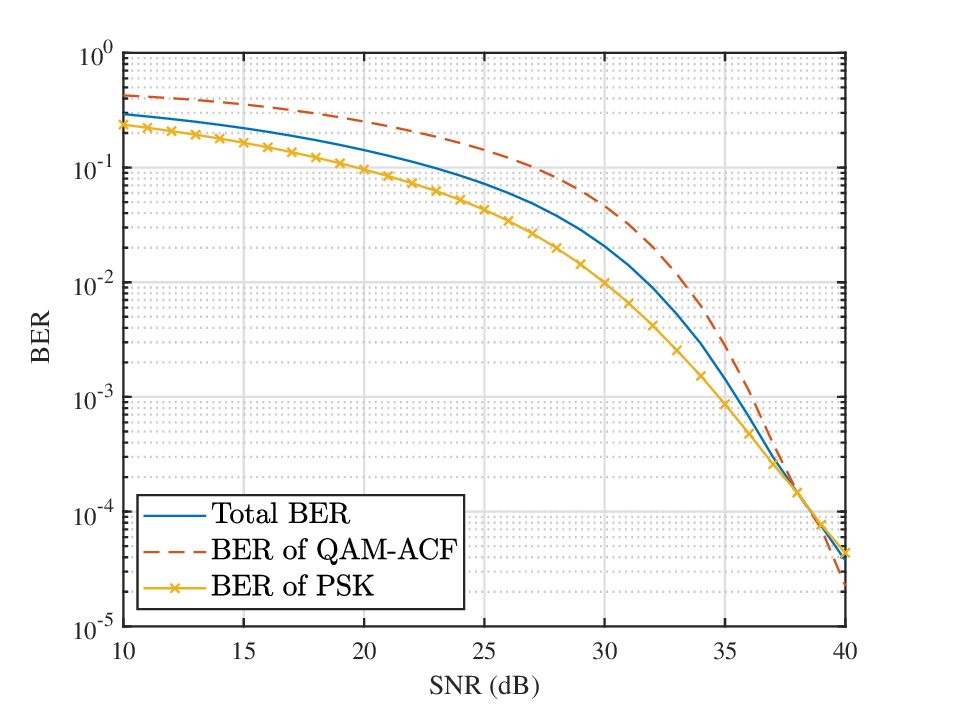}
     \caption{The \ac{ber} performance of the uncoded \ac{acfk} system over Rician channels with the Rician factor $K_{r}=10~{\rm{dB}}$.}
     \label{fig:ber_rician_un}
 \end{figure}

Next, we illustrate the \ac{ber} of the coded \ac{acfk} system over \ac{awgn} and Rician fading channels in Fig.~\ref{fig:ber_all}, respectively. It is clear that the coded scheme achieves significantly better performance than its uncoded counterpart. The \ac{ber} exhibits a steep decline at about $\ac{snr} = 23~{\rm{dB}}$ over the \ac{awgn} channel. This phenomenon is termed as the waterfall effect. Then, we focus on the \ac{ber} performance comparison with the generalized \ac{pas} systems. To ensure consistent error correcting capability, the same \ac{ldpc} code with rate $R_{\rm c}^{\rm{PAS}}=R_{\rm{c}}^{\rm ACF}=\frac{2}{3}$ is employed for channel coding. However, according to \eqref{pas_rate}, the transmission rate per subcarrier implemented by \ac{ccdm} is computed as $R^{\rm{PAS}}=8.0833$ bits, which is lower than that of the coded \ac{acfk} system $R_{\rm{used}}^{\rm ACF}=8.3034$ bits. Note that such a difference is inevitable since the transmission rate in the \ac{pas} scheme is not flexibly adjustable. To this end, we carry out simulations under various code rates to improve the fairness of the comparison, as summarized in Table \ref{table}. As can be seen, compared with the \ac{ofdm}-\ac{pcs} system, \ac{acfk} achieves comparable coded \ac{ber} performance over \ac{awgn} channels at similar spectral efficiencies. Over Rician channels, \ac{acfk} exhibits a moderate \ac{snr} loss and a flatter waterfall slope, as portrayed in Fig.~\ref{fig:ber_all}\subref{fig:ber_rician}, which is consistent with the folded-normal effective fading mechanism caused by the \ac{acf}-domain amplitude construction.

 \begin{figure}[t]
 	\centering
     \subfloat[the \ac{awgn} channel]{
     \includegraphics[width=0.47\linewidth]{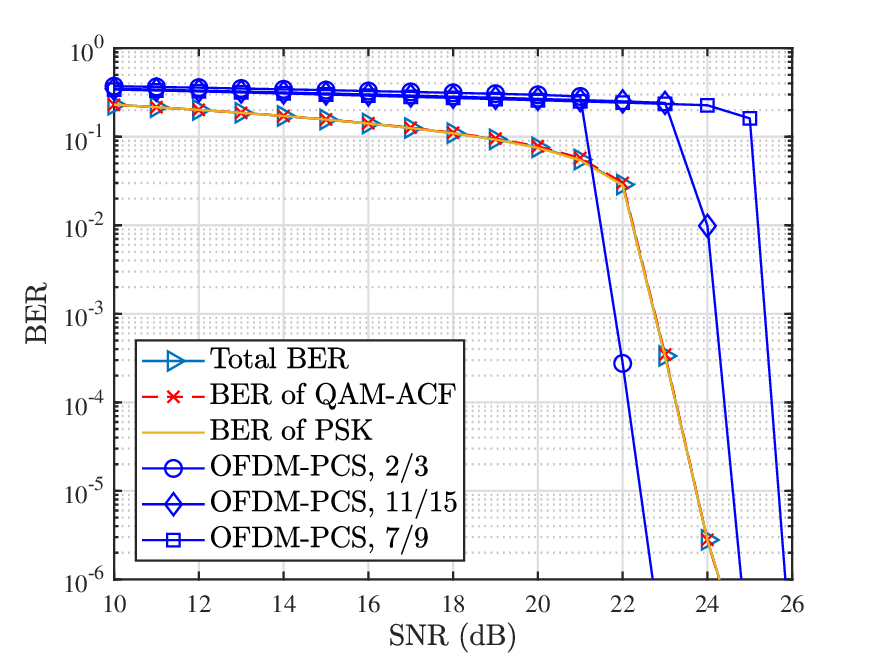}
     }
	\hfill
     \subfloat[the Rician channel]{
     \includegraphics[width=0.47\linewidth]{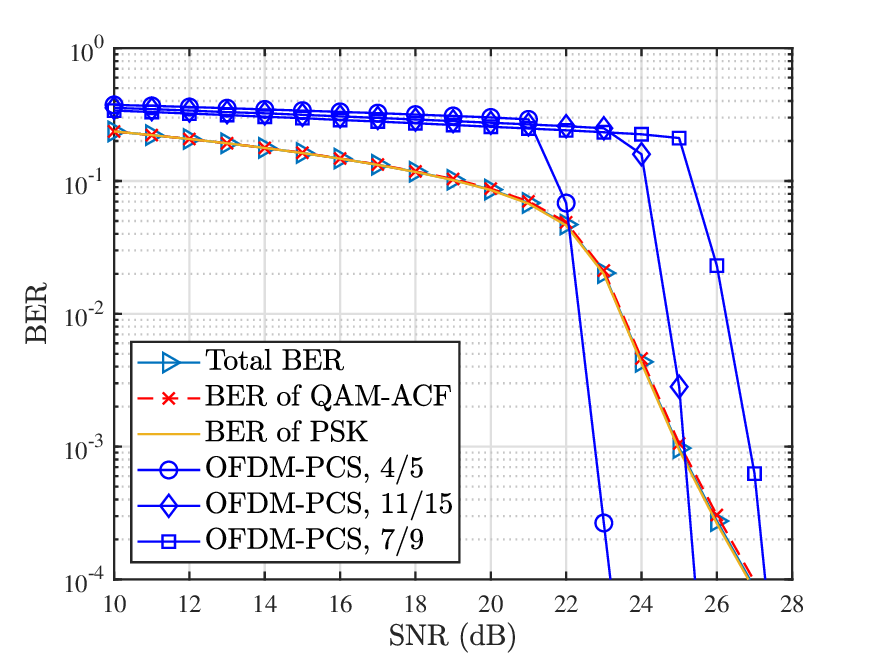}
     \label{fig:ber_rician}
     }
     \caption{The \ac{ber} performance comparison over \ac{awgn} and Rician channels between the coded \ac{acfk} system with the code rate $R_{\rm{c}}^{\rm ACF}=\frac{2}{3}$ and the generalized \ac{pas} system with the code rates $R_{\rm{c}}^{\rm PAS}=\frac{2}{3}, \frac{11}{15}$ and $\frac{7}{9}$.}
     \label{fig:ber_all}
 \end{figure}


  \begin{table}[!t]
 	\begin{center}
 		\caption{The transmission rates per subcarrier under various code rates}\label{table}
 		\renewcommand\arraystretch{1.2}
 		\begin{tabular}{|c|c|c|}
 			\hline
 			\textbf{Systems} & \textbf{Code rates} & \textbf{Transmission rates per subcarrier} \\
 			\cline{1-3}
 			\ac{acfk} & $2/3$ & $8.3034~{\rm{bits}}$ \\
 			\cline{1-3}
 			\multirow{3}{*}{\ac{ofdm}-\ac{pcs}} & $2/3$ & $8.0833~{\rm{bits}}$  \\
 			\cline{2-3}
 			 & $11/15$ & $8.2576~{\rm{bits}}$\\
 			\cline{2-3}
 			 & $7/9$ & $8.3571~{\rm{bits}}$\\
 			 \cline{1-3}
 		\end{tabular}
 	\end{center}
 \end{table}

 \begin{figure}[t]
	\centering
	\includegraphics[width=0.8\linewidth]{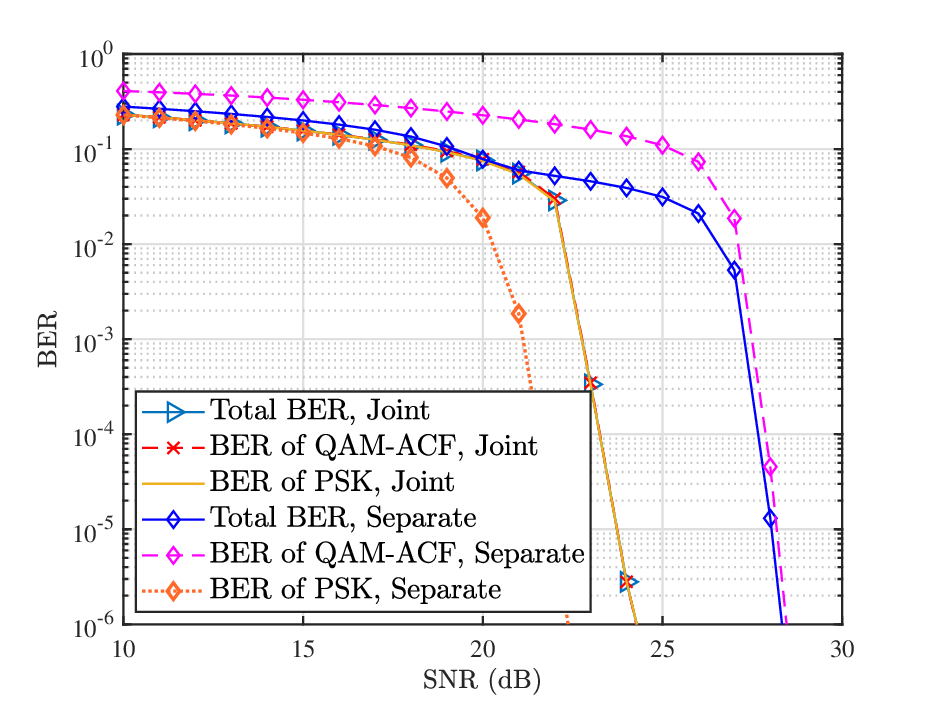}
	\caption{The \ac{ber} performance comparison between the joint coding and separate coding schemes with the same \ac{ldpc} code over \ac{awgn} channels.}
	\label{fig:separate_coding}
\end{figure}

Finally, we examine the \ac{ber} performance between the joint coding and separate coding schemes, as portrayed in Fig.~\ref{fig:separate_coding}. Note that the joint coding significantly outperforms the separate coding by $4~{\rm{dB}}$, which validates the reliability of the \ac{acfk} transceiver design in Sec. \ref{sec:architecture}. Joint coding achieves better total \ac{ber} performance by balancing the inferior \ac{ber} performance of 32-\ac{qam} and the superior \ac{ber} performance of 64-\ac{psk} in the separate coding scheme. Specifically, the gain brought by the joint decoding for 32-\ac{qam} outweighs the loss caused by the joint decoding for 64-\ac{psk}. 

\subsection{Target Detection Performance}
To evaluate the sensing performance of \ac{acfk}, we first depict the empirical \ac{cdf} of the \ac{pslr} of \ac{pacf} for the \ac{acfk} and generalized \ac{pas} systems in Fig.~\ref{fig:pas_cdf}. Under the identical \ac{esl} constraint, the \ac{pslr} under the generalized \ac{pas} system is $5 \!\sim\! 10~{\rm{dB}}$ higher than the well-controlled \ac{pslr} achieved by the \ac{acfk} system. This result confirms that the generalized \ac{pas} system has the capability of precisely controlling the \ac{esl}, whereas it fails to control the \ac{pslr}.

We next demonstrate how the \ac{pslr} reduction translates into target detection performance gain. In particular, we consider a two-target scenario, where two targets, one stronger and one weaker, are located at $14.065~{\rm{m}}$ and $28.125~{\rm{m}}$ away from the \ac{tx}, respectively. The bandwidth of the waveform is configured as $800~{\rm{MHz}}$. The echo power from the weaker target is fixed at $0~{\rm{dB}}$, while we denote the echo power from the stronger target as $\alpha_{\rm strong}~{\rm{dB}}$. To fairly evaluate the interference that the sidelobe of the stronger target inflicts on the weaker target, we impose the same \ac{esl} constraints on both the \ac{acfk} and generalized \ac{pas} systems in this subsection.  Accordingly, we define the \ac{sir} as the ratio between the echo power from the weaker target $(0~{\rm{dB}})$ and the average sidelobe response level of the strong target, which can be expressed as
\begin{equation}
{\rm SIR}=-(\text{\ac{esl}}+\alpha_{\rm strong})~{\rm{dB}}.
\end{equation}

Besides, we focus on the effective \ac{snr} of the weak target, defined as $\frac{P}{\sigma^2_{\rm{s}}}$, where $P$ represents the transmit power. Since the \ac{mf} processing at the sensing \ac{rx} yields a \ac{snr} gain proportional to the number of subcarriers, and the \ac{acfk} and generalized \ac{pas} adopt different subcarrier configurations, we employ the post-\ac{mf} \ac{snr} for fair comparison of detection performance, which can be expressed as
\begin{equation}
\ac{snr} = (10\log_{10}\frac{P}{\sigma^2_{\rm{s}}} + 10\log_{10}N_{\rm{sub}})~{\rm{dB}},
\end{equation}
where $10\log_{10}N_{\rm{sub}}$ corresponds to the \ac{mf}-induced \ac{snr} gain and $N_{\rm{sub}}$ denotes the number of subcarriers of the evaluated system. The \ac{cacfar} detector is adopted at the sensing \ac{rx}.

 \begin{figure}[t]
	\centering
	\includegraphics[width=0.8\linewidth]{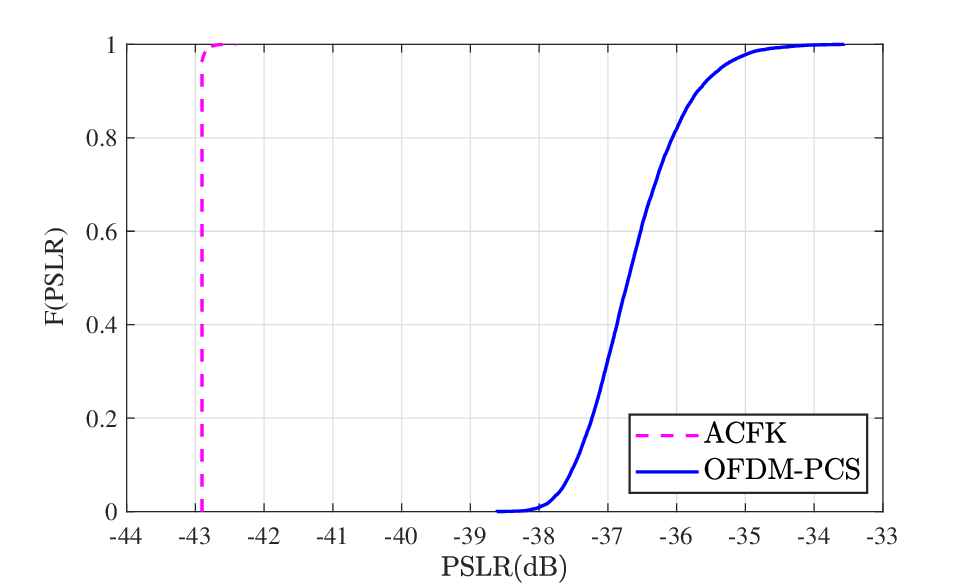}
	\caption{The empirical \ac{cdf} of the \ac{pslr} of \ac{pacf} for the \ac{acfk} and generalized \ac{pas} systems, under the same \ac{esl} constraints $(\ac{esl}=-45.2071~{\rm{dB}})$.}
	\label{fig:pas_cdf}
\end{figure}

To intuitively illustrate the detection performance gain, we plot the range profiles of two targets and target detection results under \ac{acfk} and generalized \ac{pas} systems in Fig.~\ref{fig:profile_all}. For fair comparison, the detection probability of the weaker target is configured as $P_{\rm{D}}=0.7$ with identical $\ac{snr}=22.1~{\rm{dB}}$ and $\ac{sir}=7.2~{\rm{dB}}$ settings for both \ac{acfk} and generalized \ac{pas} systems. The detection threshold is calculated by \ac{cacfar}. As can be seen, the range profile of generalized \ac{pas} exhibits a large number of spurious peaks, thereby yielding substantial false alarms. Benefiting from its lower and stable \ac{pslr}, \ac{acfk} produces a cleaner range profile with no false alarms.

\begin{figure}[t]
    \vspace{-3mm}
	\centering
    \subfloat[The \ac{acfk} system]{
	\includegraphics[width=0.95\linewidth]{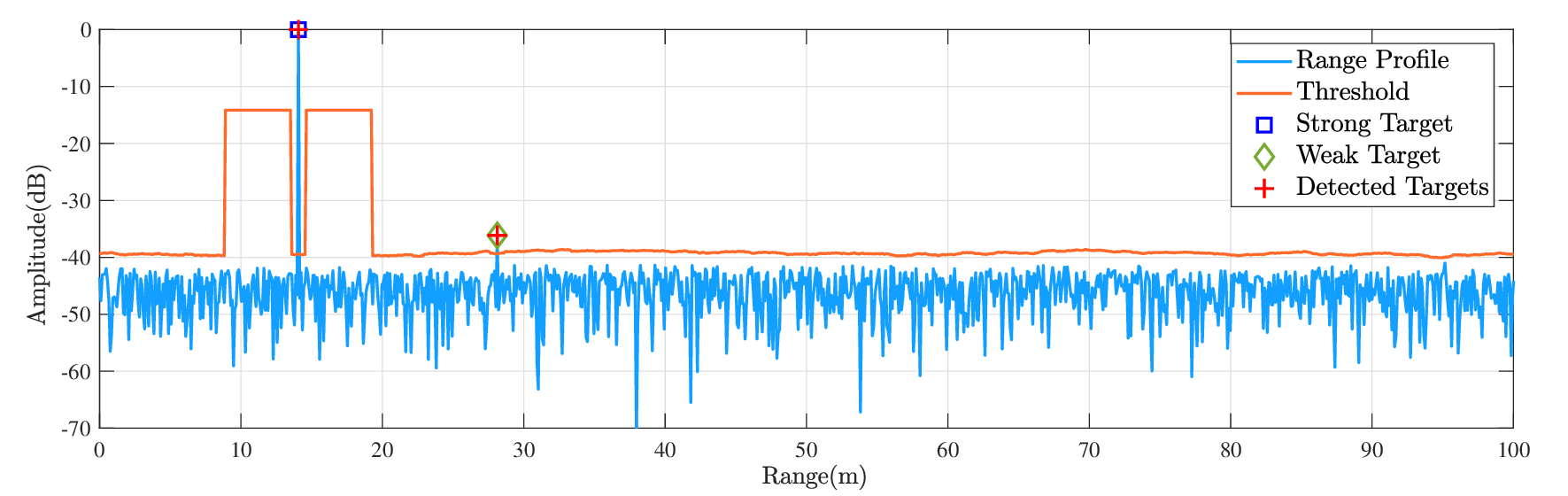}
    \label{fig:ACFK_profile}
    }\hspace{0.5mm}
    \subfloat[The generalized \ac{pas} system]{
	\includegraphics[width=0.95\linewidth]{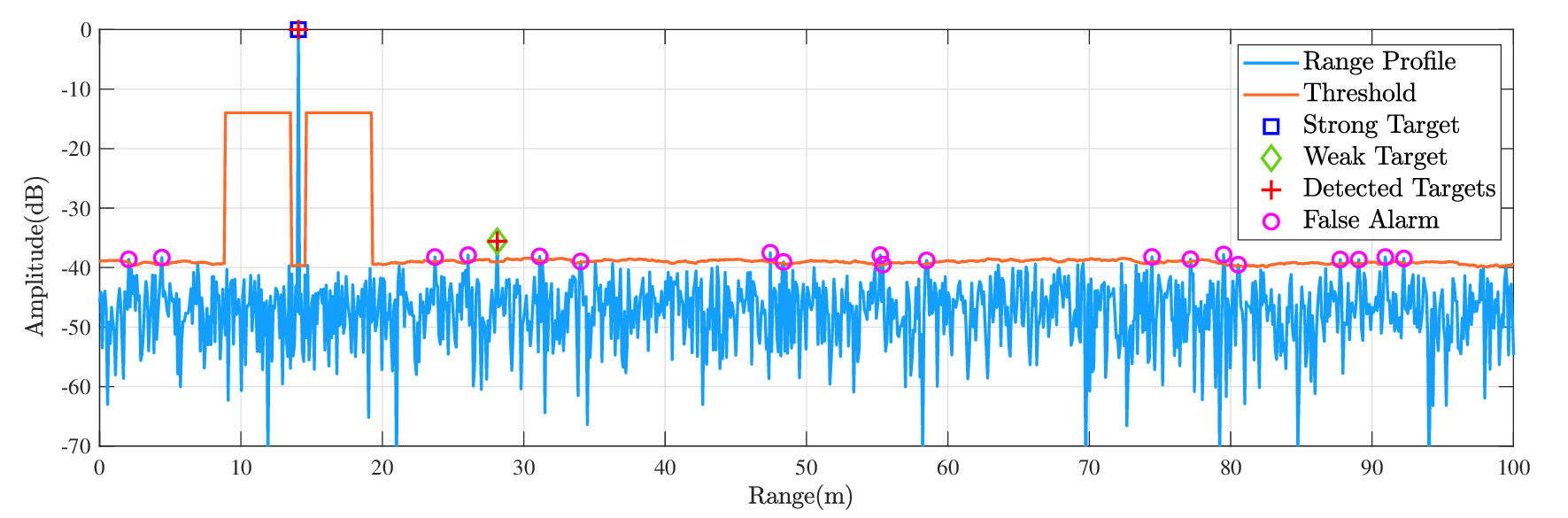}
    \label{fig:PAS_profile}
    }
	\caption{The target detection performance and range profiles of two targets under \ac{acfk} and generalized \ac{pas} systems, with the same detection probability of the weaker target $P_{\rm{D}}=0.7$.}
    \label{fig:profile_all}
\end{figure}

We further evaluate the target detection performance by the \ac{roc} curves, which characterize the tradeoff between the false-alarm probability $P_{\rm{FA}}$ and detection probability $P_{\rm{D}}$ of the weaker target. In Fig.~\ref{fig:roc_snr}, the \ac{snr} of the weaker target is fixed at $22.1~{\rm{dB}}$, while the \ac{sir} varies by adjusting the strength of the stronger target. It can be observed that the \ac{acfk} achieves significantly better target detection performance compared to the generalized \ac{pas} in the low false-alarm probability regime. Specifically, \ac{acfk} can attain a detection probability of  $P_{\rm{D}}=0.7$ at the false-alarm probability of $P_{\rm{FA}}=0.003$, even under $\ac{sir}=5.2~{\rm{dB}}$. As the \ac{sir} decreases, the detection performance of the generalized \ac{pas} degrades more rapidly than that of \ac{acfk}.


In Fig.~\ref{fig:roc_sir}, the \ac{sir} is fixed at $7.2~{\rm{dB}}$, while the \ac{snr} decreases by raising the power of the sensing noise. Clearly, the \ac{acfk} delivers better target detection performance than the generalized \ac{pas} in the low false-alarm probability regime. The performance gap between the two schemes narrows as the \ac{snr} grows.

 \begin{figure}[t]
     \centering
     \includegraphics[width=0.8\linewidth]{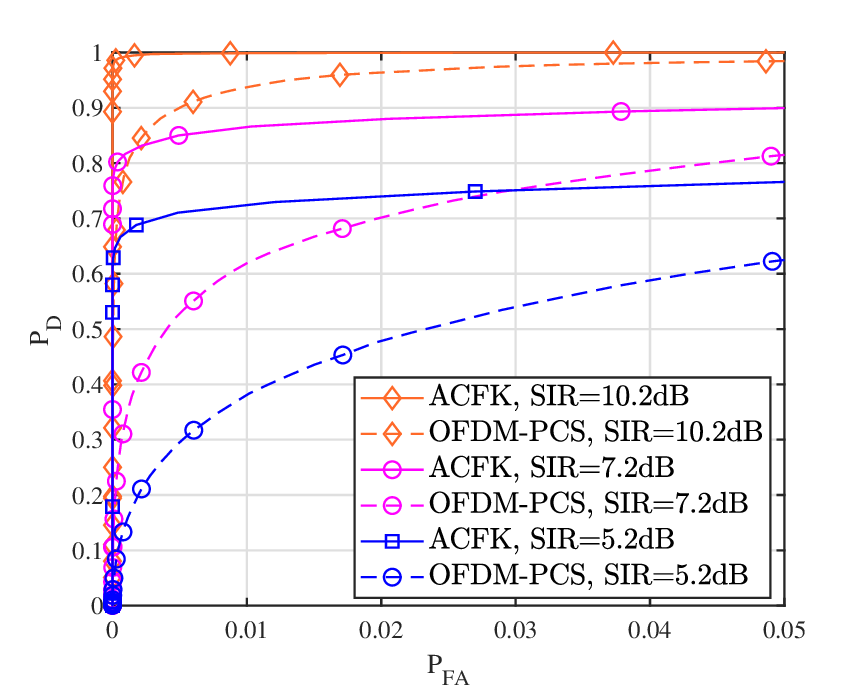}
     \caption{\ac{roc} curves of the weaker target under \ac{acfk} and generalized \ac{pas} systems with a fixed \ac{snr} $(\ac{snr}=22.1~{\rm{dB}})$  and various values of $\ac{sir}$.}
     \label{fig:roc_snr}
 \end{figure}

 \begin{figure}[t]
     \centering
     \includegraphics[width=0.8\linewidth]{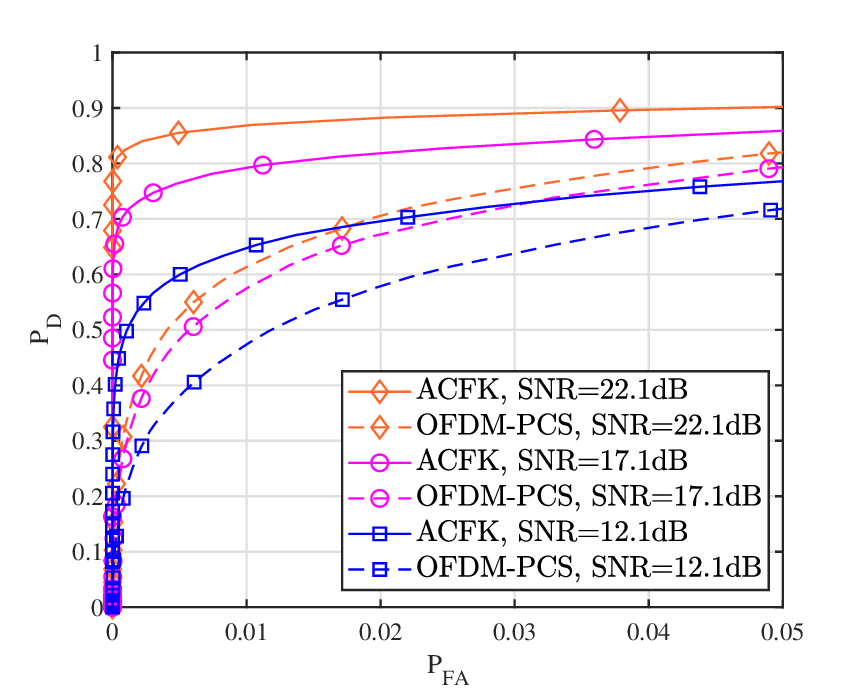}
     \caption{\ac{roc} curves of the weaker target under \ac{acfk} and generalized \ac{pas} systems with a fixed \ac{sir} $(\ac{sir}=7.2~{\rm{dB}})$  and various values of $\ac{snr}$.}
     \label{fig:roc_sir}
 \end{figure}

\section{Conclusions}\label{sec:conclusion}
In this paper, we have investigated the payload-bearing waveform design problem, which imposes stringent constraints on the sidelobes of random \ac{pacf} while maintaining a high achievable communication rate. Specifically, built upon the mutual information maximization framework under \ac{psl} and power budget constraints, we have shown that the continuous \ac{acf}-domain uniform construction serves as an asymptotically optimal high-\ac{snr} design principle under \ac{psl} constraints over quasi-static frequency-flat channels. In this context, we have proposed \ac{acfk}, which can be interpreted as a structured implementation of this principle in practical finite-constellation systems. From the sensing perspective, \ac{acfk} directly embeds data symbols into the \ac{acf}-domain sidelobes, thereby enabling exact control of the nominal \ac{pacf} and high-probability control of the actual \ac{pacf} under potential violation of a power spectral non-negativity constraint. We have provided a reference \ac{isac} transceiver design for \ac{acfk} over quasi-static multipath channels, and analyzed the high-\ac{snr} approximate \ac {ber}. From a practical perspective, \ac{acfk} can be implemented by modifying and amalgamating the existing \ac{ofdm} and \ac{sc} modulation schemes. Numerical results have demonstrated that \ac{acfk} outperforms a generalized \ac{pas} baseline in \ac{pslr} control and weak-target detection at similar data rate and \ac{ber}. Future work will explore further enhancements in \ac{acfk} system design, for example, developing practical coding strategies for data sequences satisfying the non-negativity constraint, as well as investigating the optimal communication receiver design and performance analysis.

\appendices

\section{Proof of Lemma \ref{lem:highsnr}}\label{sec:proof_highsnr}
\begin{IEEEproof}
We first consider a slightly modified feasible region
\begin{align}
\Set{P}_{p_{\min}} &= \bigl\{\RV{p}^\prime\in\mathbb{R}^{N-1}:~\rv{p}_i\geq p_{\min},~i=1\dotsc N, \nonumber \\
&\hspace{3mm}|\V{f}_k^{\rm H}\RV{p}|^2\leq \zeta_{\rm ACF}^{-1},~k\in\Set{S}_{\rm ACF} \bigr\},
\end{align}
where $p_{\rm min}>0$, and denote $\Set{M}_{p_{\min}}=\Set{P}_{p_{\min}}\times \mathbb{T}^N$. Note that $|\Set{M}_{p_{\min}}|=|\Set{P}_{p_{\min}}|(2\pi)^N$, implying that $0<|\Set{M}_{p_{\min}}|<\infty$ since $|\Set{P}_{p_{\min}}|>0$. The boundary of $\Set{P}_{p_{\min}}$ satisfies
\begin{equation}
\partial \Set{P}_{p_{\min}} \subset \bigcup_{i=1}^N \Set{B}_i^{(1)}\cup \bigcup_{k\in\Set{S}_{\rm ACF}}\Set{B}_k^{(2)},
\end{equation}
where $\Set{B}_i^{(1)}=\{\RV{p}^\prime :\RV{p}^\prime\in \Set{P}_{p_{\min}},\rv{p}_i=p_{\rm min}\}$, and $\Set{B}_k^{(2)}=\{\RV{p}^\prime :\RV{p}^\prime\in \Set{P}_{p_{\min}},|\V{f}_k^{\rm H}\RV{p}|^2=\zeta_{\rm ACF}^{-1}\}$. The sets $\Set{B}_i^{(1)}$ are hyperplane patches, whose $\rho$-neighborhoods have volume of $O(\rho)$. The sets $\Set{B}_k^{(2)}$ are smooth hypersurfaces, which also have $\rho$-neighborhoods with volume of $O(\rho)$. This implies that
\begin{equation}\label{asu1}
\bigl|\{\RV{u}:\min_{\V{x}\in\partial\Set{M}_{p_{\min}}}\|\RV{u} -\V{x}\|\leq \rho\}\bigr|\leq c_1(2\pi)^N\rho,
\end{equation}
where $c_1$ is a constant.

Now, denote $\rv{z}_i=[\V{h}_{\rm c}]_i\rv{x}_i$, note that the mapping $\phi(\RV{u})$ is infinitely differentiable, and the Jacobian matrix $\M{J}(\RV{u})$ satisfies
\begin{subequations}
\begin{align}
\frac{\partial}{\partial \rv{p}_i} \RV{z}&=\frac{[\V{h}_{\rm c}]_ie^{j\rv{\theta}_i}}{2\sqrt{\rv{p}_i}}\V{e}_i-\frac{[\V{h}_{\rm c}]_Ne^{j\rv{\theta}_N}}{2\sqrt{\rv{p}_N}}\V{e}_N,\\
\frac{\partial}{\partial \rv{\theta}_i}\RV{z}&=j[\V{h}_{\rm c}]_i\sqrt{\rv{p}_i}e^{j\rv{\theta}_i}\V{e}_i,
\end{align}
\end{subequations}
where $\V{e}_i$ denotes the $N$-dimensional vector with all entries being zero except that the $i$-th entry equals to one. After some manipulations, we obtain
\begin{equation}
\M{G}(\RV{u})=\begin{bmatrix}
\M{G}_{pp}& \M{0}\\
\M{0} & \M{G}_{\theta\theta}
\end{bmatrix},\label{G_structure}
\end{equation}
where
\begin{subequations}\label{G_structure2}
\begin{align}
\M{G}_{pp} &= \frac{1}{4}{\rm diag}\bigl(|\V{h}_{\rm c}^\prime|^2\oslash \RV{p}^\prime\bigr)+ \frac{|[\V{h}_{\rm c}]_N|^2}{4\rv{p}_N}\V{1}\V{1}^{\rm T},\\
\M{G}_{\theta\theta}&={\rm diag}\bigl(|\V{h}_{\rm c}|^2\odot \RV{p}\bigr),
\end{align}
\end{subequations}
and $\V{h}_{\rm c}=[(\V{h}_{\rm c}^\prime)^{\rm T},~[\V{h}_{\rm c}]_N]^{\rm T}$. It is obvious that $\M{G}(\RV{u})$ is positive definite for every $\RV{u}\in\Set{M}_{p_{\min}}$, having constant upper and lower bounds denoted as follows:
$$
c_G\M{I}\preceq \M{G}(\RV{u})\preceq C_G\M{I}.
$$
This implies that $\M{J}(\RV{u})$ has full column rank, and hence $\phi(\RV{u})$ is an immersion. It is also injective since $[\V{h}_{\rm c}]_i\neq 0$ holds for all $i$. In addition, $\phi^{-1}$ can be explicitly constructed as
$$
\rv{p}_i= \frac{|\rv{z}_i|^2}{|[\V{h}_{\rm c}]_i|^2},~e^{j\rv{\theta}_i} = \frac{\rv{z}_i[\V{h}_{\rm c}]_i^{-1}}{|\rv{z}_i[\V{h}_{\rm c}]_i^{-1}|},
$$
which is continuous. Thus we may conclude that $\phi(\RV{u})$ is an embedding.

Next let us construct a tubular coordinate for $\RV{u}$. Let
\begin{equation}
\M{D}:= {\rm diag}(|\V{h}_{\rm c}|^{-2} \otimes \V{1}_2),
\end{equation}
we see that the image of $\phi(\RV{u})$ resides in
\begin{equation}
\Set{T}_h:= \{\V{y}\in\mathbb{R}^{2N}:\V{y}^{\rm T}\M{D}\V{y}=N\}.
\end{equation}
Since $\M{D}$ is positive definite, $\Set{T}_h$ is a smooth hypersurface in $\mathbb{R}^{2N}$. Define $f(\V{y}):=\V{y}^{\rm T}\M{D}\V{y}-N$, we see that $\Set{T}_h$ constitutes the set of solutions to $f(\V{y})=0$, and hence for all tangent vector $\M{J}(\RV{u})\V{v}$ at $\RV{u}$, it holds that
\begin{equation}
[\nabla f(\phi(\RV{u}))]^{\rm T}\M{J}(\RV{u})\V{v}=0,
\end{equation}
which implies that $[\M{D}\phi(\RV{u})]^{\rm T}\M{J}(\RV{u})\V{v}=0$. Thus we may express the normal vector at $\RV{u}$ as
\begin{equation}
\V{b}(\RV{u})=\V{n}(\phi(\RV{u}))=\frac{\M{D}\phi(\RV{u})}{\|\M{D}\phi(\RV{u})\|}.
\end{equation}
Next, let us denote
\begin{equation}
\V{t}(\V{y},r) = \V{y}+r\V{n}(\V{y}),
\end{equation}
and $d_{\min}=\lambda_{\min}(\M{D})$, $d_{\max}=\lambda_{\max}(\M{D})$, thereby we have
\begin{equation}
\|\M{D}\V{y}\|\geq s_{\min} := \sqrt{d_{\min}N}.
\end{equation}
Choose an arbitrary $\rho_0$ from $(0,s_{\min}/(2d_{\max}))$, let $\mu=r/\|\M{D}\V{y}\|$, we have explicitly
$$
\V{t}(\V{y},r) = (\M{I}+\mu\M{D})\V{y}.
$$
Since $|\mu|=|r|/\|\M{D}\V{y}\|\leq \rho_0$, we have $|\mu|d_{\max}< 1/2$, which implies that $1+\mu d_l>0$ holds for every diagonal entry (and hence eigenvalue) $d_l$ of $\M{D}$. Assume by contradiction that $\V{t}(\V{y},r)$ is not one-to-one, such that there exists $(\tilde{\mu},\tilde{\V{y}})\neq (\mu,\V{y})$ satisfying
\begin{equation}
\V{t}(\V{y},r)=(\M{I}+\mu\M{D})\V{y}=(\M{I}+\tilde{\mu}\M{D})\tilde{\V{y}}.
\end{equation}
According to the constant-power constraint, we see that $\mu$ (or $\tilde{\mu}$) should satisfy
\begin{equation}
\psi(\mu)=\V{t}^{\rm T}(\V{y},r)(\M{I}+\mu\M{D})^{-1}\M{D}(\M{I}+\mu\M{D})^{-1}\V{t}(\V{y},r)-N=0.
\end{equation}
However, since $1+\mu d_l>0$ holds for all $d_l$, we have
\begin{equation}
\frac{\partial}{\partial \mu}\psi(\mu)= -2\sum_{l=1}^{2N}\frac{d_l^2 [\V{t}(\V{y},r)]_l^2}{(1+\mu d_l)^3}<0,
\end{equation}
and hence $\psi(\mu)=0$ has at most one solution. Therefore the mapping $\V{t}(\V{y},r)$ is one-to-one. It then turns out that the tubular coordinate transformation
$$
\tilde{\RV{t}}(\RV{u},r)=\RV{t}(\phi(\RV{u}),r)=\phi(\RV{u})+\RV{b}(\RV{u})r
$$
is a diffeomorphism over $\Set{M}\times (-\rho_0,\rho_0)$. Taking derivative we obtain
\begin{equation}
\Bigl(\frac{\partial \tilde{\RV{t}}}{\partial \RV{u}_r}\Bigr)^{\rm T}\frac{\partial \tilde{\RV{t}}}{\partial \RV{u}_r}=\begin{bmatrix}
(\M{J}+r\frac{\partial \RV{b}}{\partial \RV{u}})^{\rm T}(\M{J}+r\frac{\partial \RV{b}}{\partial \RV{u}}) & 0 \\
0 & 1
\end{bmatrix}.
\end{equation}
Therefore the Riemannian volume density reads
$$
v(\RV{u},r)=\sqrt{{\rm det}\Bigl[(\M{J}+r\frac{\partial \RV{b}}{\partial \RV{u}})^{\rm T}(\M{J}+r\frac{\partial \RV{b}}{\partial \RV{u}})\Bigr]},
$$
which degenerates to $\sqrt{{\rm det}\M{G}(\RV{u})}$ when $r=0$.

Next we give an upper bound for the general $v(\RV{u},r)$. The Jacobian from $\RV{y}$ to $\V{n}(\V{y})$ reads
$$
\frac{\partial \V{n}(\V{y})}{\partial \V{y}}=\frac{\M{D}}{\|\M{D}\V{y}\|} - \frac{\M{D}\V{y}\V{y}^{\rm T}\M{D}^2}{\|\M{D}\V{y}\|^3},
$$
which implies
$$
\left\|\frac{\partial \V{n}(\V{y})}{\partial \V{y}}\right\|\leq \frac{2\|\M{D}\|}{\|\M{D}\V{y}\|}\leq \frac{2d_{\max}}{s_{\min}}.
$$
Therefore we have
\begin{equation}
\left\|\frac{\partial \RV{b}(\RV{u})}{\partial \RV{u}}\right\| \leq \frac{2d_{\max}}{s_{\min}} \|\M{J}(\RV{u})\| \leq \frac{2d_{\max}\sqrt{C_G}}{s_{\min}}:=C_b<\infty.
\end{equation}
Denote $\M{G}_r(\RV{u})=(\M{J}+r\frac{\partial \RV{b}}{\partial \RV{u}})^{\rm T}(\M{J}+r\frac{\partial \RV{b}}{\partial \RV{u}})$, and
$$
\M{R}(\RV{u},r) = [\M{G}(\RV{u})]^{-\frac{1}{2}}(\M{G}_r(\RV{u})-\M{G}(\RV{u}))[\M{G}(\RV{u})]^{-\frac{1}{2}},
$$
we obtain
\begin{equation}
\|\M{R}(\RV{u},r)\|\leq \frac{2\sqrt{C_G}C_B}{c_G}|r|+\frac{C_B^2}{c_G}|r|^2.
\end{equation}
We may choose an appropriate $\rho_0$ such that $\|\M{R}(\RV{u},r)\|\leq \frac{1}{2}$ for all $|r|<\rho_0$. Now, since $\M{G}_r(\RV{u}) = [\M{G}(\RV{u})]^{\frac{1}{2}}(\M{I}+\M{R}(\RV{u},r))[\M{G}(\RV{u})]^{\frac{1}{2}}$, we obtain
\begin{equation}
\log \frac{v(\RV{u},r)}{\sqrt{\det \M{G}(\RV{u})}} = \frac{1}{2}\log \det(\M{I}+\M{R}(\RV{u},r)),
\end{equation}
which implies that
\begin{align}
\Bigl|\log \frac{v(\RV{u},r)}{\sqrt{\det \M{G}(\RV{u})}}\Bigr|&\leq (2N-1)\|\M{R}(\RV{u},r)\| \nonumber \\
&= C_v|r|, \label{log_vol_upper}
\end{align}
where $C_v=\frac{2N-1}{c_G}(2\sqrt{C_G}C_B+C_B^2\rho_0)$.

We may rewrite the observation in the $\phi(\RV{u})$-coordinate as
\begin{equation}
\RV{y}_{\sigma_{\rm c}} = \phi(\RV{u}) + \sigma_{\rm c} \RV{\omega},
\end{equation}
where $\RV{\omega}$ follows the Gaussian distribution of
$$
p_{k,\sigma_{\rm c}}(\V{z})=\frac{1}{(\pi\sigma_{\rm c}^2)^{\frac{k}{2}}}\exp\left(-\frac{\|\V{z}\|^2}{\sigma_{\rm c}^2}\right),
$$
with $k=2N$. The density of $\RV{y}_{\sigma_{\rm c}}$ reads
$$
p_{\RV{y}_{\sigma_{\rm c}}}(\V{y})=\int_{\Set{M}} g(\V{u}) p_{k,\sigma_{\rm c}}(\V{y}-\phi(\V{u})){\rm d}\V{u}.
$$
Under the tubular coordinate, we have
\begin{equation}
q(\V{u},r)= p_{\RV{y}_{\sigma_{\rm c}}}(\tilde{\V{t}}(\V{u},r))v(\V{u},r).
\end{equation}
Choose a $b_{\sigma_{\rm c}}=\sigma_{\rm c}\sqrt{L\log(1/\sigma_{\rm c})}$ for a fixed, sufficiently large $L>0$, we define the $b_{\sigma_{\rm c}}$-interior of $\Set{M}_{p_{\min}}$
\begin{equation}
\Set{M}_{p_{\min},\sigma_{\rm c}} \!=\! \{\V{u}\!\in\!\Set{M}_{p_{\min}}:\min_{\V{x}\in\partial\Set{M}_{p_{\min}}}\|\V{u}\!-\!\V{x}\|\! >\! b_{\sigma_{\rm c}}\},
\end{equation}
and the principal domain of integration
\begin{equation}
\Set{E}_{\sigma_{\rm c}} = \{(\V{u},r):\V{u}\in\Set{M}_{p_{\min},\sigma_{\rm c}},|r|\leq b_{\sigma_{\rm c}}\}.
\end{equation}
Consider a small perturbation $\V{w}$ satisfying $\|\V{w}\|\leq \frac{b_{\sigma_{\rm c}}}{\sigma_c}=\sqrt{L\log (1/\sigma_{\rm c})}$, let $\V{v}=\V{u}+\sigma_{\rm c}\V{w}$, taking the Taylor expansion of $\phi(\V{u}+\sigma_{\rm c}\V{w})$ at $\V{u}$, we obtain
\begin{equation}
\phi(\V{u}+\sigma_{\rm c}\V{w}) = \phi(\V{u}) + \sigma_{\rm c}\M{J}(\V{u})\V{w} + \frac{\sigma_{\rm c}^2}{2}\V{w}^{\rm T}\M{H}_\phi\V{w} + \V{r}_3,
\end{equation}
where $\M{H}$ denotes the Hessian of $\phi(\V{u})$ satisfying $\V{w}^{\rm T}\M{H}_\phi\V{w}\leq c_2\|\V{w}\|^2$, and $\V{r}_3$ satisfies $\|\V{r}_3\|\leq c_3 \sigma_{\rm c}^3\|\V{w}\|^3$, for some positive constants $c_2$ and $c_3$, since $\phi(\V{u})$ is a $C^{\infty}$ embedding. Thus the residual term reads
\begin{equation}
\tilde{\V{t}}(\V{u},r) - \phi(\V{u}+\sigma_{\rm c}\V{w}) = \V{b}(\V{u})r -\sigma_{\rm c}\M{J}(\V{u})\V{w} - \frac{\sigma_{\rm c}^2}{2}\V{w}^{\rm T}\M{H}_\phi\V{w} - \V{r}_3.
\end{equation}
The leading term $\V{b}(\V{u})r -\sigma_{\rm c}\M{J}(\V{u})\V{w}$ satisfies
\begin{align}
\|\V{b}(\V{u})r -\sigma_{\rm c}\M{J}(\V{u})\V{w}\|^2 &=r^2\|\V{b}(\V{u})\|^2 +\sigma_{\rm c}^2\|\M{J}(\V{u})\V{w}\|^2 \nonumber \\
&\hspace{3mm}-2\sigma_{\rm c}r\V{b}^{\rm T}(\V{u})\M{J}(\V{u})\V{w} \nonumber \\
&=r^2 +\sigma_{\rm c}^2 \V{w}^{\rm T}\M{G}(\V{u})\V{w}.
\end{align}
We thereby have
\begin{equation}
\frac{\|\tilde{\V{t}}(\V{u},r) - \phi(\V{u}+\sigma_{\rm c}\V{w})\|^2}{\sigma_{\rm c}^2}= \frac{r^2}{\sigma_{\rm c}^2} + \V{w}^{\rm T}\M{G}(\V{u})\V{w}+o(1),
\end{equation}
since $\|\V{w}\|\leq \sqrt{L\log(1/\sigma_{\rm c})}$. It also follows that
$$
g(\V{u}+\sigma_{\rm c}\V{w}) = g(\V{u}) (1+o(1)),
$$
which implies
$$
p_{\RV{y}_{\sigma_{\rm c}}}(\tilde{\V{t}}(\V{u},r)) = \frac{(\pi\sigma_{\rm c}^2)^\frac{2N-1}{2}}{(\pi\sigma_{\rm c}^2)^N\sqrt{\det \M{G}(\V{u})}}e^{-\frac{r^2}{\sigma_{\rm c}^2}}g(\V{u})(1+o(1)),
$$
and hence
$$
q(\V{u},r)=\frac{v(\V{u},r)g(\V{u})}{\sqrt{\det \M{G}(\V{u})}} p_{1,\sigma_{\rm c}}(r)(1+o(1)).
$$
Now using \eqref{log_vol_upper} we have $\frac{v(\V{u},r)}{\sqrt{\det \M{G}(\V{u})}}=1+o(1)$, thereby
\begin{equation}\label{entropy_expansion}
q(\V{u},r)=g(\V{u})p_{1,\sigma_{\rm c}}(r)(1+o(1))
\end{equation}
holds uniformly on $\Set{E}_{\sigma_{\rm c}}$.

Before we estimate the entropy in the main region $\Set{E}_{\sigma_{\rm c}}$, let us first control the entropy outside $\Set{E}_{\sigma_{\rm c}}$. In total there are three exceptions:
\begin{enumerate}
\item Near boundary of $\Set{M}_{p_{\min}}$: Namely $\underline{\Set{M}}_{p_{\min}}=\{\V{u}:\min_{\V{x}\in\partial\Set{M}_{p_{\min}}} \|\V{u}-\V{x}\|\leq b_{\sigma_{\rm c}}\}$. According to \eqref{asu1}, we have $|\underline{\Set{M}}_{p_{\min}}|\leq c_1(2\pi)^N b_{\sigma_{\rm c}}$, implying that
$$
\mathbb{P}\{\min_{\V{x}\in\partial\Set{M}_{p_{\min}}} \|\RV{u}-\V{x}\|\leq b_{\sigma_{\rm c}}\}=O(b_{\sigma_{\rm c}}),
$$
and thus the contribution of $\underline{\Set{M}}_{p_{\min}}$ to the entropy is at most $O(b_{\sigma_{\rm c}}|\log\sigma_{\rm c}|)=o(1)$.
\item Tail on the normal direction of the tube: Namely when $|r|>b_{\sigma_{\rm c}}$. In this case the contribution to the entropy can be bounded by
\begin{equation}
I \leq c_5\int_{|s|>\sqrt{L\log(1/\sigma_{\rm c})}}e^{-\frac{c_4 s^2}{2}}(1+|\log\sigma_{\rm c}|+s^2){\rm d}s,
\end{equation}
for some positive constant $c_4$, which is $o(1)$ for sufficiently large $L$.
\item Outside the tube: Namely when $\sigma_{\rm c}\|\RV{\omega}\|\geq \rho_0$. For sufficiently small $\sigma_{\rm c}$, we may control the Gaussian tail probability as follows
$$
\mathbb{P}\{\|\RV{\omega}\|>\rho_0/\sigma_{\rm c}\}\leq c_5e^{-\frac{\rho_0^2}{2\sigma_{\rm c}^2}},
$$
for some positive constant $c_4$. The polynomial ordered terms in the integrand would then be overwhelmed by the exponential decay. The corresponding entropy contribution is thus on the order of $o(1)$.
\end{enumerate}

Now, using \eqref{entropy_expansion}, we obtain
\begin{align}
h(\RV{y}_{\sigma_{\rm c}})&=-\int_{\Set{E}_{\sigma_{\rm c}}} q(\V{u},r)\log q(\V{u},r){\rm d}\V{u}{\rm d}r \nonumber \\
&\hspace{3mm}+\int_{\Set{E}_{\sigma_{\rm c}}} q(\V{u},r)\log v(\V{u},r){\rm d}\V{u}{\rm d}r + o(1). \label{entropy_exp2}
\end{align}
For the first term on the right hand side of \eqref{entropy_exp2}, we have
\begin{align}
-\int_{\Set{E}_{\sigma_{\rm c}}} q(\V{u},r)\log q(\V{u},r){\rm d}\V{u}{\rm d}r=h(\RV{u})+\frac{1}{2}\log(\pi e\sigma_{\rm c}^2)+o(1).
\end{align}
For the second term, from \eqref{log_vol_upper} we have
$$
\log v(\V{u},r) =\frac{1}{2}\log\det\M{G}(\V{u}) + O(|r|),
$$
and hence from \eqref{entropy_expansion} we have
\begin{align}
\int_{\Set{E}_{\sigma_{\rm c}}} q(\V{u},r)\log v(\V{u},r){\rm d}\V{u}{\rm d}r=\frac{1}{2}\mathbb{E}\{\log\det\M{G}(\RV{u})\}+o(1).
\end{align}
We thereby obtain
\begin{equation}
h(\RV{y}_{\sigma_{\rm c}}) = h(\RV{u})+\frac{1}{2}\mathbb{E}\{\log\det\M{G}(\RV{u})\}+\frac{1}{2}\log(\pi e\sigma_{\rm c}^2)+o(1).
\end{equation}

We are finally ready to expand the mutual information, as follows
\begin{align}
I(\RV{u};\V{y}_{\sigma_{\rm c}}) &= h(\RV{y}_{\sigma_{\rm c}})-h(\RV{y}_{\sigma_{\rm c}}|\RV{u})\nonumber \\
&=h(\RV{u})+\frac{1}{2}\mathbb{E}\{\log\det\M{G}(\RV{u})\} \nonumber \\
&\hspace{3mm}-\frac{2N-1}{2}\log(\pi e\sigma_{\rm c}^2)+o(1).
\end{align}
Since $\RV{x}=\Psi(\RV{u})$ is a one-to-one mapping, we have $I(\RV{u};\V{y}_{\sigma_{\rm c}})=I(\RV{x};\V{y}_{\rm c})$ yielding \eqref{lem1_highsnr}. Finally, we take the limit of $p_{\min}\rightarrow 0_+$, recovering the original feasible region, which completes the proof.
\end{IEEEproof}

\section{Proof of Proposition \ref{prop:opt_input}}\label{sec:proof_opt_input}
\begin{IEEEproof}
According to Lemma~\ref{lem:highsnr}, assuming that the optimal input distribution satisfies the conditions \eqref{conditions_input}, in the high-\ac{snr} regime the optimization-relevant term is
\begin{equation}
L[p_{\RV{u}}] = h(\RV{u}) + \frac{1}{2}\int_{\Set{M}} p_{\RV{u}}(\V{u}) \log\det\M{G}(\V{u}){\rm d}\V{u}.
\end{equation}
Expanding $h(\RV{u})$ we obtain
\begin{equation}
L[p_{\RV{u}}] = -\int_{\Set{M}} p_{\RV{u}}(\V{u}) \log \frac{p_{\RV{u}}(\V{u})}{\sqrt{\det \M{G}(\V{u})}}{\rm d}\V{u}.
\end{equation}
Note that $Z_G^{-1}\sqrt{\det \M{G}(\V{u})}$ is a valid probabilistic density function, with
$$
Z_G = \int_{\Set{M}}\sqrt{\det \M{G}(\V{u})}{\rm d}\V{u}.
$$
It then follows that
$$
L[p_{\RV{u}}] = \log Z_G - {\rm KL}[p_{\RV{u}}(\V{u})\|Z_G^{-1}\sqrt{\det\M{G}(\V{u}})],
$$
which implies that the optimal distribution is given by
\begin{equation}
p_{\RV{u}}(\V{u})= \frac{\sqrt{\det \M{G}(\V{u})}}{\int_{\Set{M}}\sqrt{\det \M{G}(\V{v})}{\rm d}\V{v}},~\V{u}\in\Set{M}.
\end{equation}
According to \eqref{G_structure} and \eqref{G_structure2}, we have
\begin{equation}
\det\M{G}(\V{u}) = \det \M{G}_{pp} \det \M{G}_{\theta\theta},
\end{equation}
and
\begin{subequations}
\begin{align}
\det \M{G}_{pp} &= \prod_{i=1}^{N-1} \frac{|[\V{h}_{\rm c}]_i|^2}{4p_i}\Bigl(1+\frac{|[\V{h}_{\rm c}]_N|^2}{p_N}\sum_{j=1}^{N-1}\frac{p_j}{|[\V{h}_{\rm c}]_j|^2}\Bigr),\\
\det \M{G}_{\theta\theta}&= \prod_{i=1}^N |[\V{h}_{\rm c}]_i|^2 p_i,
\end{align}
\end{subequations}
implying that
\begin{equation}
\sqrt{\det\M{G}(\V{u})}=\frac{1}{2^{N-1}} \Bigl(\prod_{j=1}^N|[\V{h}_{\rm c}]_j|^2\Bigr)\sqrt{\sum_{i=1}^N\frac{p_i}{|[\V{h}_{\rm c}]_i|^2}}.
\end{equation}
Note that $\det\M{G}(\V{u})$ is independent of $\RV{\theta}$, and hence the optimal distribution of $\RV{\theta}$ is uniform. Finally, we observe that the optimal input distribution indeed satisfies \eqref{conditions_input} and hence Lemma~\ref{lem:highsnr} is applicable. This completes the proof.
\end{IEEEproof}

\section{Proof of Proposition \ref{prop:violation_prob}}\label{sec:proof_violation_prob}
\begin{IEEEproof}
We first note that for all $k$, we have
$$
[\V{f}_k]_1[\RV{x}_{\rm{ACF}}]_1 = \frac{1}{\sqrt{N}}.
$$
Thus it now suffices to show that
\begin{equation}
\mathbb{P}\left\{\widetilde{\V{f}}_k^{\rm T}\widetilde{\RV{x}}_{\rm{ACF}} < -\frac{1}{\sqrt{N}}\right\} \leq \exp\left(-\alpha\zeta_{\rm ACF}\right),
\end{equation}
where $\widetilde{\V{f}}_k = [[\V{f}_k]_2,\dotsc, [\V{f}_k]_N]^{\rm T}$ and $\widetilde{\RV{x}}_{\rm{ACF}}=[[\RV{x}_{\rm{ACF}}]_2,\dotsc,[\RV{x}_{\rm{ACF}}]_N]^{\rm T}$. Next, we wish to bound
$$
\widetilde{\V{f}}_k^{\rm T}\widetilde{\RV{x}}_{\rm{ACF}} = \frac{1}{\sqrt{N}}\sum_{n=2}^N [\widetilde{\RV{x}}_{\rm{ACF}}]_n e^{-\frac{j2\pi(k-1)(n-1)}{N}}.
$$
Using the conjugate symmetry of $\RV{x}_{\rm{ACF}}$, we obtain
\begin{equation}
\widetilde{\V{f}}_k^{\rm T}\widetilde{\RV{x}}_{\rm{ACF}} =\frac{2}{\sqrt{N}}\sum_{n=2}^{\overline{N}} \rv{z}_n,
\end{equation}
where
$$
\rv{z}_n=\rv{r}_n\cos\left(\frac{2\pi (k-1) (n-1)}{N}\right)+\rv{i}_n\sin\left(\frac{2\pi (k-1) (n-1)}{N}\right),
$$
$\rv{r}_n = {\rm Re}\{[\RV{x}_{\rm{ACF}}]_n\}$, $\rv{i}_n = {\rm Im}\{[\RV{x}_{\rm{ACF}}]_n\}$, and $\overline{N}$ is given by
$$
\overline{N} = \left\{\begin{array}{ll}
N/2,&\hbox{$N$ even;}\\
(N-1)/2+1, &\hbox{$N$ odd.}
\end{array}\right.
$$
To facilitate further analysis, let us denote $a_n=\cos\left(\frac{2\pi (k-1) (n-1)}{N}\right)$ and $b_n=\sin\left(\frac{2\pi (k-1) (n-1)}{N}\right)$, and hence we have $\rv{z}_n=a_n\rv{r}_n+b_n\rv{i}_n$.

Now, note that each $\rv{z}_n$ is independent of one another. If we can show that each $\rv{z}_n$ is $K$-sub-Gaussian, we may then use the Chernoff inequality to bound the outage probability as follows

\begin{align}\label{outage_prob}
&\qquad\mathbb{P}\{\V{f}_k^{\rm T}\RV{x}_{\rm{ACF}}<0\} =\mathbb{P}\left\{\widetilde{\V{f}}_k^{\rm T}\widetilde{\RV{x}}_{\rm{ACF}} \!<\! -\frac{1}{\sqrt{N}}\right\} \nonumber\\
&=\mathbb{P}\!\left\{\sum_{n=2}^{\overline{N}} \rv{z}_n<-\frac{1}{2}\right\}\! \leq \! \exp\left(-\frac{1}{8(\overline{N}-1)K}\right)\nonumber \\
\!&\leq\! \exp\left(-\frac{1}{4NK}\right).
\end{align}

To elaborate, $K$-sub-Gaussianity means that
$$
\mathbb{E}\{e^{\lambda \rv{z}_n}\}\leq \exp\left(\frac{\lambda^2 K}{2}\right),~\forall \lambda\in \mathbb{R}.
$$
In general, by applying Hoeffding's inequality, we have a valid choice of $K=\frac{1}{N\zeta_{\rm ACF}}$, by the design of ACFK, which yields the generic bound corresponding to $\alpha=\frac{1}{4}$, the fourth line of \eqref{alpha_branches}. For specific constellations, smaller values of $K$ can be achievable. In particular, if $K$ satisfies
$$
K = \mathbb{E}\{[\rv{z}_n-\mathbb{E}(\rv{z}_n)]^2\},
$$
$\rv{z}_n$ is said to be \textit{strictly sub-Gaussian}. Next, we show that various constellations yield strictly sub-Gaussian $\rv{z}_n$'s.

\subsubsection{Uniform distribution}
In this case we see that $\rv{z}_n$'s follow a rescaled Wigner's semicircle law \cite{wigner1958distribution} as follows
\begin{equation}
p_{\rv{z}_n}(z_n)=\frac{2\sqrt{N\zeta_{\rm ACF}}}{\pi}\sqrt{1-N\zeta_{\rm ACF}z_n^2},~|z_n|\leq \frac{1}{\sqrt{N\zeta_{\rm ACF}}}.
\end{equation}
The corresponding variance reads
\begin{equation}\label{z2_uni}
\mathbb{E}\{\rv{z}_n^2\} = \frac{1}{4N\zeta_{\rm ACF}}.
\end{equation}
It has been shown that random variables following Wigner's semicircle law are indeed strictly sub-Gaussian \cite{10.1214}, and hence we obtain the first line of \eqref{alpha_branches} by substituting \eqref{z2_uni} into \eqref{outage_prob}.

\subsubsection{$4(M_q^2-L^2)$-\ac{qam}}
These constellations are quadrant symmetric. We may thus focus on their first quadrant, in which the constellation points form the set
\begin{subequations}
\begin{align}
\Set{S}_{M_q,L} &= (\Set{A}\times \Set{A}) ~\backslash ~\{(\zeta a,\zeta b)|(a,b)\in\Set{A}\times \Set{A}, \nonumber \\
&\hspace{8mm}a>2(M_q\!-\!L)\!-\!1,~b>2(M_q\!-\!L)\!-\!1\}\\
\Set{A}&=\{\zeta,3\zeta,\dotsc,\zeta(2M_q-1)\},
\end{align}
\end{subequations}
where $\zeta=\sqrt{\frac{1}{2[L^2+(2M_q-L-1)^2]N\zeta_{\rm ACF}}}$ is a normalization constant ensuring that the magnitudes of all constellation points do not exceed $\frac{1}{\sqrt{N\zeta_{\rm ACF}}}$, by the design of \ac{acfk}. Next, let us denote a random variable that is uniformly distributed over $\Set{S}_{M,L}$ as $\RV{\psi}=[\rv{u},\rv{v}]^{\rm T}$. Using this notation, we see that $\rv{r}_n$ and $\rv{i}_n$ may be expressed as $\rv{r}_n=\rv{u}\rv{\sigma}_u$ and $\rv{i}_n=\rv{v}\rv{\sigma}_v$, where $\rv{\sigma}_u$ and $\rv{\sigma}_v$ are two-valued random variables satisfying $\mathbb{P}\{\rv{\sigma}_u=\pm 1\}=\mathbb{P}\{\rv{\sigma}_v=\pm 1\}=1/2$. Thereby we have
\begin{align}
\mathbb{E}\{e^{\lambda\rv{z}_n}\}&=\mathbb{E}\{e^{\lambda(a_n\rv{r}_n+b_n\rv{i}_n)}\}=\mathbb{E}\{e^{\lambda(a_n\rv{u}\rv{\sigma}_u+b_n\rv{v}\rv{\sigma}_v)}\} \nonumber \\
&=\mathbb{E}\Big\{\mathbb{E}\{e^{\lambda(a_n\rv{u}\rv{\sigma}_u+b_n\rv{v}\rv{\sigma}_v)}|\rv{u},\rv{v}\}\Big\} \nonumber \\
&=\mathbb{E}\{\mathbb{E}\{e^{\lambda a_n\rv{u}\rv{\sigma}_u}|\rv{u}\}\mathbb{E}\{e^{\lambda b_n\rv{v}\rv{\sigma}_v}|\rv{v}\}\}\nonumber \\
&=\mathbb{E}\{\cosh(a_n\lambda\rv{u})\cosh(b_n\lambda\rv{v})\}.
\end{align}
To proceed, we need the following lemma.
\begin{lemma}\label{lem:negative_correlation}
Let $f$ and $g$ be non-decreasing functions over $\Set{A}$, and $\RV{\psi}=[\rv{u},\rv{v}]^{\rm T}$ be a uniformly distributed random vector over $\Set{S}_{M_q,L}$. We have
\begin{equation}
\mathbb{E}\{f(\rv{u})g(\rv{v})\}\leq \mathbb{E}\{f(\rv{u})\}\mathbb{E}\{g(\rv{v})\}.
\end{equation}
\begin{IEEEproof}
We observe that
$$
\mathbb{E}\{f(\rv{u})g(\rv{v})\}=\mathbb{E}\{f(\rv{u})h(\rv{u})\},
$$
where $h(\rv{u}):=\mathbb{E}\{g(\rv{v})|\rv{u}\}$. Next we define an event
$$
\RS{F}=\{\rv{u}\leq \zeta(2M_q-2L-1)\},
$$
implying that
$$
\begin{aligned}
{\rm cov}\{f(\rv{u})h(\rv{u})\} &= \mathbb{E}\{f(\rv{u})h(\rv{u})\}-\mathbb{E}\{f(\rv{u})\}\mathbb{E}\{h(\rv{u})\} \\
&=\mathbb{P}\{\RS{F}\}\mathbb{E}\{f(\rv{u})h(\rv{u})|\RS{F}\}+\mathbb{P}\{\overline{\RS{F}}\}\mathbb{E}\{f(\rv{u})h(\rv{u})|\overline{\RS{F}}\}\\
&\hspace{3mm}-[\mathbb{P}\{\RS{F}\}\mathbb{E}\{f(\rv{u})|\RS{F}\}+\mathbb{P}\{\overline{\RS{F}}\}\mathbb{E}\{f(\rv{u})|\overline{\RS{F}}\}]\\
&\hspace{5mm}\times[\mathbb{P}\{\RS{F}\}\mathbb{E}\{h(\rv{u})|\RS{F}\}+\mathbb{P}\{\overline{\RS{F}}\}\mathbb{E}\{h(\rv{u})|\overline{\RS{F}}\}].
\end{aligned}
$$
Now note that $h(\rv{u})|\RS{F}=\mathbb{E}\{g(\rv{v})|\RS{F}\} = h_{\rm full}$ and $h(\rv{u})|\overline{\RS{F}}=\mathbb{E}\{g(\rv{v})|\overline{\RS{F}}\} = h_{\rm corner}$ are constants. Thus we have
$$
\begin{aligned}
{\rm cov}\{f(\rv{u})h(\rv{u})\} &=\mathbb{P}\{\RS{F}\}\mathbb{P}\{\overline{\RS{F}}\}(\mathbb{E}\{f(\rv{u})|\RS{F}\}-\mathbb{E}\{f(\rv{u})|\overline{\RS{F}}\})\\
&\hspace{3mm}\times(h_{\rm full}-h_{\rm corner}).
\end{aligned}
$$
Since $g$ is non-decreasing, we have $h_{\rm full}-h_{\rm corner}\geq 0$. Since $f$ is non-decreasing, we have $\mathbb{E}\{f(\rv{u})|\RS{F}\}-\mathbb{E}\{f(\rv{u})|\overline{\RS{F}}\}\leq 0$. Thus we have ${\rm cov}\{f(\rv{u})h(\rv{u})\}\leq 0$, implying that
$$
\mathbb{E}\{f(\rv{u})h(\rv{u})\}\leq \mathbb{E}\{f(\rv{u})\}\mathbb{E}\{h(\rv{u})\} = \mathbb{E}\{f(\rv{u})\}\mathbb{E}\{g(\rv{v})\},
$$
completing the proof.
\end{IEEEproof}
\end{lemma}

With Lemma \ref{lem:negative_correlation}, we see that
\begin{subequations}
\begin{align}
\mathbb{E}\{e^{\lambda\rv{z}_n}\}&=\mathbb{E}\{\cosh(a_n\lambda\rv{u})\cosh(b_n\lambda\rv{v})\} \\
&=\mathbb{E}\{\cosh(|a_n|\lambda\rv{u})\cosh(|b_n|\lambda\rv{v})\} \label{even_cosh} \\
&\leq \mathbb{E}\{\cosh(|a_n|\lambda\rv{u})\}\mathbb{E}\{\cosh(|b_n|\lambda\rv{v})\} \label{increasing_factor}\\
&=\mathbb{E}\{e^{|a_n|\lambda\rv{r}_n}\}\mathbb{E}\{e^{|b_n|\lambda\rv{i}_n}\},
\end{align}
\end{subequations}
where \eqref{even_cosh} follows from the fact that $\cosh(x)$ is an even function of $x$, while \eqref{increasing_factor} follows from Lemma \ref{lem:negative_correlation} and the fact that $\cosh(x)$ is increasing on $x\geq 0$. Since $\rv{r}_n$ and $\rv{i}_n$ are identically distributed, it now suffices to show that $\rv{r}_n$ alone is strictly sub-Gaussian, namely,
\begin{equation}
\mathbb{E}\{e^{\lambda\rv{r}_n}\}\leq \exp\left(\frac{\lambda^2 \mathbb{E}\{\rv{r}_n^2\}}{2}\right),~\forall \lambda\in\mathbb{R},
\end{equation}
as long as $L\leq M_q/2$. Now, observe that
\begin{equation}
\mathbb{P}\{\rv{r}_n=\pm \zeta(2i-1)\} = \frac{w_i}{2(M_q^2-L^2)},
\end{equation}
where
$$
w_i = \left\{
\begin{array}{ll}
    M_q, & \hbox{$i\leq M_q-L$;} \\
    M_q-L, & \hbox{$M_q-L+1\leq i\leq M_q$.}
\end{array}
\right.
$$
Thus the characteristic function of $\rv{r}_n$ is given by
\begin{subequations}
\begin{align}
\mathbb{E}\{e^{j\lambda\rv{r}_n}\}&\!=\!\sum_{i=1}^{M_q} \frac{w_i}{M_q^2-L^2}\cos [\lambda\zeta(2i-1)]\\
&\!=\!\frac{M_q}{M_q^2-L^2}\sum_{i=1}^{M_q-L} \cos [\lambda\zeta(2i-1)] \nonumber\\
&\hspace{3mm} +\! \frac{M_q-L}{M_q^2-L^2}\sum_{i=M_q-L+1}^{M_q} \cos [\lambda\zeta(2i-1)] \\
&=\!\frac{\sin(2M_q \zeta \lambda)\!+\!\frac{L}{M_q-L}\sin[2(M_q\!-\!L)\zeta \lambda]}{2(M_q+L)\sin(\zeta \lambda)}, \label{geometric_series}
\end{align}
\end{subequations}
where \eqref{geometric_series} follows from $\sum_{i=1}^m \cos((2i-1)x)=\frac{\sin (2mx)}{2\sin x}$. Let us now denote the numerator of \eqref{geometric_series} as $G(\lambda)$, for which we have the following result.
\begin{lemma}
For $L\leq M_q/2$, all zeros of $G(\lambda)$ are real.
\begin{IEEEproof}
Assume by contradiction that $G(\lambda)=0$ for some complex $\lambda=a+jb$, which implies
$$
\begin{aligned}
&\sin(2M_q\zeta a)\cosh(2M_q\zeta b)\\
&\hspace{3mm}=-\frac{L}{M_q-L}\sin(2(M_q-L)\zeta a)\cosh(2(M_q-L)\zeta b),\\
&\cos(2M_q\zeta a)\sinh(2M_q\zeta b)\\
&\hspace{3mm}=-\frac{L}{M_q-L}\cos(2(M_q-L)\zeta a)\sinh(2(M_q-L)\zeta b),\\
\end{aligned}
$$
yielding
\begin{align}\label{normalize_contradiction}
&\frac{L^2}{(M_q\!-\!L)^2}\left[\sin^2(2(M_q\!-\!L)\zeta a)\left(\frac{\cosh(2(M_q\!-\!L)\zeta b)}{\cosh(2M_q\zeta b)}\right)^2  \right. \nonumber  \\
&\hspace{3mm}\left.+\cos^2(2(M_q\!-\!L)\zeta a)\left(\frac{\sinh(2(M_q\!-\!L)\zeta b)}{\sinh(2M_q\zeta b)}\right)^2\right] \!=\! 1,
\end{align}
after some manipulations. However, due to the non-decreasing property of $\cosh$ and $\sinh$, we see that $\frac{\cosh(2(M_q-L)\zeta b)}{\cosh(2M_q\zeta b)}\in (0,1)$ and $\frac{\sinh(2(M_q-L)\zeta b)}{\sinh(2M_q\zeta b)}\in (0,1)$, implying that the left hand side of \eqref{normalize_contradiction} is strictly less than $1$, as long as $L\leq M_q/2$ (so that $\frac{L^2}{(M_q-L)^2}\leq 1$). Thus the proof is completed.
\end{IEEEproof}
\end{lemma}

Now, according to \cite{real_zeros}, we may conclude that $\rv{r}_n$ is strictly sub-Gaussian, since the zeros of its characteristic function are all real. After some algebra, we obtain
\begin{equation}
K\!=\!\mathbb{E}\{\rv{r}_n^2\}\!=\! \frac{M_q(4M_q^2\!-\!1)+L(4(M_q\!-\!L)^2\!-\!1)}{3(M_q+L)}\cdot \zeta ^2,
\end{equation}
yielding the second line in \eqref{alpha_branches}.

\subsubsection{$M_p$-\ac{psk}}
For these constellations, we have
$$
\mathbb{E}\{e^{\lambda\rv{z}_n}\} = \frac{1}{M_p}\sum_{m=0}^{M_p-1} e^{\lambda\nu\cos \left(\frac{2\pi m}{M_p}+\theta\right)},
$$
where $\nu = \sqrt{\rv{r}_n^2+\rv{i}_n^2}=\frac{1}{\sqrt{N\zeta_{\rm ACF}}}$ and $\theta = 2\pi (k-1)(n-1)/N$. Using the Fourier expansion of $e^{a\cos x}$, we obtain
\begin{align}
\mathbb{E}\{e^{\lambda\rv{z}_n}\} &= \frac{2}{M_p}\sum_{m=0}^{M_p-1}\sum_{\ell=1}^{\infty}I_{\ell}(\lambda\nu)\cos\left(\ell \left(\frac{2\pi m}{M_p}+\theta\right)\right) \nonumber\\
&\hspace{3mm}+ I_0(\lambda\nu),
\end{align}
where $I_{\ell}(\cdot)$ denotes the $\ell$-th order modified Bessel function of the first kind. Changing the order of summation, we obtain
\begin{align}
&\sum_{m=0}^{M_p-1}\sum_{\ell=1}^{\infty}I_{\ell}(\lambda\nu)\cos\left(\ell \left(\frac{2\pi m}{M_p}+\theta\right)\right) \nonumber \\
&\hspace{3mm} = \sum_{\ell=1}^{\infty}I_{\ell}(\lambda\nu)\sum_{m=0}^{M_p-1}\cos\left(\ell \left(\frac{2\pi m}{M_p}+\theta\right)\right)\nonumber \\
&\hspace{3mm} = M_p\sum_{q=1}^{\infty} I_{q M_p}(\lambda\nu)\cos (qM_p\theta) \nonumber \\
&\hspace{3mm} \leq M_p\sum_{q=1}^{\infty} I_{q M_p}(\lambda\nu).
\end{align}
We thus have
\begin{equation}
\mathbb{E}\{e^{\lambda\rv{z}_n}\} \leq I_0(\lambda\nu)+2\sum_{q=1}^{\infty} I_{q M_p}(\lambda\nu) := \gamma_{M_p}.
\end{equation}
Next, since $I_{\alpha}(x)$ is a decreasing function of $\alpha$ \cite[Sec.~10.37]{handbook}, we see that for any $M_p\geq 4$, $I_{q M_p}(\lambda\nu)\leq I_{4q}(\lambda\nu)$, and hence
\begin{equation}
\mathbb{E}\{e^{\lambda\rv{z}_n}\} \leq \gamma_4,~\forall M_p\geq 4.
\end{equation}
But $\gamma_4$ is exactly the moment generating function in the case of \ac{qpsk}, which is also 4-\ac{qam}, and has already been shown to yield strictly sub-Gaussian $\rv{z}_n$ in our previous \ac{qam}-related discussions. Thus it now suffices to show that all $M_p$-\ac{psk} ($M_p\geq 4$) constellations have the same variance as that of \ac{qpsk}, which turns out to be true since
\begin{equation}\label{psk_var}
\mathbb{E}\{\rv{z}_n^2\} \!=\! \frac{1}{M_p}\sum_{m=0}^{M_p-1}\nu^2\cos^2\left(\frac{2\pi m}{M_p}\!+\!\theta\right) \!=\! \frac{1}{2}\nu^2=\frac{1}{2N\zeta_{\rm ACF}}.
\end{equation}
Substituting \eqref{psk_var} into \eqref{outage_prob} yields the third line of \eqref{alpha_branches}.

\subsubsection{8-\ac{qam}}
The 8-\ac{qam} constellation can be viewed as the Cartesian product of a 2-\ac{pam} constellation (corresponding to $\rv{i}_n$) and a 4-\ac{pam} constellation (corresponding to $\rv{r}_n$). According to \cite{pam_subgaussian}, all \ac{pam} constellations are strictly sub-Gaussian, and hence we have
\begin{align}
\mathbb{E}\{e^{\lambda\rv{z}_n}\} &= \mathbb{E}\{e^{\lambda a_n\rv{r}_n}\}\mathbb{E}\{e^{\lambda b_n\rv{i}_n}\} \nonumber \\
&\leq \exp\left(\frac{\lambda^2(a_n^2\mathbb{E}\{\rv{r}_n^2\}+b_n^2\mathbb{E}\{\rv{i}_n^2\})}{2}\right)
\end{align}
since $\rv{r}_n$ and $\rv{i}_n$ are mutually independent for 8-\ac{qam}. Because 8-\ac{qam} is not square, we apply an upper bound that does not depend on $a_n$ or $b_n$, as follows
\begin{align}
\mathbb{E}\{e^{\lambda\rv{z}_n}\} & \leq \exp\left(\frac{\lambda^2(a_n^2+b_n^2)\mathbb{E}\{\rv{r}_n^2\}}{2}\right) \nonumber \\
&=\exp\left(\frac{\lambda^2\mathbb{E}\{\rv{r}_n^2\}}{2}\right),
\end{align}
which yields the third line in \eqref{alpha_branches}, since
$$
\mathbb{E}\{\rv{r}_n^2\} = \frac{\frac{1}{2}(1^2+3^2)}{(1^2+3^2)N\zeta_{\rm ACF}} = \frac{1}{2N\zeta_{\rm ACF}}.
$$
This completes the proof.
\end{IEEEproof}

\section{Proof of Proposition \ref{prop:nominal_pacf}}\label{sec:proof_nominal_pacf}
\begin{IEEEproof}
Substituting \eqref{normalization} into \eqref{acf} and computing the squared modulus yields the expressions of the squared nominal \ac{pacf}, as given by \eqref{squared_pacf}. According to the definition of \ac{psl} and \ac{esl} in \eqref{define_psl}, we can obtain
\begin{equation}
\ac{psl} \!=\! \max_{i \in \Set{S}_{\rm ACF}}\left\{|[\RV{x}_{\rm{ACF}}]_i|^2\right\} \!=\! \frac{\max_{\rv{x}_{\rm{s}}\in {\Set{S}_{\rm{s}}}}|\rv{x}_{\rm{s}}|^2}{N\beta_{\rm{ACF}}\zeta_{\rm{ACF}}} \!=\! \frac{1}{N \zeta_{\rm{ACF}}},
\end{equation}
and
\begin{subequations}
\begin{align*}
[\ac{esl}]_i &= \mathbb{E}\left\{|[\RV{x}_{\rm{ACF}}]_i|^2\right\} = \frac{\mathbb{E}\left\{|\rv{x}_{\rm{s}}|^2\right\} }{N\beta_{\rm{ACF}}\zeta_{\rm{ACF}}} \\
&= \frac{1}{N \zeta_{\rm{ACF}}} \cdot \frac{\mathbb{E}\left\{|\rv{x}_{\rm{s}}|^2\right\}}{\max_{\rv{x}_{\rm{s}}\in {\Set{S}_{\rm{s}}}}|\rv{x}_{\rm{s}}|^2}. \tag{\theequation}
\end{align*}
\end{subequations}
This completes the proof.
\end{IEEEproof}

\section{Proof of Proposition \ref{prop:pacf_real}}\label{sec:proof_pacf_real}
\begin{IEEEproof}
We revisit the expression \eqref{pacf}
\begin{equation}
\RV{r}_{\RV{x}}=\frac{1}{\sqrt{N}}\M{F}_N^{\rm{H}}\left|\sqrt{N}\M{F}_N\RV{x}_{\rm{ACF}}\right|.
\end{equation}
Considering the power spectral non-negativity violation, the absolute value operation is given by
\begin{equation}
\left[\left|\sqrt{N}\M{F}_N\RV{x}_{\rm{ACF}}\right|\right]_{k}=
\begin{cases}
\left[\sqrt{N}\M{F}_N\RV{x}_{\rm{ACF}}\right]_{k}, \quad k \notin \RS{K},\\[5pt]
-\left[\sqrt{N}\M{F}_N\RV{x}_{\rm{ACF}}\right]_{k}, \quad k \in \RS{K}.
\end{cases}
\end{equation}
where $\RS{K}$ represents the set of all frequency bins $k$ that violate the power spectral non-negativity
constraint, namely $[\M{F}_N{\RV{x}}_{\rm{ACF}}]_{k}< 0$.
Therefore, the \ac{pacf} can be reformulated in the following generalized form
\begin{subequations}
\begin{align}
\RV{r}_{\RV{x}}=\frac{1}{\sqrt{N}}\M{F}_N^{\rm{H}}\left(\sqrt{N}\M{F}_N\RV{x}_{\rm{ACF}}+N\RV{e}\right)=\RV{x}_{\rm{ACF}}+\sqrt{N}\M{F}_N^{\rm{H}}\RV{e}.
\end{align}
\end{subequations}
where $\RV{e}$ denotes the error vector whose entries at frequency bins $k\in\RS{K}$ are given by $\frac{2\left|[\M{F}_N{\RV{x}}_{\rm{ACF}}]_{k}\right|}{\sqrt{N}}$ and others are zero. This completes the proof.
\end{IEEEproof}

\section{Proof of Corollary \ref{prop:violation_prob_corollary}} \label{sec:proof_violation_prob_corollary}
\begin{IEEEproof}
Note that we have
\begin{align*}
\left[\rv{e}_k>\delta\right]&\Leftrightarrow \left[[\M{F}_N\RV{x}_{\rm{ACF}}]_{k}<0\right]\land\left[\frac{2\left|[\M{F}_N\RV{x}_{\rm{ACF}}]_{k}\right|}{\sqrt{N}}>\delta\right] \\
&\Leftrightarrow \left[[\M{F}_N\RV{x}_{\rm{ACF}}]_{k}<-\frac{\delta\sqrt{N}}{2}\right].
\end{align*}
Analogous to the proof of Proposition \ref{prop:violation_prob}, we can bound the outage probability as follows
\begin{subequations}
\begin{align}
\mathbb{P}\left\{\rv{e}_k>\delta\right\}&=\mathbb{P}\left\{[\M{F}_N\RV{x}_{\rm{ACF}}]_{k}<-\frac{\delta\sqrt{N}}{2}\right\} \label{neg}\\
&=\mathbb{P}\left\{\widetilde{\V{f}}_k^{\rm T}\widetilde{\RV{x}}_{\rm{ACF}}<-\frac{1}{\sqrt{N}}-\frac{\delta\sqrt{N}}{2}\right\}\\
&=\mathbb{P}\left\{\sum_{n=2}^{\overline{N}} \rv{z}_n<-\frac{1}{2}-\frac{N\delta}{4}\right\}\\
&\leq \exp\left(-\frac{\left(\frac{1}{2}+\frac{N\delta}{4}\right)^2}{KN}\right)\label{sub}\\
& = \exp\left(-\alpha\zeta_{\rm ACF}{\left(1+\frac{N\delta}{2}\right)^2}\right)
\end{align}
\end{subequations}
where \eqref{neg} follows from the power spectral non-negativity constraint violation, namely $[\M{F}_N{\RV{x}}_{\rm{ACF}}]_{k}< 0$, while \eqref{sub}  is derived from the $K$-sub-Gaussian property of each $\rv{z}_n$ and the Chernoff inequality. This completes the proof.
\end{IEEEproof}

\section{Proof of Proposition \ref{prop:psl_prop}}\label{sec:proof_psl_prop}
\begin{IEEEproof}
Since $g(\rv{e}_k)$ is an increasing function of $\rv{e}_k$, we can obtain the following inequality
\begin{equation}
\ac{pslr}\leq g(\rv{e}_k)\leq g(\gamma),
\end{equation}
under the condition $\rv{e}_k\leq\gamma$. Under the assumption $|\RS{K}|\leq 1$, we first reformulate this line of reasoning in the logical symbolic representation \cite{rosen1999discrete}, given by
\begin{equation}
\forall k \left[k \in \RS{K} \rightarrow \rv{e}_k\leq\gamma \right]\rightarrow \left\{\ac{pslr}\leq g(\gamma) \right\}.
\end{equation}
From the set-theoretic perspective, the logical implication can be interpreted as the set inclusion relation, implying that
\begin{equation}
\left\{\forall k \left[k \in \RS{K} \rightarrow \rv{e}_k\leq\gamma \right]\right\} \subset \left\{\ac{pslr}\leq g(\gamma) \right\}.
\end{equation}
Based on the monotonicity of probability measures, we have
\begin{equation}\label{monotonicity}
\mathbb{P}\left\{\forall k \left[k \in \RS{K} \rightarrow \rv{e}_k\leq\gamma \right]\right\} \leq \mathbb{P}\left\{\ac{pslr}\leq g(\gamma) \right\}.
\end{equation}
Next, we aim to derive a lower bound for the left-hand side of the inequality. Note that
\begin{subequations}
\begin{align}
\forall k \left[k \in \RS{K} \rightarrow \rv{e}_k\leq\gamma \right] &\Leftrightarrow \forall k \left[\lnot [k \in \RS{K}] \lor \rv{e}_k\leq\gamma \right]  \label{implication} \\
&\Leftrightarrow \lnot \exists k \lnot \left[\lnot [k \in \RS{K}] \lor \rv{e}_k\leq\gamma \right] \label{quantifier} \\
&\Leftrightarrow \lnot \exists k \left[k \in \RS{K} \land \rv{e}_k>\gamma \right].
\label{De_Morgan}
\end{align}
\end{subequations}
where \eqref{implication} follows from the definition of implication in terms of negation and disjunction, \eqref{quantifier} follows from the quantifier negation rule and \eqref{De_Morgan} follows from De Morgan’s laws. Therefore, we obtain
\begin{equation}\label{lnot}
\mathbb{P}\left\{\forall k \left[k \in \RS{K} \rightarrow \rv{e}_k\leq\gamma \right]\right\} = 1-\mathbb{P}\left\{\exists k \left[k \in \RS{K} \land \rv{e}_k>\gamma \right]\right\}.
\end{equation}
Since the logical conjunction can be viewed as the set intersection operation, the following inequality is obtained by the monotonicity of probability measures
\begin{equation} \label{intersection}
\mathbb{P}\left\{\exists k \left[k \in \RS{K} \land \rv{e}_k>\gamma \right]\right\}\leq\mathbb{P}\left\{\exists k \left[\rv{e}_k>\gamma \right]\right\}.
\end{equation}
Note that the existential quantifier can be expressed as a logical disjunction form over a finite domain of discourse, and hence we have
\begin{equation}
\exists k \left[\rv{e}_k>\gamma \right] \Leftrightarrow [\RV{e}_1>\gamma]\lor[\RV{e}_2>\gamma]\lor\cdots\lor[\RV{e}_N>\gamma]
\end{equation}
Considering that the logical disjunction can be regarded as the set union operation, we further obtain
\begin{subequations}
\begin{align}
\mathbb{P}\left\{\exists k \left[\rv{e}_k\!>\!\gamma \right]\right\} &\!=\! \mathbb{P}\left\{\bigcup_{k=1}^{N}\{\rv{e}_k\!>\!\gamma\}\right\} \!\leq\! \sum_{k=1}^N\mathbb{P}\left\{\rv{e}_k\!>\!\gamma\right\} \label{union} \\
&\!\leq\! N e^{-\alpha\zeta_{\rm ACF}{\left(1+\frac{N\gamma}{2}\right)^2}}. \label{result}
\end{align}
\end{subequations}
where \eqref{union} is derived from the union bound, while \eqref{result} follows from Corollary \ref{prop:violation_prob_corollary}. Finally, combining \eqref{monotonicity}, \eqref{lnot}, \eqref{intersection} and \eqref{result} yields
\begin{subequations}
\begin{align*}
\mathbb{P}\left\{\ac{pslr}\leq g(\gamma) \right\} &\geq \mathbb{P}\left\{\forall k \left[k \in \RS{K} \rightarrow \rv{e}_k\leq\gamma \right]\right\} \\
&\geq 1-N e^{-\alpha\zeta_{\rm ACF}{\left(1+\frac{N\gamma}{2}\right)^2}}. \label{psl} \tag{\theequation}
\end{align*}
\end{subequations}
Now, relaxing the assumption that $|\RS{K}|\leq 1$, we obtain the looser bound
\begin{equation}
\mathbb{P}\left\{\ac{pslr}\leq g(\gamma) \right\} \geq 1-\mathbb{P}\{|\RS{K}|\geq 2\}-N e^{-\alpha\zeta_{\rm ACF}{\left(1+\frac{N\gamma}{2}\right)^2}}.
\end{equation}
This completes the proof.
\end{IEEEproof}

\section{Proof of Proposition \ref{prop:afkpsl}}\label{sec:proof_afkpsl}
\begin{IEEEproof}
According to the definition of \ac{psl} and \ac{esl}, we can obtain
\begin{equation}
\ac{psl} = \max_{(k,l) \in \Set{S}_{\rm AF}}\left\{|[\RM{X}_{\rm{AF}}]_{k,l}|^2\right\} = \frac{1}{MN \zeta_{\rm{AF}}},
\end{equation}
and
\begin{equation}
[\ac{esl}]_{k,l} \!=\! \mathbb{E}\left\{|[\RM{X}_{\rm{AF}}]_{k,l}|^2\right\} \!=\! \frac{1}{MN \zeta_{\rm{AF}}} \cdot \frac{\mathbb{E}\left\{|\rv{x}_{\rm{s}}|^2\right\}}{\max_{\rv{x}_{\rm{s}}\in {\Set{S}_{\rm{s}}}}|\rv{x}_{\rm{s}}|^2},
\end{equation}
where $\Set{S}_{\rm AF}$ denotes the set of all the sidelobe bins of \ac{fstaf}. Using the \ac{2dfft}, the \ac{fstaf} is represented as
\begin{subequations}
\begin{align}
\RM{R}_{\RM{X}} &= \frac{1}{\sqrt{MN}}\M{F}_N^{\rm{H}}|\RM{X}|^2\M{F}_M,\\
& = \frac{1}{\sqrt{MN}}\M{F}_N^{\rm{H}}\left|\RM{X}_{\rm{p}}\odot\sqrt{\sqrt{MN}\M{F}_N\RM{X}_{\rm{AF}}\M{F}_M^{\rm{H}}}\right|^2\M{F}_M\\
& = \frac{1}{\sqrt{MN}}\M{F}_N^{\rm{H}}\left(|\RM{X}_{\rm{p}}|^2\odot\left|\sqrt{MN}\M{F}_N\RM{X}_{\rm{AF}}\M{F}_M^{\rm{H}}\right|\right)\M{F}_M
\end{align}
\end{subequations}
Using the constant modulus of \ac{psk} constellations, we have
\begin{equation}
\RM{R}_{\RM{X}} = \frac{1}{\sqrt{MN}}\M{F}_N^{\rm{H}}\left|\sqrt{MN}\M{F}_N\RM{X}_{\rm{AF}}\M{F}_M^{\rm{H}}\right|\M{F}_M.
\end{equation}
Under the assumption that the spectral non-negativity constraint is satisfied, we further obtain
\begin{equation}
\RM{R}_{\RM{X}} = \frac{1}{\sqrt{MN}}\M{F}_N^{\rm{H}}\left(\sqrt{MN}\M{F}_N\RM{X}_{\rm{AF}}\M{F}_M^{\rm{H}}\right)\M{F}_M=\RM{X}_{\rm{AF}}.
\end{equation}
This completes the proof.
\end{IEEEproof}

\bibliographystyle{IEEEtran}
\bibliography{acfk}



\end{document}

%% file: SupportDocuments/YFPreamble.tex
\usepackage{amssymb}
\usepackage{overpic}
\usepackage{algorithm}
\usepackage{algorithmic}
\usepackage{array}
\usepackage{graphicx}
\usepackage{color}
\usepackage{epstopdf}
\usepackage{acronym}
\usepackage{enumerate}

\definecolor{BLUE}{rgb}{0,0,1}

\acrodef{aoa}[AOA]{angle-of-arrival}
\acrodef{acf}[ACF]{autocorrelation function}
\acrodef{bcrb}[BCRB]{Bayesian Cram\'{e}r-Rao bound}
\acrodef{bp}[BP]{belief propagation}
\acrodef{cdi}[CDI]{cooperative dilution intensity}
\acrodef{cl}[CL]{cooperative localization}
\acrodef{cdf}[CDF]{cumulative distribution function}
\acrodef{crb}[CRB]{Cram\'{e}r-Rao bound}
\acrodef{crlb}[CRLB]{Cram\'{e}r-Rao lower bound}
\acrodef{dof}[DoF]{degree of freedom}
\acrodef{dct}[DCT]{discrete cosine transform}
\acrodef{dpeb}[DPEB]{directional position error bound}
\acrodef{fim}[FIM]{Fisher information matrix}
\acrodef{efim}[EFIM]{equivalent Fisher information matrix}
\acrodef{icrb}[ICRB]{inverse CRB}
\acrodef{iid}[i.i.d.]{independently and identically distributed}
\acrodef{isac}[ISAC]{integrated sensing and communication}
\acrodef{mse}[MSE]{mean-squared error}
\acrodef{pdf}[PDF]{probability density function}
\acrodef{peb}[PEB]{position error bound}
\acrodef{speb}[SPEB]{squared position error bound}
\acrodef{pll}[PLL]{phase-locked loop}
\acrodef{rbs}[RBS]{reference broadcast synchronization}
\acrodef{rhs}[RHS]{right hand side}
\acrodef{rii}[RII]{ranging information intensity}
\acrodef{rss}[RSS]{received signal strength}
\acrodef{rc}[RC]{ranging coefficient}
\acrodef{toa}[TOA]{time-of-arrival}
\acrodef{tdoa}[TDOA]{time-difference-of-arrival}
\acrodef{tpsn}[TPSN]{time synchronization protocol for sensor network}
\acrodef{vmp}[VMP]{variational message passing}
\acrodef{wsn}[WSN]{wireless sensor network}
\acrodef{dio}[DIO]{distance-information-only}
\acrodef{aio}[AIO]{angle-information-only}
\acrodef{saaf}[SAAF]{squared array aperture function}
\acrodef{snc}[S\&C]{sensing and communications}
\acrodef{uoa}[UOA]{uniformly oriented array}
\acrodef{rgg}[RGG]{random geometric graph}
\acrodef{rms}[RMS]{root-mean-square}
\acrodef{snr}[SNR]{signal-to-noise ratio}
\acrodef{eoc}[EoC]{efficiency of cooperation}
\acrodef{npi}[NPI]{nominal position information}
\acrodef{gnss}[GNSS]{global navigation satellite system}
\acrodef{mimo}[MIMO]{multiple-input multiple-output}
\acrodef{mcs}[MCS]{minimally constrained system}
\acrodef{zzb}[ZZB]{Ziv-Zakai bound}
\acrodef{wwb}[WWB]{Weiss-Weinstein lower bound}
\acrodef{nlos}[NLOS]{non-light-of-sight}
\acrodef{mmse}[MMSE]{minimum mean squared error}
\acrodef{uav}[UAV]{unmanned aerial vehicle}
\acrodef{ppp}[PPP]{Poisson point process}
\acrodef{bpp}[BPP]{binomial point process}
\acrodef{cln}[CLN]{cooperative location-aware network}
\acrodef{pdr}[PDR]{pedestrian dead reckoning}
\acrodef{ml}[ML]{maximum likelihood}
\acrodef{map}[MAP]{maximum \textit{a posteriori}}
\acrodef{ofdm}[OFDM]{orthogonal frequency division multiplexing}
\acrodef{otfs}[OTFS]{orthogonal time frequency space}
\acrodef{sc}[SC]{single carrier}
\acrodef{cp}[CP]{cyclic prefix}
\acrodef{pacs}[P-ACS]{periodic autocorrelation sequence}
\acrodef{eisl}[EISL]{expected integrated sidelobe level}
\acrodef{dft}[DFT]{discrete Fourier transform}
\acrodef{6g}[6G]{sixth-generation}
\acrodef{drt}[DRT]{deterministic-random tradeoff}
\acrodef{psk}[PSK]{phase shift keying}
\acrodef{qpsk}[QPSK]{quadrature phase shift keying}
\acrodef{bpsk}[BPSK]{binary phase shift keying}
\acrodef{ask}[ASK]{amplitude shift keying}
\acrodef{apsk}[APSK]{amplitude phase shift keying}
\acrodef{qam}[QAM]{quadrature amplitude modulation}
\acrodef{pam}[PAM]{pulse amplitude modulation}
\acrodef{ba}[BA]{Blahut-Arimoto}

\acrodef{acfk}[ACFK]{auto-correlation function keying}
\acrodef{afk}[AFK]{ambiguity function keying}
\acrodef{af}[AF]{ambiguity function}
\acrodef{acf}[ACF]{auto-correlation function}
\acrodef{pacf}[P-ACF]{periodic auto-correlation function}
\acrodef{ber}[BER]{bit error rate}
\acrodef{ser}[SER]{symbol error rate}
\acrodef{dpaf}[DP-AF]{discrete periodic ambiguity function}
\acrodef{fstaf}[FST-AF]{fast-slow-time ambiguity function}
\acrodef{fst}[FST]{fast-slow-time}
\acrodef{ici}[ICI]{inter-carrier interference}
\acrodef{esl}[ESL]{expected sidelobe level}
\acrodef{psl}[PSL]{peak sidelobe level}
\acrodef{pslr}[PSLR]{peak sidelobe level ratio}
\acrodef{isi}[ISI]{inter-symbol interference}
\acrodef{zf}[ZF]{zero forcing}
\acrodef{llr}[LLR]{log-likelihood ratio}
\acrodef{awgn}[AWGN]{additive white Gaussian noise}
\acrodef{sir}[SIR]{signal-to-interference ratio}
\acrodef{cfar}[CFAR]{constant false-alarm rate}
\acrodef{cacfar}[CA-CFAR]{cell-averaging constant false-alarm rate}
\acrodef{tx}[Tx]{transmitter}
\acrodef{rx}[Rx]{receiver}
\acrodef{lti}[LTI]{linear time-invariant}
\acrodef{mf}[MF]{matched-filtering}
\acrodef{mgf}[MGF]{moment generating function}
\acrodef{pcs}[PCS]{probabilistic constellation shaping}
\acrodef{pas}[PAS]{probabilistic amplitude shaping}
\acrodef{ldpc}[LDPC]{low-density parity-check}
\acrodef{ccdm}[CCDM]{constant composition distribution matching}
\acrodef{roc}[ROC]{receiver operating characteristic}
\acrodef{cdf}[CDF]{cumulative distribution function}
\acrodef{csirs}[CSI-RS]{channel state information reference signals}
\acrodef{dmrs}[DMRS]{demodulation reference signals}
\acrodef{zcz}[ZCZ]{zero-correlation zone}
\acrodef{lcz}[LCZ]{low-correlation zone}
\acrodef{2dfft}[2D-FFT]{two-dimensional fast Fourier transform}

\acrodefplural{dof}[DoFs]{degrees of freedom}
